\documentclass[Journal,twocolumn]{IEEEtran}
\usepackage[T1]{fontenc}
\usepackage[latin9]{inputenc}
\usepackage{amsthm}
\usepackage{amsmath}
\usepackage{amssymb}
\usepackage{graphicx}

\makeatletter
 \theoremstyle{plain}
\newtheorem{thm}{Theorem}
  \theoremstyle{remark}
  \newtheorem{rem}{Remark}
   \theoremstyle{plain}

  \theoremstyle{definition}
  \newtheorem{defn}{Definition}
  \theoremstyle{definition}
  \newtheorem{example}{Example}
 \theoremstyle{plain}
  \newtheorem{lem}{Lemma}
 \theoremstyle{plain}
  \newtheorem{cor}{Corollary}

\begin{document}

\title{Combinatorial Message Sharing and a New Achievable Region for Multiple
Descriptions}


\author{Kumar Viswanatha$^{\ast}$, Emrah Akyol$^{\dagger}$, \IEEEmembership{Member,~IEEE}, and \\ Kenneth Rose$^{\ddagger}$, \IEEEmembership{Fellow,~IEEE,}
\thanks{The work was supported by the NSF under grants CCF-1016861, CCF-1118075 and CCF-1320599. At the time of this work, all authors were with the Department of Electrical and Computer Engineering, University of California - Santa Barbara, CA. The material in this paper was presented in part at the IEEE international symposium on information theory at Saint Petersburg, Russia, 2011 \cite{our_ISIT}, IEEE information theory workshop at Paraty, Brazil, 2012 \cite{our_ITW} and the IEEE international symposium on information theory at Istanbul, Turkey, 2013 \cite{our_isit_13}.} 
\thanks{$^{\ast}$K. Viswanatha is currently with the Qualcomm research center, San Diego, CA, USA. (e-mail: kumar@ece.ucsb.edu)}
\thanks{$^{\dagger}$E. Akyol is with the Department of Electrical and Computer Engineering, University of Illinois Urbana-Champaign, CA, USA (e-mail: akyol@illinois.edu).}
\thanks{$^{\ddagger}$K. Rose is with the Electrical and Computer Engineering Department, University of California - Santa Barbara, CA, USA. (e-mail: rose@ece.ucsb.edu)}}

\markboth{IEEE Transactions on Information Theory,~Vol.~XX, No.~XX, XXX~XXXX}%
{}


\maketitle
\begin{abstract}
This paper presents a new achievable rate-distortion region for the
general L channel multiple descriptions problem. A well known general
region for this problem is due to Venkataramani, Kramer and Goyal
(VKG) \cite{VKG}. Their encoding scheme is an extension of the El-Gamal-Cover
(EC) and Zhang-Berger (ZB) coding schemes to the L channel case and
includes a combinatorial number of refinement codebooks, one for each
subset of the descriptions. As in ZB, the scheme also allows for a
single common codeword to be shared by all descriptions. This paper
proposes a novel encoding technique involving `Combinatorial Message
Sharing' (CMS), where every subset of the descriptions may share a
distinct common message. This introduces a combinatorial number of
shared codebooks along with the refinement codebooks of \cite{VKG}.
These shared codebooks provide a more flexible framework to trade-off
redundancy across the messages for resilience to descriptions loss.
We derive an achievable rate-distortion region for the proposed technique,
and show that it subsumes the VKG region for general sources and distortion
measures. We further show that CMS provides a strict improvement of
the achievable region for any source and distortion measures for which
some 2-description subset is such that ZB achieves points outside
the EC region. We then show a more surprising result: CMS outperforms
VKG for a general class of sources and distortion measures, including
scenarios where the ZB and EC regions coincide for all 2-description
subsets. In particular, we show that CMS strictly improves on VKG,
for the $L-$channel quadratic Gaussian MD problem, for all $L\geq3$,
despite the fact that the EC region is complete for the corresponding
2-descriptions problem. Consequently, the `correlated quantization'
scheme (an extreme special case of VKG), that has been proven to be
optimal for several cross-sections of the $L-$channel quadratic Gaussian
MD problem, is strictly suboptimal in general. Using the encoding
principles derived, we show that the CMS scheme achieves the complete
rate-distortion region for several asymmetric cross-sections of the
$L-$channel quadratic Gaussian MD problem. \end{abstract}
\begin{IEEEkeywords}
Multiple descriptions coding, Source coding, Rate-distortion theory,
Combinatorial message sharing
\end{IEEEkeywords}

\section{Introduction\label{sec:Introduction}}

The Multiple Descriptions (MD) problem was proposed in the late seventies
and has been studied extensively since, yielding a series of advances,
ranging from the derivation of asymptotic bounds \cite{VKG,EGC,ZB,Ahlswede,Ozarow,wang,Ramchandran,Ramchandran-1,wang_Vishwanath,Jun_Chen_ind_central}
to practical approaches for multiple descriptions quantizer design
\cite{Vaishampayan,MC_DA}. It was originally viewed as a method to
cope with channel failures, where multiple source descriptions are
generated and sent over different paths. The encoder generates $L$
descriptions for transmission over $L$ available channels. It is
assumed that the decoder receives a subset of the descriptions perfectly
and the remaining are lost, as shown in Fig. \ref{fig:Basic_MD}.
The objective of the MD problem is to design the encoders (for each
description) and decoders (for each possible received subset of the
descriptions), with respect to an overall rate-distortion (RD) trade-off.
The subtlety of the problem is due to the balance between the full
reconstruction quality versus quality of individual descriptions;
or noise free quality versus the amount of redundancy across descriptions
needed to achieve resilience to descriptions loss.

\begin{figure}
\centering\includegraphics[scale=0.55]{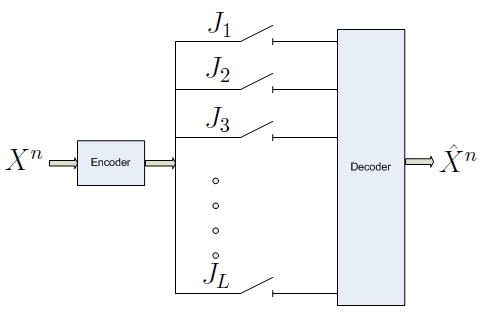}\caption{$L-$Channel Multiple Descriptions Setup : Each description is received
error free or is completely lost at the decoder\label{fig:Basic_MD}}
\end{figure}

One of the first achievable regions for the 2-channel MD problem was
derived by El-Gamal and Cover (EC) in 1982 \cite{EGC}. It follows
from Ozarow's results in \cite{Ozarow} that the EC region is complete
for the 2-channel quadratic Gaussian MD problem, i.e., when the source
is Gaussian and the distortion measure is mean squared error (MSE).
It was further shown by Ahlswede in \cite{Ahlswede} that the EC region
is complete for a cross-section of the 2-description MD setup, called
the `no-excess rate regime', a scenario wherein the central decoder
receives information at the minimum sum rate. This led to the popular
belief that the EC achievable region is complete for the general 2-channel
MD problem. However Zhang and Berger (ZB) in \cite{ZB} proved a then
surprising result, that the EC scheme is strictly sub-optimal in general.
In particular, they showed that for a binary source under Hamming
distortion, sending a common codeword in both descriptions can achieve
points that are strictly outside the EC region. While introducing
a common codeword implies explicit redundancy among the two descriptions,
this codeword assists in better coordination between the descriptions
leading to a strictly larger RD region.

Several researchers have since focused on extending EC and ZB to the
$L-$channel MD problem \cite{VKG,Ramchandran,Ramchandran-1,wang_Vishwanath,Jun_Chen_ind_central,Viswanatha_2_levels,Asymmetric_MD}.
An achievable scheme, due to Venkataramani, Kramer and Goyal (VGK)
\cite{VKG}, directly builds on EC and ZB, and introduces a combinatorial
number of refinement codebooks, one for each subset of the descriptions.
Motivated by ZB, a \textit{single} common codeword is also shared
by all the descriptions, which assists in better coordination of the
messages, improving the RD trade-off. For the $L-$channel quadratic
Gaussian problem, it was shown by Wang and Viswanath in \cite{wang_Vishwanath}
that a special case of the VKG coding scheme, where no common codeword
is sent, achieves the minimum sum rate when only the individual and
the central distortion constraints are imposed. In particular, they
showed that a `correlated quantization' based encoding scheme, which
is an extension of the Ozarow's encoding mechanism to the $L$-descriptions
problem, achieves the minimum sum rate for the cross-section involving
constraints only on the individual and central distortions. It was
also shown recently by Chen in \cite{Jun_Chen_ind_central} that,
in fact, this approach leads to the complete region for this particular
cross-section.

Pradhan, Puri and Ramachandran (PPR) considered a practically interesting
cross-section of the general $L-$channel MD problem in \cite{Ramchandran,Ramchandran-1}
called the `symmetric MD problem' wherein it is assumed that the rates
of all the descriptions are equal and the distortion is a function
only of the `number' of descriptions received rather than which particular
subset is received. They proposed a new coding scheme leveraging principles
from distributed source coding \cite{SW,WZ76}, and particularly Slepian
and Wolf's random binning techniques, and showed that, for this symmetric
cross-section, the proposed encoding scheme improves upon the VKG
region. Tian and Chen derived a new coding scheme in \cite{Tian_Chen_symmetric_K_descriptions}
for the symmetric MD problem which further extends the PPR region.
It was also shown in \cite{Approximating} that this region is very
close to complete for the symmetric quadratic Gaussian MD problem.
Recently, Wang and Viswanath \cite{Viswanatha_2_levels} derived a
coding scheme based on the VKG and the PPR encoding principles and
showed its sum-rate optimality for certain cross-sections of the quadratic
Gaussian problem wherein only 2 layers of distortions are imposed. Moreover, in \cite{Song_Shuo_Chen} , Song et.al. showed that, for the quadratic Gaussian setting, when the  minimum  sum  rate  is attained subject to  two  levels  of  distortion  constraints  (with  the  second  level imposed  on  the  complete  set  of  descriptions), the PPR scheme also leads to the minimum achievable distortion at the intermediate levels.

In this paper we present a new encoding scheme involving ``Combinatorial
Message Sharing'' (CMS), where a unique common codeword is sent in
(shared by) each subset of the descriptions, thereby introducing a
combinatorial number of \textit{shared codebooks}, along with the
refinement codebooks of \cite{VKG}. The common codewords enable better
coordination between descriptions, providing an improved overall RD
region. We derive an achievable region for CMS and show that it subsumes
VKG for general sources and distortion measures. Moreover, we show
that CMS achieves a strictly larger region than VKG for all $L>2$,
if there exists a 2-description subset for which ZB achieves points
strictly outside the EC region. In particular, CMS achieves strict
improvement for a binary source under Hamming distortion. We then
show a surprising result: CMS strictly outperforms VKG for a general
class of sources and distortion measures, which includes several scenarios
in which, for every 2-description subset, the ZB and the EC regions
coincide. In particular, we show that for a Gaussian source under
MSE, CMS achieves points strictly outside the VKG region. This result
is in striking contract to the corresponding 2-descriptions setting.
Optimality of EC for the 2-descriptions setup has led to a natural
conjecture that common codewords do not play a necessary role in quadratic
Gaussian MD coding, and all the achievable regions characterized so
far neglect the common layer (see, e.g., \cite{VKG,Jun_Chen_ind_central,Viswanatha_2_levels}).
In this paper, we show that the common codewords in CMS play a critical
role in achieving the complete RD region for several asymmetric cross-sections
of the general $L-$channel quadratic Gaussian MD problem. We note
that the principles underlying CMS have been shown in precursor work
to be useful in related applicational contexts of routing for networks
with correlated sources \cite{DIR} and data storage for selective
retrieval \cite{fusion}. We further note that a preliminary version
of the results in this paper appeared in \cite{our_ISIT}, \cite{our_ITW} and \cite{our_isit_13}.
We also note that, in this paper, we focus on generalizing the VKG
coding scheme using a combinatorial number of shared messages for
the MD problem with general sources and distortion measures. However,
the CMS principle can be extended to incorporate random binning based
encoding techniques, to utilize the underlying symmetry in the problem
setup. Preliminary results in this direction have recently appeared
in \cite{Binned_CMS,Binned_CMS_ITW} and form part of our current
research focus. The CMS scheme introduces
a structured mechanism for generating random codebooks that lead to
controlled redundancy across transmitted messages. The CMS scheme
is only one of the several possible generalizations of the ZB scheme
to $L$-descriptions. The fundamental idea of introducing structured
codes and controlled redundancy for $L-$channel multiple descriptions
problem is not new and has been explored in several publications in
the past, including \cite{Ramchandran,Ramchandran-1,Tian_Chen_symmetric_K_descriptions,Shirani_Pradhan_14}
and some of these schemes have been shown to outperform VKG for certain
sources and distortion measures. However, to the best of our knowledge,
for a Gaussian source, under MSE, CMS is the only generalization of the VKG scheme that
achieves a strictly larger rate-distortion region. There are other encoding schemes that can achieve points strictly outside the VKG region, such as the PPR scheme, but are not strict generalizations VKG  in the sense that their achievable region has not been shown to always subsume the entire VKG rate-distortion region. Hence there are fundamentally new insights to be gained by studying CMS. 

The rest of the paper is organized as follows. In Section \ref{sec:Prior_results},
we formally state the $L-$channel MD setup and briefly describe the
approaches and regions of EC \cite{EGC}, ZB \cite{ZB} and VKG \cite{VKG}.
To keep the notation simple, we first describe in Section \ref{sub:3-Descriptions-scenario},
the CMS scheme for the 3 descriptions scenario and extend it to the
general case in Section \ref{sub:L-Channel-case}. We then prove in
section \ref{sec:Strict-Improvement1} that CMS achieves strict improvement
over VKG whenever there exists a 2-description subset for which ZB
achieves points outside the EC region. In Section \ref{sec:Strict-Improvement2}
we prove that CMS outperforms VKG for a fairly general class of source-distortion
measure pairs, including a Gaussian source under MSE. Finally in Section
\ref{sec:Gaussian-MSE-Setting}, we derive new results for the $L-$channel
quadratic Gaussian MD setup and show that CMS achieves the complete
RD region for several asymmetric cross-sections of the general problem.

\section{Formal Definitions and Prior results\label{sec:Prior_results}}

We use uppercase letters to denote random variables (e.g., $X$) and
lowercase letters to denote their realizations (e.g., $x$). We use
script letters to denote sets, alphabets and subscript indices (e.g.,
$\mathcal{S},\mathcal{X},\mathcal{K}$). A sequence of $n$ independent
and identically distributed (iid) random variables is denoted by $X^{n}$
and its realization by $x^{n}$. $2^{\mathcal{S}}$ denotes the set
of all subsets (power set) of any set $\mathcal{S}$ and $|\mathcal{S}|$
denotes the set cardinality. Note that $|2^{\mathcal{S}}|=2^{|\mathcal{S}|}$.
$\mathcal{S}^{c}$ denotes the set complement (the universal set will
be explicitly specified when not obvious). For two sets $\mathcal{S}_{1}$
and $\mathcal{S}_{2}$, we denote the set difference by $\mathcal{S}_{1}-\mathcal{S}_{2}=\{k:k\in\mathcal{S}_{1},k\notin\mathcal{S}_{2}\}$.
All constant random variables, that take a single deterministic value, are denoted by $\Phi$. 

Throughout the paper, random variables, distortions and rates are
indexed using sets, e.g., $U_{\mathcal{K}},D_{\mathcal{K}},R_{\mathcal{K}}$,
where $\mathcal{K}$ takes values such as $\{1\}$, $\{2\}$, $\{1,2\}$,
$\{2,3\}$ etc. These subscript indices are always subsets of the
set $\{1,2,\ldots,L\}$, where $L$ denotes the number of descriptions,
and satisfy all the standard properties of sets. For ease of notation,
we often drop the braces and the comma whenever it is obvious. For
example, $U_{\{1\}}$, $U_{\{2\}}$, $U_{\{1,2\}}$ and $U_{\{1,2,3\}}$
are abbreviated as $U_{1},U_{2},U_{12}$ and $U_{123}$, respectively.
Next, let $\mathcal{S}$ be a set of subscript indices. Then, we use
the shorthand $\{U\}{}_{\mathcal{S}}$ to denote the set of variables
$\{U_{\mathcal{K}}:\mathcal{K}\in\mathcal{S}\}$ (e.g., if $\mathcal{S}=\left\{ \{1\},\{2\},\{1,2\}\right\} $,
then $\{U\}_{\mathcal{S}}=\{U_{\{1\}},U_{\{2\}},U_{\{1,2\}}\}=\{U_{1},U_{2},U_{12}\})$.
Note the difference between $\{U\}_{\mathcal{S}}$ and $U_{\mathcal{K}}$.
$\{U\}_{\mathcal{S}}$ is a set of variables, whereas $U_{\mathcal{K}}$
is a single variable. We use the notation in \cite{Cover-book} to
denote all standard information theoretic quantities. With a slight
abuse of notation, we use $H(\cdot)$ to denote the entropy of a discrete
random variable or the differential entropy of a continuous random
variable. 

We first give a formal definition of the $L-$channel MD problem.
A source produces an iid sequence $X^{n}=\left(X^{(1)},X^{(2)}\ldots,X^{(n)}\right)$,
taking values over a finite alphabet $\mathcal{X}$. We denote $\mathcal{L}=\{1,\ldots,L\}$.
There are $L$ encoding functions, $f_{l}(\cdot),\,\, l\in\mathcal{L}$,
which map $X^{n}$ to the descriptions $J_{l}=f_{l}(X^{n})$, where
$J_{l}$ takes values in the set $\{1,\ldots, B_{l}\}$. The rate of
description $l$ is defined as:
\begin{equation}
R_{l}=\frac{1}{n}\log_{2}(B_{l})\label{eq:defn_rate}
\end{equation}
Description $l$ is sent over channel $l$ and is either received
at the decoder error-free or is completely lost. There are $2^{L}-1$
decoding functions for each possible received combination of the descriptions
$\hat{X}_{\mathcal{K}}^{n}=\left(\hat{X}_{\mathcal{K}}^{(1)},\hat{X}_{\mathcal{K}}^{(2)}\ldots,\hat{X}_{\mathcal{K}}^{(n)}\right)=g_{\mathcal{K}}(J_{l}:l\in\mathcal{K})$,
$\forall\mathcal{K}\subseteq\mathcal{L},\mathcal{K}\neq\emptyset$, where
$\hat{X}_{\mathcal{K}}$ takes values in a finite set $\hat{\mathcal{X}}_{\mathcal{K}}$,
and $\emptyset$ denotes the null set. The distortion at the decoder when
a subset $\mathcal{K}$ of the descriptions is received is:
\begin{equation}
D_{\mathcal{K}}=E\left[\frac{1}{n}\sum_{t=1}^{n}d_{\mathcal{K}}(X^{(t)},\hat{X}_{\mathcal{K}}^{(t)})\right]\label{eq:dist_main_constraint}
\end{equation}
where $d_{\mathcal{K}}:\mathcal{X}\times\hat{\mathcal{X}}_{\mathcal{K}}\rightarrow\mathcal{R}$,
is a well-defined bounded single letter distortion measure. We say
that the RD tuple $(R_{i},D_{\mathcal{K}}:i\in\mathcal{L},\mathcal{K}\subseteq\mathcal{L},\mathcal{K}\neq\emptyset)$
is achievable if there exist $L$ encoding functions with rates $(R_{1},\ldots,R_{L})$
and $2^{L}-1$ decoding functions yielding distortions $D_{\mathcal{K}}$.
The closure of the set of all achievable RD tuples is defined as the
`\textit{$L$-channel multiple descriptions RD region}'%
\footnote{Note that this region has $L+2^{L}-1$ dimensions.%
}, denoted hereafter by $\mathcal{RD}^{L}$.

\subsection{2 - Channel MD}

\begin{figure}
\centering\includegraphics[scale=0.65]{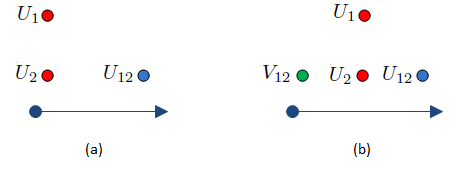}\caption{Codebook generation for EC (a) and ZB (b) coding schemes. The arrow
mark indicates the order in which the codebooks are generated. \label{fig:EGC_ZB}}
\end{figure}

\subsubsection{El-Gamal-Cover region}

The first achievable region for the 2-channel MD problem was given
by El-Gamal and Cover in 1982 \cite{EGC}. Their region is denoted
here by $\mathcal{RD}_{EC}$. It is given by the convex closure over
all tuples $(R_{1},R_{2},D_{1},D_{2},D_{12})$ for which there exist
auxiliary random variables $U_{1}$, $U_{2}$ and $U_{12}$ (defined
over arbitrary finite alphabets $\mathcal{U}_{1}$, $\mathcal{U}_{2}$
and $\mathcal{U}_{12}$, respectively), jointly distributed with $X$
and functions $\psi_{\mathcal{K}},\mathcal{K}\subseteq\{1,2\},\mathcal{K}\neq\emptyset$
such that,
\begin{eqnarray}
R_{1} & \geq & I(X;U_{1})\nonumber \\
R_{2} & \geq & I(X;U_{2})\nonumber \\
R_{1}+R_{2} & \geq & I(X;U_{1},U_{2},U_{12})+I(U_{1};U_{2})\nonumber \\
D_{\mathcal{K}} & \geq & E\left[d_{\mathcal{K}}(X,\psi_{\mathcal{K}}(U_{\mathcal{K}}))\right],\,\,\mathcal{K}\subseteq\{1,2\},\mathcal{K}\neq\emptyset\label{eq:EGC}
\end{eqnarray}
Note that the original EC description assumes $\psi_{\mathcal{K}}$
to be identity functions (i.e., $\psi_{\mathcal{K}}(U_{k})=U_{\mathcal{K}}$).
However, the two characterizations are equivalent and we use the one
described above as it is easier to relate to the $L$-channel achievable
regions. We also note that the function $\psi_{12}(\cdot)$ can, in
general, be made to depend on $U_{1},U_{2}$ and $U_{12}$. However,
 the overall RD region remains unchanged. To see this, let us say, for some joint distribution, $\psi_{12}(\cdot)$ depends on $U_{1},U_{2}$ and $U_{12}$. We can always construct a new joint distribution with the random variable $U_{12}$ set equal to $\psi_{12}(U_{1},U_{2},U_{12})$. This new joint distribution satisfies all the rate and distortion conditions in (\ref{eq:EGC}) with the new $\psi_{12}(\cdot)$ being an identity function. 

The order of codebook generation is shown in Fig. \ref{fig:EGC_ZB}(a).
As stated in the introduction, the EC region can be shown to be complete
for the 2-descriptions quadratic Gaussian MD problem \cite{Ozarow}
and for the `no excess rate' regime for general sources and distortion
measures \cite{Ahlswede}. The achievability of the above region is described as follows. $2^{nR_{1}}$
codewords of $U_{1}$ and $2^{nR_{2}}$ codewords of $U_{2}$ are
generated, each of length $n$, using the marginal distributions of
$U_{1}$ and $U_{2}$, respectively. For each pair $(u_{1}^{n},u_{2}^{n})$
of codewords, a codeword for $U_{12}$ is generated according to the
conditional density $\prod_{t=1}^{n}P_{U_{12}|U_{1},U_{2}}(u_{12}^{(t)}|u_{1}^{(t)},u_{2}^{(t)})$.
Given a sequence $X^{n}$, the encoder looks for a triplet of codewords
that are jointly typical with the observed sequence. Information at
rate $R_{1}$ is sent in description 1 and at rate $R_{2}$ in description
2. On receiving either of the the two descriptions, the decoder estimates
the source based on the corresponding codeword of $U_{1}$ or $U_{2}$.
However, if it receives both the descriptions, it estimates the source
based on $U_{12}$. El-Gamal and Cover showed that, for joint typicality
encoding, the probability of error asymptotically approaches zero
if the rates satisfy conditions in (\ref{eq:EGC}). As the distortion
measures are bounded, joint typicality assures the distortion constraints
to be satisfied.

\subsubsection{Zhang-Berger region}

Given the optimality of EC for the no-excess rate scenario and for
the quadratic Gaussian setup, it was naturally conjectured that it
is optimal in general. However, Zhang and Berger \cite{ZB} proposed
a new coding scheme and an associated rate-distortion region for the
2-channel MD problem and showed that it strictly subsumes $\mathcal{RD}_{EC}$.
In particular, they showed that for a binary symmetric source under
Hamming distortion, points outside $\mathcal{RD}_{EC}$ can be achieved
by sending a common codeword in both the descriptions. This extended
region is denoted here by $\mathcal{RD}_{ZB}$. They showed that,
any tuple is achievable for which there exist auxiliary random variables%
\footnote{Our reason for the different notational choice for $V_{12}$ will
become evident later%
} $V_{12},U_{1},U_{2},U_{12}$, jointly distributed with $X$, for
which there exist functions $\psi_{\mathcal{K}},\mathcal{K}\subseteq\mathcal{L},\mathcal{K}\neq\emptyset$
such that,

\begin{eqnarray}
R_{1} & \geq & I(X;U_{1},V_{12})\nonumber \\
R_{2} & \geq & I(X;U_{2},V_{12})\nonumber \\
R_{1}+R_{2} & \geq & 2I(X;V_{12})+I(X;U_{1},U_{2},U_{12}|V_{12})\nonumber \\
 &  & +I(U_{1};U_{2}|V_{12})\nonumber \\
D_{\mathcal{K}} & \geq & E\left[d_{\mathcal{K}}(X,\psi_{\mathcal{K}}(U_{\mathcal{K}}))\right],\,\,\mathcal{K}\subseteq\{1,2\},\mathcal{K}\neq\emptyset\label{eq:ZB}
\end{eqnarray}

The ZB region is given by the closure of all such rate-distortion
tuples. Again note that the ZB region is described differently in
\cite{ZB}, and does not include $U_{12}$ in its characterization.
However, it was recently shown in \cite{wang} that the above characterization
is equivalent to the original one in \cite{ZB}. Further note that
the above region is convex and does not require `convexification'
as is the case for $\mathcal{RD}_{EC}$.

The codebooks are generated as follows: first, $2^{nR_{12}^{''}}$
codewords of $V_{12}$, each of length $n$, are generated according
to the marginal density of $V_{12}$. Conditioned on each codeword
of $V_{12}$, $2^{nR_{1}^{'}}$ codewords of $U_{1}$ and $2^{nR_{2}^{'}}$
codewords of $U_{2}$ are generated according to the conditional densities
$\prod_{t=1}^{n}P_{U_{1}|V_{12}}(u_{1}^{(t)}|v_{12}^{(t)})$ and $\prod_{t=1}^{n}P_{U_{2}|V_{12}}(u_{2}^{(t)}|v_{12}^{(t)})$,
respectively. For each codeword tuple $(v_{12}^{n},u_{1}^{n},u_{2}^{n})$,
a codeword for $U_{12}$ is generated according to the conditional
density $\prod_{t=1}^{n}P_{U_{12}|U_{1},U_{2},V_{12}}(u_{12}^{(t)}|u_{1}^{(t)},u_{2}^{(t)},v_{12}^{(t)})$.
Similar to EC, descriptions 1 and 2 carry information about codewords
of $U_{1}$ and $U_{2}$, respectively. However, along with these
`\textit{private}' messages, both the descriptions carry a common
component, which is the codeword of $V_{12}$. Although this common
codeword introduces explicit redundancy, it helps to co-ordinate the
two messages well and therefore provides better overall efficiency.
The encoding structure of the random variables is shown in Fig. \ref{fig:EGC_ZB}(b). 

Note the functional difference of the two random variables $U_{12}$
and $V_{12}$. The codeword corresponding to $V_{12}$ is sent in
\textit{both} the descriptions, whereas the information corresponding
to the codeword of $U_{12}$ is (implicitly) \textit{split} between
the two descriptions. We therefore call $V_{12}$ the `\textit{common
random variable'} and $U_{12}$ as the `\textit{refinement random
variable}'. Random variables $U_{1}$ and $U_{2}$ form the so called
`\textit{base layer random variables}'.

\subsection{L-Channel MD}

\begin{figure}
\centering\includegraphics[scale=0.5]{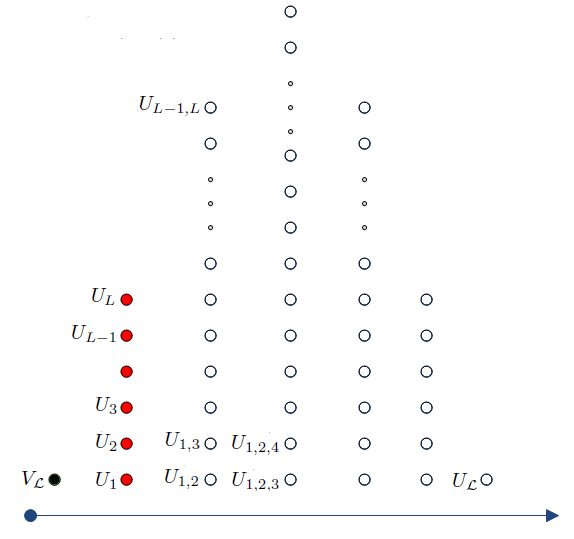}\caption{Codebook generation for VKG coding scheme. Green ($V_{\mathcal{L}}$)
indicates `common random variable'. Red ($U_{l}\,\, l\in\mathcal{L}$)
indicates `base layer random variables' and Blue ($U_{\mathcal{S}}$
$|\mathcal{S}|>1$) indicates `refinement random variables'. \label{fig:VKG-1}}
\end{figure}

An extension of $\mathcal{RD}_{EC}$ and $\mathcal{RD}_{ZB}$ to the
$L-$channel setup was proposed by Venkataramani et al. in \cite{VKG}.
The resulting region is denoted here by $\mathcal{RD}_{VKG}$, and
described as follows. Let $(V_{\mathcal{L}},\{U\}_{2^{\mathcal{L}}-\emptyset})$
be any set of $2^{L}$ random variables distributed jointly with $X$.
Then, an RD tuple is said to be achievable if there exist functions
$\psi_{\mathcal{K}}(\cdot)$ such that:
\begin{eqnarray}
\sum_{l\in\mathcal{K}}R_{l} & \geq & |\mathcal{K}|I(X;V_{\mathcal{L}})-H(\{U\}_{2^{\mathcal{K}}-\emptyset}|X,V_{\mathcal{L}})\nonumber \\
 &  & +\sum_{\mathcal{A}\subseteq\mathcal{K}}H\left(U_{\mathcal{A}}|\{U\}_{2^{\mathcal{A}}-\emptyset-\mathcal{A}}\right)\label{eq:VKG_rate}\\
D_{\mathcal{K}} & \geq & E\left[d_{\mathcal{K}}\left(X,\psi_{\mathcal{K}}(U{}_{\mathcal{K}})\right)\right]\label{eq:VKG}
\end{eqnarray}
$\forall\mathcal{K}\subseteq\mathcal{L}$. The closure of the achievable
tuples over all such $2^{L}$ random variables is $\mathcal{RD}_{VKG}$.
Here, we only present an overview of the encoding scheme. The order
of codebook generation of the auxiliary random variables is shown
in Fig. \ref{fig:VKG-1}. First, $2^{nR_{\mathcal{L}}^{''}}$ codewords
of $V_{\mathcal{L}}$ are generated using the marginal distribution
of $V_{\mathcal{L}}$. Conditioned on each codeword of $V_{\mathcal{L}}$,
$2^{nR_{l}^{'}}$ codewords of $U_{l}$ are generated according to
their respective conditional densities. Next, for each $j\in(1,\ldots,2^{n(R_{\mathcal{L}}^{''}+\sum_{l\in\mathcal{K}}R_{l}^{'})})$,
a single codeword is generated for $U_{\mathcal{K}}(j)$ conditioned
on $(v_{\mathcal{L}}(j),\{u(j)\}_{2^{\mathcal{K}}-\emptyset-\mathcal{K}})$
$\,\,\forall\mathcal{K}\subseteq\mathcal{L},\,\,|\mathcal{K}|>1$.
Note that to generate the codebook for $U_{\mathcal{K}}$, we first
need the codebooks for all $\{U\}_{2^{\mathcal{K}}-\emptyset-\mathcal{K}}$
and $V_{\mathcal{L}}$. 

On observing a typical sequence $X^{n}$, the encoder tries to find
a jointly typical codeword tuple one from each codebook. Codeword
index of $U_{l}$ (at rate $R_{l}^{'}$) is sent in description $l$.
Along with the `\textit{private}' message, each description also carries
a `\textit{shared message}' at rate $R_{\mathcal{L}}^{''}$, which
is the codeword index of $V_{\mathcal{L}}$. Hence, the rate for each
description is $R_{l}=R_{l}^{'}+R_{\mathcal{L}}^{''}$. VKG showed
that, to ensure finding a set of jointly typical codewords with the
observed sequence, the rates must satisfy (\ref{eq:VKG_rate}). It
then follows from standard arguments (see, e.g., \cite{Gamal_notes})
that, if the random variables also satisfy (\ref{eq:VKG}), then the
distortion constraints are met. Note that $V_{\mathcal{L}}$ is the
\textit{only} shared random variable. $U_{l}:l\in\mathcal{L}$ form
the base layer random variables and all $U_{\mathcal{K}}:|\mathcal{K}|\geq2$
form the refinement layers. Observe that the codebook generation follows
the order: shared layer $\rightarrow$ base layer $\rightarrow$ refinement
layer. Also, observe that the random variable $V_{\mathcal{L}}$ plays
the role of $V_{12}$ in the ZB scheme%
\footnote{It has been recently shown in \cite{wang} that the last layer of
refinement random variables ($U_{\mathcal{L}}$) can be set to be
a deterministic function of all the remaining random variables without
affecting the overall achievable region (for the 2-channel MD case,
this corresponds to setting $U_{12}$ to be a function of $V_{12},U_{1}$
and $U_{2}$). A similar result can be proven even for the proposed
coding scheme. However, we continue to use this last layer in our
theorems to avoid additional notation. %
}.

\section{Combinatorial Message Sharing\label{sec:CMS}}

In this section, we describe the proposed encoding scheme and derive
the new achievable region. To simplify notation and understanding,
we first describe the scheme for the 3-descriptions case and offer
intuitive arguments to show the achievability of the new region. We
then extend the arguments to the $L-$channel case and provide formal
proofs as part of Theorem \ref{thm:main}.

\subsection{3-Descriptions scenario\label{sub:3-Descriptions-scenario}}

\begin{figure}
\centering\includegraphics[scale=0.32]{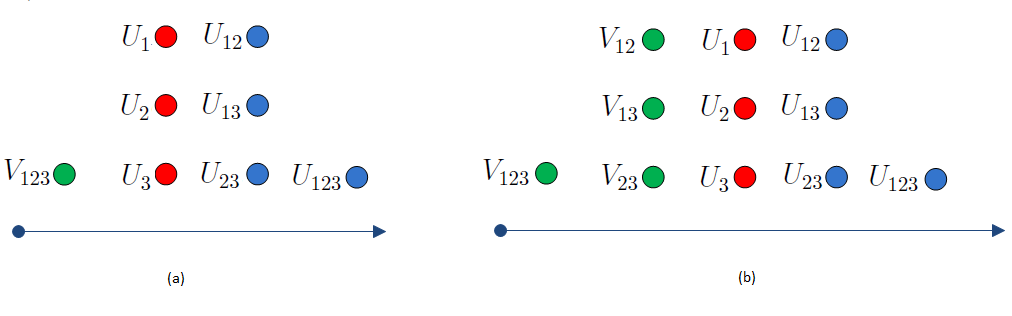}\caption{(a) Denotes the codebook generation order for VKG scheme. (b) represents
the codebook generation for the proposed coding scheme. \label{fig:3_des}}
\end{figure}

The encoding order for the 3-descriptions VKG scheme is shown in Fig.
\ref{fig:3_des}(a). Recall that the common codeword helps in coordinating
the 3 descriptions. The VKG encoding scheme employs \textit{one} common
codeword ($V_{\mathcal{L}}$) that is sent in all the $L$ descriptions.
However, when dealing with $L>2$ descriptions, restricting to a single
shared message could be suboptimal. The CMS scheme therefore allows
for `combinatorial message sharing', i.e., a common codeword is sent
in each (non-empty) subset of the descriptions.

The shared random variables are denoted by `$V$'. The base and the
refinement layer random variables are denoted by `$U$'. For the 3
descriptions scenario, we have 11 auxiliary random variables which
include 3 additional variables over the VKG scheme. These variables
are denoted by $V_{12},V_{13}$ and $V_{23}$. The codeword corresponding
to $V_{\mathcal{K}}$ is sent in \textit{all} the descriptions $l\in\mathcal{K}$.
For example, the codeword of $V_{12}$ is sent in both the descriptions
$1$ and $2$. This introduces a new layer in the encoding structure
as shown in Fig. \ref{fig:3_des} (b). This extra encoding layer provides
an additional degree of freedom in controlling the redundancy across
the messages, and hence leads to an improved rate-distortion region. 

The codebook generation is done as follows. First, the codebook for
$V_{123}$ is generated containing $2^{nR_{123}^{''}}$ independently
generated codewords, each generated according to the marginal $P(V_{123})$.
Then codebooks for $V_{12},V_{13}$ and $V_{23}$ (each containing
$2^{nR_{12}^{''}}$, $2^{nR_{13}^{''}}$ and $2^{nR_{23}^{''}}$ codewords,
respectively) are independently generated conditioned on each codeword
of $V_{123}$, according to the respective conditional densities.
Next, the base layer codebooks for $U_{l}$, $l\in\{1,2,3\}$ (each
containing $2^{nR_{l}^{'}}$ codewords) are generated conditioned
on the codewords of all $V_{\mathcal{K}}$ such that $l\in\mathcal{K}$.
For example, the codebooks for $U_{1}$ are generated conditioned
on the codewords of $V_{12},V_{13}$ and $V_{123}$. Note that each
codebook of $U_{1}$ contains $2^{nR_{1}^{'}}$ codewords and there
are such $2^{n(R_{12}^{''}+R_{13}^{''}+R_{123}^{''})}$ codebooks. 

The refinement layer codewords are generated similar to the VKG scheme.
However, the codewords for $U_{\mathcal{K}}$ are now generated conditioned
not only on the codewords of $\{U\}_{2^{\mathcal{K}}-\emptyset}$ and $V_{\mathcal{L}}$,
but also on the codewords of all $V_{\mathcal{A}}$ such that $|\mathcal{A}\cap\mathcal{K}|>0$.
For example, a single codeword for $U_{12}$ is generated conditioned
on each codeword tuple of $\{U_{1},U_{2},V_{12},V_{13},V_{23},V_{123}\}$.
Note that, overall, there are $2^{n(R_{1}^{'}+R_{2}^{'}+R_{12}^{''}+R_{13}^{''}+R_{123}^{''})}$
codewords of $U_{12}$ that are generated. Similarly codewords for
$U_{13}$, $U_{23}$ and $U_{123}$ are generated conditioned on codewords
of $\{U_{1},U_{3},V_{12},V_{13},V_{23},V_{123}\}$, $\{U_{2},U_{3},V_{12},V_{13},V_{23},V_{123}\}$
and $\{U_{1},U_{2},U_{3},V_{12},V_{13},V_{23},V_{123}\}$, respectively. 

The encoder, on observing $X^{n}$, tries to find a codeword from
the codebook of $V_{123}$ that is jointly typical with $X^{n}$.
Using typicality arguments, it is possible to show that the probability
of not finding such a codeword approaches zero if $R_{123}^{''}>I(X;V_{123})$.
Let us denote the selected codeword of $V_{123}$ by $v_{123}$. The
encoder next looks at the codebooks of $V_{12}$, $V_{13}$ and $V_{23}$,
which were generated conditioned on $v_{123}$, to find a triplet
of codewords which are jointly typical with $(X^{n},v_{123})$. It
can be shown using arguments similar to \cite{VKG,Gamal_notes} (formal
proof is given in Theorem 1), that the probability of not finding
such a triplet approaches zero if the following conditions are satisfied
$\forall\mathcal{S}\subseteq\left\{ \{1,2\},\{1,3\},\{2,3\}\right\} $:
\begin{equation}
\sum_{\mathcal{K}\in\mathcal{S}}R_{\mathcal{K}}^{''}>\sum_{\mathcal{K}\in\mathcal{S}}H\left(V_{\mathcal{K}}|V_{123}\right)-H(\{V\}_{\mathcal{S}}|V_{123},X)\label{eq:3d_1}
\end{equation}
Let us denote these codewords by $v_{12}$, $v_{13}$ and $v_{23}$.
The encoders next step is to find an index tuple $(i_{1},i_{2},i_{3})$
such that $(U_{1}(i_{1}),U_{2}(i_{2}),U_{3}(i_{3}),U_{12}(i_{1},i_{2}),$
$U_{13}(i_{1},i_{3}),U_{23}(i_{2},i_{3}),U_{123}(i_{1},i_{2},i_{3}))$
is jointly typical with $(X^{n},v_{123},v_{12},v_{13},v_{23})$. Here,
$U_{1}(i_{1})$ denotes the $i_{1}$th codeword from the codebook
of $U_{1}$ and $U_{12}(i_{1},i_{2})$ denotes the codeword of $U_{12}$
generated conditioned on $(v_{123},v_{12},v_{13},v_{23},U_{1}(i_{1}),U_{2}(i_{2}))$.
Similar notation is followed for $U_{2}(i_{2}),U_{3}(i_{3}),U_{13}(i_{1},i_{3})$,
etc. Again using arguments similar to \cite{VKG}, we can show that
the probability of not finding such an index tuple approaches zero
if $\forall\mathcal{S}\subseteq\left\{ 1,2,3\right\} $:
\begin{eqnarray}
\sum_{\mathcal{K}\in\mathcal{S}}R_{\mathcal{K}}^{'} & > & \sum_{\mathcal{K}\subseteq\mathcal{S}}H\left(U_{\mathcal{K}}|\{U\}_{2^{\mathcal{K}}-\emptyset-\{\mathcal{K}\}},\{V\}_{\mathcal{J}(\mathcal{K})}\right)\nonumber \\
 &  & -H\left(\{U\}_{2^{\mathcal{S}}-\emptyset}|V_{12},V_{13},V_{23},V_{123},X\right)\label{eq:3d_2}
\end{eqnarray}
where $\mathcal{J}(\mathcal{K})=\{\mathcal{A}:\,\,\mathcal{A}\in\{12,13,23,123\},\,\,|\mathcal{K}\bigcap\mathcal{A}|>0\}$.
We denote the codewords corresponding to this index tuple by $(u_{1},u_{2},u_{3},u_{12},u_{13},u_{23},u_{123})$.
Note that, in the above illustration, we assumed that the encoder
finds jointly typical codewords in a sequential manner, i.e., it first
finds a codeword of $V_{123}$, then finds jointly typical codewords
from the codebooks of $(V_{12},V_{13},V_{23})$ and so on. This was
done only for the ease of understanding. In Theorem \ref{thm:main},
we derive the conditions on rates for the encoder to find typical
sequences from all the codebooks jointly (at once). The conditions
on the rates for joint encoding is generally weaker (the region is
larger) than that for sequential encoding. 

The encoder sends the index of the base layer codewords in the corresponding
descriptions. It also sends the index of codewords corresponding to
$V_{\mathcal{K}}$ in \textit{all} the descriptions $l\in\mathcal{K}$.
For example, in description 1, the encoder sends the indices corresponding
to $v_{123},v_{12},v_{13}$ and $u_{1}$. Similarly, descriptions
2 and 3 carry indices corresponding to $(v_{123},v_{12},v_{23},u_{2})$
and $(v_{123},v_{13},v_{23},u_{3})$, respectively. Therefore, rate
for description $l$ is:
\begin{equation}
R_{l}=R_{l}^{'}+\sum_{\mathcal{K}\in\mathcal{J}(l)}R_{\mathcal{K}}^{''}
\end{equation}
Conditions on $R_{l}$ can be obtained by substituting bounds from
(\ref{eq:3d_1}) and (\ref{eq:3d_2}). 

The decoder, on receiving a subset of descriptions, estimates $X^{n}$
based on the refinement layer codeword corresponding to the received
subset. For example, if the decoder receives the descriptions 1 and
2, it estimates $X^{n}$ as $\psi_{12}(u_{12})$ for some function
$\psi_{12}(\cdot)$. It follows from standard arguments \cite{Gamal_notes}
that, if there exist functions $\psi_{\mathcal{K}}(\cdot)$ satisfying
the following constraints, then distortion vector $\{D_{\mathcal{K}},\,\,\forall\mathcal{K}\in2^{\mathcal{L}}-\emptyset\}$
is achievable. 
\begin{equation}
D_{\mathcal{K}}\geq E\left[d_{\mathcal{K}}\left(X,\psi_{\mathcal{K}}\left(U_{\mathcal{K}}\right)\right)\right]\label{eq:3d_4}
\end{equation}

An achievable RD region for the 3-descriptions MD setup is obtained
by taking the closure of the achievable tuples over all such 11 auxiliary
random variables, jointly distributed with $X$.

\subsection{$L$ Channel case\label{sub:L-Channel-case}}

\begin{figure}
\centering\includegraphics[scale=0.43]{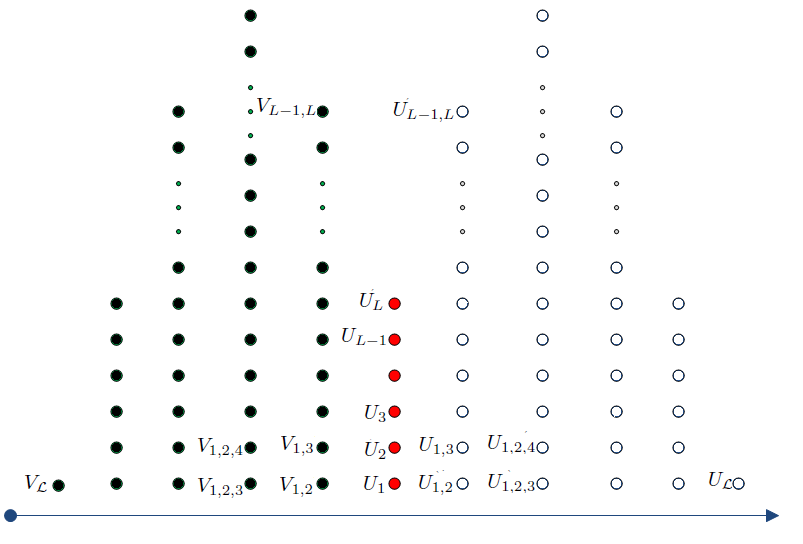}\caption{Codebook generation for the proposed coding scheme. Black ($V_{\mathcal{S}}\,\,\,|\mathcal{S}|>1$)
indicates `shared random variable'. Red ($U_{l}\,\, l\in\mathcal{L}$)
indicates `base layer random variables' and White ($U_{\mathcal{S}}\,\,\,$$|\mathcal{S}|>1$)
indicates `refinement random variables'. \label{fig:L_Channel_CMS}}
\end{figure}

The fundamental idea here is same as the 3-descriptions scenario,
albeit involving more complex notation. The codebook generation is
done in an order as shown in Fig. \ref{fig:L_Channel_CMS}. First,
the codebook for $V_{\mathcal{L}}$ is generated. Then, the codebooks
for $V_{\mathcal{K}}$, $|\mathcal{K}|=W$ are generated in the order
$W=L-1,L-2\ldots2$. This is followed by the generation of the base
layer codebooks, i.e, $U_{\mathcal{K}}$, $|\mathcal{K}|=1$. Then,
the refinement layer codebooks corresponding to $U_{\mathcal{K}}$,
$|\mathcal{K}|=W$ are generated in the order $W=2,3\ldots,L$. Each
codebook is generated conditioned on a subset of the previously generated
codewords. The specifics of codebook generation will be described
as part of the proof of Theorem \ref{thm:main}. 

Before stating the theorem, we define the following subsets of $2^{\mathcal{L}}$:
\begin{eqnarray}
\mathcal{I}_{W} & = & \{\mathcal{K}:\mathcal{K}\in2^{\mathcal{L}},\,\,|\mathcal{K}|=W\}\nonumber \\
\mathcal{I}_{W+} & = & \{\mathcal{K}:\mathcal{K}\in2^{\mathcal{L}},\,\,|\mathcal{K}|>W\}\label{eq:Iw}
\end{eqnarray}
Let $\mathcal{B}$ be any non-empty subset of $\mathcal{L}$ with
$|\mathcal{B}|\leq W$. We define the following subsets of $\mathcal{I}_{W}$
and $\mathcal{I}_{W+}$:
\begin{eqnarray}
\mathcal{I}_{W}(\mathcal{B}) & = & \{\mathcal{K}:\mathcal{K}\in\mathcal{I}_{W},\,\,\mathcal{B}\subseteq\mathcal{K}\}\nonumber \\
\mathcal{I}_{W+}(\mathcal{B}) & = & \{\mathcal{K}:\mathcal{K}\in\mathcal{I}_{W+},\,\,\mathcal{B}\subseteq\mathcal{K}\}\label{eq:Iwb}
\end{eqnarray}
We also define: 
\begin{equation}
\mathcal{J}(\mathcal{B})=\{\mathcal{K}:\,\,\mathcal{K}\in2^{\mathcal{L}},\,\,|\mathcal{B}\bigcap\mathcal{K}|>0\}
\end{equation}
Note that $\mathcal{J}(\mathcal{L})=2^{\mathcal{L}}-\emptyset$. 

Next we consider subsets of $2^{\mathcal{L}}-\emptyset$ and define some
important notation. Let $\mathcal{Q}$ be any subset of $2^{\mathcal{L}}-\emptyset$.
We denote by $[\mathcal{Q}]_{1}$ the set of all elements of $\mathcal{Q}$
of cardinality $1$, i.e.,: 
\begin{equation}
[\mathcal{Q}]_{1}=\{\mathcal{K}:\mathcal{K}\in Q,\,|\mathcal{K}|=1\}
\end{equation}
Finally, we say that $\mathcal{Q}\in\mathcal{Q}^{*}$ if it satisfies
the following property.
\begin{defn}
Let $\mathcal{Q}$ be a subset of $2^\mathcal{L}-\emptyset$. If, for every set $\mathcal{K}$ that belongs to $\mathcal{Q}$, all the sets $\mathcal{I}_{|\mathcal{K}|+}(\mathcal{K})$ also belong to $\mathcal{Q}$, then we say that $\mathcal{Q}\in\mathcal{Q}^{*}$, i.e., $\mathcal{Q}\in\mathcal{Q}^{*}$ if $\forall\mathcal{K\in\mathcal{Q}}$:

\begin{equation}
\mathcal{K}\in\mathcal{Q}\,\,\Rightarrow\,\,\,\mathcal{I}_{|\mathcal{K}|+}(\mathcal{K})\subset\mathcal{Q}\label{eq:cond_Q_main}
\end{equation}
\end{defn}

Let $(\{V\}_{\mathcal{J}(\mathcal{L})},\{U\}_{2^{\mathcal{L}}-\emptyset})$
be any set of $2^{L+1}-L-2$ random variables jointly distributed
with $X$. Let $\mathcal{Q}\subseteq2^{\mathcal{L}}-\emptyset$ such that
$\mathcal{Q}\in\mathcal{Q}^{*}$. We define: 
\begin{eqnarray}
\alpha(\mathcal{Q}) & = & \sum_{\mathcal{K}\in\mathcal{Q}-[\mathcal{Q}]_{1}}H\left(V_{\mathcal{K}}|\{V\}_{\mathcal{I}_{|\mathcal{K}|+}(\mathcal{K})}\right)\nonumber \\
 &  & +\sum_{\mathcal{K}\in2^{[\mathcal{Q}]_{1}}-\emptyset}H\left(U_{\mathcal{K}}|\{V\}_{\mathcal{I}_{1+}(\mathcal{K})},\{U\}_{2^{\mathcal{K}}-\emptyset-\mathcal{K}}\right)\nonumber \\
 &  & -H\left(\{V\}_{\mathcal{Q}-[\mathcal{Q}]_{1}},\{U\}_{2^{[\mathcal{Q}]_{1}}-\emptyset}|X\right)\label{eq:alpha_defn_1-1}
\end{eqnarray}
We follow the convention $\alpha(\emptyset)=0$. Next we state the rate-distortion
region achievable by the CMS scheme for the $L-$descriptions framework. 
\begin{thm}
\label{thm:main}Let $(\{V\}_{\mathcal{J}(\mathcal{L})},\{U\}_{2^{\mathcal{L}}-\emptyset})$
be any set of $2^{L+1}-L-2$ random variables jointly distributed
with $X$, where $U_{\mathcal{K}}$ and $V_{\mathcal{K}}$ take values
in some finite alphabets $\mathcal{U}_{\mathcal{K}}$ and $\mathcal{V}_{\mathcal{K}}$,
respectively $\forall\mathcal{K}$. Let $\mathcal{Q}^{*}$ be the
set of all subsets of $2^{\mathcal{L}}-\emptyset$ satisfying (\ref{eq:cond_Q_main})
and let $R_{\mathcal{S}}^{''},\,\,\mathcal{S}\in\mathcal{I}_{1+}$
and $R_{l}^{'},\,\, l\in\mathcal{L}$ be $2^{L}-1$ auxiliary rates
satisfying:

\textup{
\begin{equation}
\sum_{\mathcal{S}\in\mathcal{Q}-[\mathcal{Q}]_{1}}R_{\mathcal{S}}^{''}+\sum_{l\in[\mathcal{Q}]_{1}}R_{l}^{'}>\alpha(\mathcal{Q})\,\,\,\,\forall\mathcal{Q}\in\mathcal{Q}^{*}\label{eq:aux_rate_cond_thm}
\end{equation}
}Then, the RD region for the $L-$channel MD problem contains the
rates and distortions for which there exist functions $\psi_{\mathcal{K}}(\cdot)$,
such that, 
\begin{eqnarray}
R_{l} & = & R_{l}^{'}+\sum_{\mathcal{K}\in\mathcal{J}(l)}R_{\mathcal{S}}^{''}\label{eq:rate_condition_thm}\\
D_{\mathcal{K}} & \geq & E\left[d_{\mathcal{K}}\left(X,\psi_{\mathcal{K}}\left(U_{\mathcal{K}}\right)\right)\right]\label{eq:dist_condition_thm}
\end{eqnarray}
The closure of the achievable tuples over all such $2^{L+1}-L-2$
random variables is denoted by $\mathcal{RD}_{CMS}$.\end{thm}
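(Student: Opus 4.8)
\emph{Proof proposal.} The plan is to establish achievability by a random coding argument with joint typicality encoding, generalizing the layered covering-lemma analysis of VKG \cite{VKG,Gamal_notes} to the full combinatorial family of conditionally generated codebooks. First I would fix the joint pmf of $(X,\{V\}_{\mathcal{J}(\mathcal{L})},\{U\}_{2^{\mathcal{L}}-\emptyset})$ and generate the codebooks in the nested order of Fig.~\ref{fig:L_Channel_CMS}: the $2^{nR_{\mathcal{L}}''}$ codewords of $V_{\mathcal{L}}$ are drawn i.i.d. from $P_{V_{\mathcal{L}}}$; for decreasing cardinality $W=L-1,\ldots,2$ the $V_{\mathcal{K}}$ codebooks ($|\mathcal{K}|=W$) are drawn conditionally on the already-generated larger shared codewords indexed by $\mathcal{I}_{|\mathcal{K}|+}(\mathcal{K})$; the base codewords $U_l$ are then drawn conditionally on $\{V\}_{\mathcal{I}_{1+}(\{l\})}$; and finally, for $W=2,\ldots,L$, a \emph{single} codeword $U_{\mathcal{K}}$ ($|\mathcal{K}|=W$) is drawn for each lower-layer index tuple, conditioned on $\{V\}_{\mathcal{J}(\mathcal{K})}$ and $\{U\}_{2^{\mathcal{K}}-\emptyset-\mathcal{K}}$. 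This is exactly the conditioning structure under which $V_{\mathcal{K}}$ is present in every description $l\in\mathcal{K}$, so that description $l$ transmits the index of $U_l$ together with the indices of all $V_{\mathcal{K}}$ with $\mathcal{K}\in\mathcal{J}(l)$; since the refinement codebooks ($|\mathcal{K}|>1$) carry a single codeword per conditioning tuple they add no description rate, giving the split $R_l=R_l'+\sum_{\mathcal{K}\in\mathcal{J}(l)}R_{\mathcal{K}}''$ of (\ref{eq:rate_condition_thm}).

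\emph{Encoding and the covering step (the crux).} Given a typical $X^n$, the encoder searches for a single assignment of all the free indices (the $V_{\mathcal{K}}$ indices with $|\mathcal{K}|\geq 2$ and the base indices of $U_l$, $l\in\mathcal{L}$) such that the entire induced tuple of $V$- and $U$-codewords is jointly typical with $X^n$, declaring an error otherwise. The central task is to show that this encoding-error probability vanishes as $n\to\infty$ whenever the auxiliary rates obey (\ref{eq:aux_rate_cond_thm}). I would prove this through a generalized multivariate covering lemma analyzed by the second-moment method: letting $N$ count the index assignments producing a jointly typical tuple, one shows $E[N]\to\infty$ and $\mathrm{Var}(N)=o(E[N]^2)$, so $P(N=0)\to 0$ by Chebyshev. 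The analysis is organized over sub-ensembles of codebooks, and the key structural fact is that each $V_{\mathcal{K}}$ (resp. $U_{\mathcal{K}}$) is a conditional function of its generating codewords, so a codebook can be meaningfully varied in a covering sub-problem only if all of its conditioning ancestors are varied as well. This forces the participating collection to be closed under taking the larger-cardinality supersets on which each member is conditioned, which is precisely the closure (\ref{eq:cond_Q_main}) defining $\mathcal{Q}^{*}$. For each such $\mathcal{Q}\in\mathcal{Q}^{*}$ the corresponding term is controlled by $\alpha(\mathcal{Q})$ of (\ref{eq:alpha_defn_1-1}), whose first two (positive) summands are the per-codebook conditional entropies of the shared and base/refinement variables carried by $\mathcal{Q}$ and whose last (negative) summand is their joint entropy given $X$ --- the familiar ``sum of conditional marginals minus joint-given-target'' form of a covering threshold --- and the term vanishes exactly when the free rate $\sum_{\mathcal{S}\in\mathcal{Q}-[\mathcal{Q}]_1}R_{\mathcal{S}}''+\sum_{l\in[\mathcal{Q}]_1}R_l'$ exceeds $\alpha(\mathcal{Q})$. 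The sequential conditions (\ref{eq:3d_1})--(\ref{eq:3d_2}) derived for $L=3$ would serve as building blocks and a consistency check, the joint region being weaker (larger) as already noted in Section~\ref{sub:3-Descriptions-scenario}.

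\emph{Decoding, distortion, and closure.} Conditioned on successful encoding, when a subset $\mathcal{K}$ is received the decoder recovers the refinement codeword $U_{\mathcal{K}}$ --- its conditioning codewords, namely the $V_{\mathcal{A}}$ with $\mathcal{A}\in\mathcal{J}(\mathcal{K})$ transmitted in the descriptions of $\mathcal{K}$ together with the $U_j$, $j\in\mathcal{K}$, are all available from the indices in $\{J_l:l\in\mathcal{K}\}$ --- and forms $\psi_{\mathcal{K}}(U_{\mathcal{K}})$. By the typical-average lemma and the boundedness of $d_{\mathcal{K}}$, joint typicality of $(X^n,U_{\mathcal{K}}^n)$ yields per-letter distortion arbitrarily close to $E[d_{\mathcal{K}}(X,\psi_{\mathcal{K}}(U_{\mathcal{K}}))]$, establishing (\ref{eq:dist_condition_thm}); the vanishing-probability error event contributes negligibly to the expected distortion precisely because $d_{\mathcal{K}}$ is bounded. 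Letting $n\to\infty$, optimizing over rate splits and auxiliary variables, and taking the closure then yields $\mathcal{RD}_{CMS}$.

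\emph{Main obstacle.} The hard part is the generalized covering lemma of the encoding step: correctly bookkeeping the combinatorial web of conditional dependencies among the $2^{L+1}-L-2$ codebooks, proving that the relevant covering sub-ensembles are in correspondence with the ancestor-closed family $\mathcal{Q}^{*}$ and with no others, and matching each resulting exponent to $\alpha(\mathcal{Q})$. In particular one must verify that the rate-free refinement layers are handled by bundling all $U_{\mathcal{A}}$ with support inside $\bigcup[\mathcal{Q}]_1$ into the joint-entropy term via the index set $2^{[\mathcal{Q}]_1}-\emptyset$, and that subsets violating (\ref{eq:cond_Q_main}) produce only redundant constraints, so that restricting the union bound to $\mathcal{Q}^{*}$ is lossless.
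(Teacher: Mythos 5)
Your proposal follows essentially the same route as the paper's own proof: hierarchical conditional codebook generation, joint-typicality encoding, and a second-moment (Chebyshev) bound on the count of jointly typical index tuples, in which the variance decomposes over index-tuple overlaps that---precisely because each codebook is conditioned on its ancestors---must form sets $\mathcal{Q}\in\mathcal{Q}^{*}$ satisfying (\ref{eq:cond_Q_main}), each contributing the exponent $\alpha(\mathcal{Q})$; this is exactly the argument of the paper's Appendix A. Your treatment of the rate split (\ref{eq:rate_condition_thm}) and of the distortion constraints via the typical-average lemma and boundedness of $d_{\mathcal{K}}$ likewise matches the paper.
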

\begin{rem}
$\mathcal{RD}_{CMS}$ can be extended to continuous random variables
and well-defined distortion measures using techniques similar to \cite{Wyner}.
We omit the details here and assume that the above region continues
to hold even for well behaved continuous random variables (for example,
a Gaussian source under MSE). 
\end{rem}

\begin{rem}
$\mathcal{RD}_{CMS}$ is convex, as a time sharing random variable
can be embedded in $V_{\mathcal{L}}$.\end{rem}
\begin{proof}
\textit{Codebook Generation} : Suppose we are given $P\left(\{v\}_{2^{\mathcal{L}}-\{(1),(2),\ldots(L)\}},\{u\}_{2^{\mathcal{L}}}|x\right)$
and $\psi_{\mathcal{K}}(\cdot)$ satisfying (\ref{eq:dist_condition_thm}).
The codebook generation begins with $V_{\mathcal{L}}$. We independently
generate $2^{nR_{\mathcal{L}}^{''}}$ codewords of $V_{\mathcal{L}}$,
denoted by $v_{\mathcal{L}}^{n}(j_{\mathcal{L}})\,\, j_{\mathcal{L}}\in\{1\ldots2^{nR_{\mathcal{L}}^{''}}\}$,
according to the density $\prod_{t=1}^{n}P_{V_{\mathcal{L}}}(v_{\mathcal{L}}^{(t)})$.
For each codeword $v_{\mathcal{L}}^{n}(j_{\mathcal{L}})$, we independently
generate $2^{nR_{\mathcal{K}}^{''}}$ codewords of $V_{\mathcal{K}}$
$\forall\mathcal{K}\in\mathcal{I}_{L-1}$, according to $\prod_{t=1}^{n}P_{V_{\mathcal{K}}|V_{\mathcal{L}}}(v_{\mathcal{K}}^{(t)}|v_{\mathcal{L}}^{(t)})$.
We denote these codewords by $v_{\mathcal{K}}^{n}(j_{\mathcal{L}},j_{\mathcal{K}})$
$j_{\mathcal{K}}\in\{1\ldots2^{nR_{\mathcal{K}}^{''}}\}$. This procedure
for generating the codebooks of the shared random variables continues.
$2^{nR_{\mathcal{K}}^{''}}$ codewords of $V_{\mathcal{K}}$ are independently
generated for each codeword tuple of $\{V\}_{\mathcal{I}_{W+}(\mathcal{K})}$
according to $\prod_{t=1}^{n}P_{V_{\mathcal{K}}|\{V\}_{\mathcal{I}_{W+}(\mathcal{K})}}(v_{\mathcal{K}}^{(t)}|\{v\}_{\mathcal{I}_{W+}(\mathcal{K})}^{(t)})$.
These codewords are denoted by $v_{\mathcal{K}}^{n}(\{j\}_{\mathcal{I}_{W+}(\mathcal{K})},j_{\mathcal{K}})\,\, j_{\mathcal{K}}\in\{1\ldots2^{nR_{\mathcal{K}}^{''}}\}$.
Note that to generate the codebooks for $V_{\mathcal{K}}$ $\forall\mathcal{K}\in\mathcal{I}_{W}$,
we need the codebooks of $V_{\mathcal{A}}\,\,\forall\mathcal{A}\in\mathcal{I}_{W+}(\mathcal{K})$.
The codebook generation follows the order indicated in Fig. \ref{fig:L_Channel_CMS}. 

Once all the codebooks of shared random variables are generated, the
codebooks for the base layer random variables are generated. For each
codeword tuple of $\{V\}_{\mathcal{I}_{1+}(l)}$, $2^{nR_{l}^{'}}$
codewords of $U_{l}$ are generated independently according to $\prod_{t=1}^{n}P_{U_{l}|(\{V\}_{\mathcal{I}_{1+}(l)})}(u_{l}^{(t)}|\{v\}_{\mathcal{I}_{1+}(l)}^{(t)})$
and are denoted by $u_{l}^{n}\left(\{j\}_{\mathcal{I}_{1+}(l)},i_{l}\right)$
$i_{l}\in\{1\ldots2^{nR_{l}^{'}}\}$. Then the codebooks for the refinement
layers are formed by assigning a single codeword $u_{\mathcal{K}}^{n}(\{j\}_{\mathcal{J}(\mathcal{K})},\{i\}_{\mathcal{K}})$
to each $\mathcal{K}\in2^{\mathcal{L}}-\{\{1\},\{2\},\ldots\{L\}\}$
and $\forall\{j\}_{\mathcal{J}(\mathcal{K})},\{i\}_{\mathcal{K}}$.
These codewords are generated according to $\prod_{t=1}^{n}P_{U_{\mathcal{K}}|\{V\}_{\mathcal{J}(\mathcal{K})},\{U\}_{2^{\mathcal{K}}-\{\mathcal{K}\}}}(u_{\mathcal{K}}^{(t)}|\{v\}_{\mathcal{J}(\mathcal{K})}^{(t)},\{u\}_{2^{\mathcal{K}}-\{\mathcal{K}\}}^{(t)})$. 

The encoder, on observing a typical sequence $X^{n}$, attempts to
find a set of codewords, one for each variable, such that they are
all jointly typical. If the encoder succeeds in finding such a set,
from the typical average lemma \cite{Gamal_notes}, it follows that
the average distortions are smaller than $D_{\mathcal{K}},\,\,\forall\mathcal{K}\in2^{\mathcal{L}}$.
However, if the encoder fails to find such a set, the average distortions
are upper bounded by $d_{max}$ (as the distortion measures are assumed
to be bounded). Hence, if the probability of finding a set of jointly
typical codewords approaches 1, the distortion conditions (\ref{eq:dist_condition_thm})
are satisfied. We show in Appendix A that if the rates $R_{\mathcal{K}}^{''}$
and $R_{l}^{'}$ satisfy conditions (\ref{eq:aux_rate_cond_thm}),
this probability approaches 1. Next, recall that the encoder sends
the codewords of $V_{\mathcal{K}}$ (at rate $R_{\mathcal{K}}^{''}$)
in all the descriptions $l\in\mathcal{K}$. It also sends the codewords
of $U_{l}$ (at rate $R_{l}^{'}$) in description $l$. Therefore
the rate of description $l$ is given by (\ref{eq:rate_condition_thm}),
proving the Theorem. 
\end{proof}
We note that the characterization of the achievable RD region in Theorem
\ref{thm:main} involves auxiliary rates $R_{l}^{'}$ and $R_{\mathcal{K}}^{''}$.
This makes it hard to prove general converse results. However deriving
an explicit characterization only in terms of $R_{l}$ is more involved
and deriving such bounds will be considered as part of future work.

\section{Strict Improvement }

\subsection{Binary Symmetric Source under Hamming Distortion Measure\label{sec:Strict-Improvement1}}

In this section we show that the CMS scheme achieves points strictly
outside $\mathcal{RD}_{VKG}$ whenever there exists a 2-description
subset for which the ZB region achieves points outside the EC region.
As a result of this theorem, it follows that for a Binary symmetric
source under Hamming distortion measure, CMS strictly outperforms
VKG.
\begin{thm}
\label{thm:ZB_imp}(i) The rate-distortion region achievable by the
CMS scheme is always at least as large as the VKG achievable region,
i.e.:
\begin{equation}
\mathcal{RD}_{VKG}\subseteq\mathcal{RD}_{CMS}
\end{equation}
(ii) $\forall L>2$, the CMS scheme achieves points that are not achievable
by VKG for any source and distortion measures for which there exists
a 2-description subset such that the Zhang-Berger scheme achieves
points outside the El-Gamal-Cover region. i.e.:
\begin{equation}
\mathcal{RD}_{VKG}\subset\mathcal{RD}_{CMS}\,\,\,\mbox{if }\mathcal{RD}_{EC}\subset\mathcal{RD}_{ZB}
\end{equation}
for some 2-description subset. Specifically, for a binary symmetric
source under Hamming distortion, CMS achieves a strictly larger rate-distortion
region compared to VKG $\forall L>2$. \end{thm}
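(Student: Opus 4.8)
The plan is to prove the two parts separately, with part (i) serving as the foundation for part (ii).

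For part (i), I would establish $\mathcal{RD}_{VKG}\subseteq\mathcal{RD}_{CMS}$ by showing that CMS contains VKG as a degenerate special case. Since CMS introduces additional shared random variables $V_{\mathcal{K}}$ for every $\mathcal{K}$ with $1<|\mathcal{K}|<L$ (on top of the single VKG common variable $V_{\mathcal{L}}$), the natural approach is to set all these extra shared variables to be constant, i.e., $V_{\mathcal{K}}=\Phi$ for all $\mathcal{K}$ with $|\mathcal{K}|<L$, and set the corresponding auxiliary rates $R_{\mathcal{K}}''=0$. I would then verify that under this substitution, the rate constraints (\ref{eq:aux_rate_cond_thm}) over all $\mathcal{Q}\in\mathcal{Q}^{*}$ collapse exactly to the VKG rate constraints (\ref{eq:VKG_rate}). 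The conditioning on constant variables drops out of all the entropy terms in $\alpha(\mathcal{Q})$, and the description-rate formula (\ref{eq:rate_condition_thm}) reduces to $R_{l}=R_{l}'+R_{\mathcal{L}}''$, matching VKG. The distortion constraints are identical in both regions. This direction is essentially a bookkeeping verification that the CMS region, being defined over a strictly larger family of auxiliary variables, recovers VKG when the new degrees of freedom are switched off.

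For part (ii), the key idea is to embed a strictly-improving two-description ZB construction into the $L$-description CMS scheme using one of the new pairwise shared variables $V_{ij}$ that VKG does not possess. Suppose the pair $\{i,j\}$ is the 2-description subset for which $\mathcal{RD}_{EC}\subset\mathcal{RD}_{ZB}$, witnessed by a ZB point using a nontrivial common variable $V_{ij}$. I would construct a CMS configuration that activates precisely this pairwise shared variable $V_{ij}$ (assigning it the ZB-optimal joint distribution) while setting all other new shared variables to constants, and choosing the remaining base/refinement variables to handle the other descriptions exactly as VKG would. The point is that CMS can route a common codeword through descriptions $i$ and $j$ \emph{only}, without forcing that redundancy into the other $L-2$ descriptions---something the single-common-variable VKG scheme cannot do. I would then argue that the resulting RD tuple lies strictly outside $\mathcal{RD}_{VKG}$: intuitively, VKG's only mechanism for a shared message is $V_{\mathcal{L}}$, which is sent in \emph{all} descriptions, so to replicate the coordination benefit on the pair $\{i,j\}$ it would have to pay the redundancy cost on every description, which by the strict ZB-over-EC gap it cannot do without exceeding the rate budget somewhere.

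The main obstacle will be the strictness argument---showing the constructed point is genuinely \emph{outside} $\mathcal{RD}_{VKG}$ rather than merely achievable by CMS. This requires a careful converse-style comparison: I would project the $L$-description problem onto the $\{i,j\}$ coordinates and show that any VKG achievability of the target tuple would induce a valid EC achievability of the underlying two-description point, contradicting the hypothesized strict ZB-over-EC gap. The delicate part is handling the interaction between the marginalized pair and the remaining descriptions, ensuring that the extra VKG refinement and base codebooks for the other descriptions cannot be exploited to mimic the pairwise common codeword; this likely requires showing that the rate functional for VKG restricted to descriptions $i,j$ reduces to the EC sum-rate form (\ref{eq:EGC}) rather than the ZB form (\ref{eq:ZB}), precisely because $V_{\mathcal{L}}$ cannot be confined to the pair. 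Finally, invoking the known strict separation $\mathcal{RD}_{EC}\subset\mathcal{RD}_{ZB}$ for the binary symmetric source under Hamming distortion (from \cite{ZB}) specializes the general result to the stated concrete example for all $L>2$.
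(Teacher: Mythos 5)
Your part (i) is correct and is exactly the paper's argument: set $V_{\mathcal{K}}=\Phi$ for all $|\mathcal{K}|<L$, take $R_{\mathcal{L}}^{''}=I(X;V_{\mathcal{L}})$, and verify that (\ref{eq:aux_rate_cond_thm})--(\ref{eq:rate_condition_thm}) collapse to (\ref{eq:VKG_rate})--(\ref{eq:VKG}). The achievability half of your part (ii) (activate only the pairwise $V_{ij}$ with the ZB-optimal distribution, all other new shared variables constant) also matches the paper.

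The gap is in your strictness argument. Your central claim --- that any VKG achievability of the target tuple projects to an \emph{EC} achievability on the pair $\{i,j\}$ ``precisely because $V_{\mathcal{L}}$ cannot be confined to the pair'' --- is false in general. $V_{\mathcal{L}}$ \emph{is} shared by descriptions $i$ and $j$; restricting the VKG region to any pair of descriptions yields exactly the ZB form (\ref{eq:ZB}) with $V_{\mathcal{L}}$ playing the role of $V_{12}$, not the EC form (\ref{eq:EGC}). So VKG's pair-marginal can reach the ZB-only point; the price it pays is not on $(R_{i},R_{j})$ but on the rates of the remaining $L-2$ descriptions, which are forced to carry the index of $V_{\mathcal{L}}$. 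Your construction handles those descriptions ``exactly as VKG would,'' i.e., with their own generic rate/distortion duties, and then this price need not be a penalty at all: if decoder $\{3\}$ (say) has an active distortion constraint, the copy of $V_{\mathcal{L}}$ inside description 3 is useful information for that decoder, so you cannot conclude that VKG exceeds the rate budget anywhere. The projection argument, as proposed, yields no contradiction. (The claim becomes literally true only in the degenerate cross-section $R_{3}=\cdots=R_{L}=0$, where zero rate forces $V_{\mathcal{L}}=\Phi$; the paper records that trivial argument and deliberately sets it aside.)

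What is missing is the choice of a cross-section in which the duplicated codeword is \emph{provably} pure waste. The paper (for $L=3$, with the ZB-gap pair relabeled as $\{1,2\}$) constrains only $D_{1},D_{2},D_{12},D_{13}$ and leaves $D_{3},D_{23},D_{123}$ unconstrained: description 3 then matters only to decoder $\{1,3\}$, which already receives description 1, so any content of description 3 that duplicates description 1 is wasted. Fixing $(R_{1},R_{2},D_{1},D_{2},D_{12})=P^{*}$, a ZB-not-EC point --- which forces VKG to realize the common codeword through $V_{\mathcal{L}}$ and hence to repeat it in description 3 --- and comparing the infimum achievable $R_{3}$ under the $D_{13}$ constraint gives
\begin{equation}
\overline{R}_{VKG}-\overline{R}_{CMS}\,\geq\, R_{12}^{*}\,>\,0,
\end{equation}
where $R_{12}^{*}$ is the rate of the common codeword at $P^{*}$ (at least $1-H_{b}(0.30585)\approx0.1117$ bits for the binary symmetric source under Hamming distortion). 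This successive-refinement-type cross-section, which your proposal does not identify, is the ingredient that turns your correct intuition into a proof with all descriptions carrying nontrivial rate.
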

\begin{rem}
In Section \ref{sec:Strict-Improvement2} we will show a more surprising
result that CMS achieve points strictly outside $\mathcal{RD}_{VKG}$
even in scenarios where $\mathcal{RD}_{ZB}=\mathcal{RD}_{EC}$ for
every 2-description subset. Moreover, we will show that CMS outperforms
VKG for a Gaussian source under MSE, a setting for which it is well
known that the EC and ZB regions coincide.\end{rem}
\begin{proof}
Part (i) of the theorem is a straightforward corollary of Theorem
\ref{thm:main}. It follows directly by setting $V_{\mathcal{K}}=\Phi$, 
$\forall\mathcal{K}$ such that $|\mathcal{K}|<L$ in $\mathcal{RD}_{CMS}$
(where $\Phi$ denotes a constant). Substituting in (\ref{eq:rate_condition_thm}),
and setting the auxiliary rate $R_{\mathcal{L}}^{''}=I(X;V_{\mathcal{L}})$,
we get the same conditions as (\ref{eq:VKG}). 

Part (ii) of the theorem, in fact, follows directly from Zhang and
Berger's result. To see this, without loss of generality, let descriptions
$\{1,2\}$ be the 2-description subset for which ZB achieves points
outside EC region. Let us consider a cross-section of the $L-$channel
MD problem where we only consider $R_{1}$, $R_{2}$, $D_{1}$, $D_{2}$
and $D_{12}$ and set the remaining rates to $0$ while ignoring all
the remaining distortion constraints. Clearly, the CMS scheme leads
to the ZB region as the common layer codeword $V_{12}$ is sent as
part of both descriptions 1 and 2. However, observe that, VKG leads
to EC as $V_{\mathcal{L}}$ must be set to constant to ensure $R_{3}=\cdots=R_{L}=0$.
Although this example is sufficient to prove Part (ii) of the theorem,
it is arguable that this particular cross-section of the $L-$channel
MD setup is degenerate and hence this proof has little value in practice.
We therefore provide a more general proof where all the descriptions
carry non-trivial information and all rates $R_{1},\ldots,R_{L}$
are greater than $0$.

We prove (ii) for $L=3$. Note that once we prove that the CMS scheme
achieves a strictly larger region for some $L=l>2$, then it must
be true for all $L\geq l$. Consider a 3-descriptions MD problem where
descriptions $\{1,2\}$ form the 2-description subset for which ZB
achieves points outside EC region. We denote the VKG and CMS achievable
regions by $\mathcal{RD}_{VKG}^{3}$ and $\mathcal{RD}_{CMS}^{3}$,
respectively. We now consider a particular cross-section of these
regions where we only constrain $D_{1},D_{2},D_{12}$ and $D_{13}$.
We remove the constraints on all other distortions, i.e., we allow
$D_{3},D_{23}$ and $D_{123}$ to be $\infty$. Equivalently, we can
think of a 3 descriptions MD problem with a particular channel failure
pattern, wherein only one of the following sets of descriptions can
reach the decoder reliably: $(1,2,\{1,2\},\{1,3\})$ as shown in Fig.
\ref{fig:3_des-1}. We denote the set of all achievable points for
this setup using VKG and CMS by $\tilde{\mathcal{RD}}_{VKG}^{3}$
and $\tilde{\mathcal{RD}}_{CMS}^{3}$, respectively. 

\begin{figure}
\centering\includegraphics[scale=0.45]{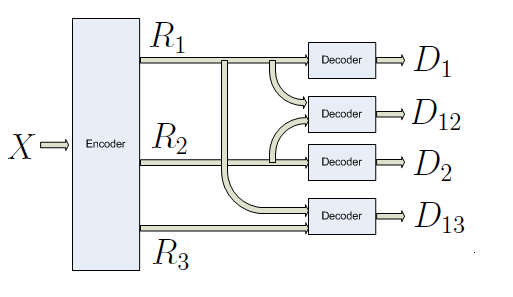}\caption{\label{fig:3_des-1}(a) Equivalent model for the cross-section considered
to prove that the CMS scheme achieves strict improvement over the
VKG scheme for a binary symmetric source under Hamming distortion
measure.}
\end{figure}

Observe that with respect to the first 2 descriptions, we have a simple
2-descriptions problem, while descriptions $\{1,3\}$ represent a
successive refinement setup. We know from the results of Zhang and
Berger that a common codeword sent among the first two descriptions
leads to a strictly better RD trade-off for these two descriptions.
However, the VKG coding scheme cannot use this common codeword unless
it is also sent as part of the third description, which is redundant
for the successive refinement framework involving descriptions $\{1,3\}$,
since description $3$ is never received without description $1$.
On the other hand, CMS allows for a unique common codeword to be sent
within every subset of the transmitted descriptions. Specifically,
it can send a common codeword which is only shared by the first two
descriptions, achieving the ZB advantage without adding redundancy
to description $3$. This argument is in fact sufficient to prove
the claim. We rewrite it formally below. 

Consider a point $P^{*}=\{R_{1},R_{2},D_{1},D_{2},D_{12}\}$ in the
ZB region that is not in the EC region. Let the rate of the common
codeword that is sent as part of both the descriptions at $P^{*}$
be denoted by $R_{12}^{*}$. Denote by $\bar{R}_{VKG}$ the following
cross-section of $\tilde{\mathcal{RD}}_{VKG}^{3}$: 
\begin{eqnarray*}
\bar{R}_{VKG}=\inf\Bigl\{ R_{3}:\{R_{1},R_{2},D_{1},D_{2},D_{12}\}=P^{*},\\
(R_{1},R_{2},R_{3},D_{1},D_{2},D_{12},D_{13})\in\tilde{\mathcal{RD}}_{VKG}^{3}\Bigr\}
\end{eqnarray*}
and for the corresponding cross-section of $\tilde{\mathcal{RD}}_{CMS}^{3}$,
the infimum $R_{3}$ rate by $\overline{R}_{CMS}(P^{*})$. As $P^{*}$
is achievable by ZB and not by EC, the constraint $\{R_{1},R_{2},D_{1},D_{2},D_{12}\}=P^{*}$
forces a common codeword to be sent in descriptions 1 and 2. To use
this common codeword, VKG requires it to be sent as part of description
3 as well leading to decoder $\{1,3\}$ receiving the same codeword
twice. However, CMS sends this common codeword only as part of descriptions
1 and 2 leading to a smaller rate for the third description. Hence
it follows that $\overline{R}_{CMS}$ is smaller than $\overline{R}_{VKG}$
by at least the rate of this common codeword which is equal to $R_{12}^{*}$.
It can be verified from \cite{ZB} that for a Binary symmetric source
under Hamming distortion measure:
\begin{eqnarray}
\overline{R}_{VKG}-\overline{R}_{CMS} & \geq & 1-H_{b}(0.30585)\nonumber \\
 & = & 0.1117\,\,\mbox{bits}
\end{eqnarray}
where $H_{b}(\cdot)$ denotes the binary entropy function. Hence,
we have shown that the CMS region is strictly larger than VKG region
whenever there exists a 2-descriptions subset for which ZB outperforms
EC. Moreover, the gap between CMS and VKG regions is at least as large
as the gap between the EC and the ZB regions for the corresponding
2-descriptions subset.
\end{proof}

\subsection{General Sources and Distortion Measures\label{sec:Strict-Improvement2}}

It is clear from Theorem \ref{thm:ZB_imp} that for any given distribution
for $X$ and distortion measures, if there exists a 2-description
subset such that $\mathcal{RD}_{EC}\subset\mathcal{RD}_{ZB}$, then
CMS strictly outperforms VKG. In this section we show a more surprising
result that the common layer codewords in CMS play a critical role
in achieving a strictly larger region for a fairly general (and larger)
class of source distributions and distortion measures. What makes
it particularly interesting is the fact that under MSE, a Gaussian
source ($X\sim\mathcal{N}(0,1)$) belongs to this class, i.e., we
will show that CMS achieves strictly larger RD region compared to
VKG for the $L-$channel quadratic Gaussian MD problem $\forall L\geq3$.
This result is in striking contrast to the corresponding results for
the 2-descriptions setting. It follows from Ozarow's results in \cite{Ozarow} that
for a Gaussian source under MSE, the EC coding scheme achieves the
complete RD region, i.e., sending a common codeword among the two
descriptions does not provide any improvement in the RD region ($\mathcal{RD}_{ZB}(\mathcal{N}(0,1))=\mathcal{RD}_{EC}(\mathcal{N}(0,1))$).
This result has lead to a natural belief that common codewords do
not play a role in the $L-$channel quadratic Gaussian MD problem
and all the explicit characterizations for the achievable regions
in the past make no use of common random variables (see for example, \cite{VKG,wang_Vishwanath,Tian_Chen_symmetric_K_descriptions,Jun_Chen_ind_central,Asymmetric_MD}).
Surprisingly, in the $L-$descriptions framework, the common codewords
in CMS can be used constructively for better coordination between
the descriptions leading to achievable points outside the VKG region.

We begin by defining a set of random variables, denoted by $\mathcal{Z}_{ZB}$,
that plays a critical role in the subsequent theorems. Let us consider
a two descriptions MD setup. For any given distortion measures, $d_{1},d_{2},d_{12}$,
we say that $X\in\mathcal{Z}_{ZB}(d_{1},d_{2},d_{12})$, if there
exists an operating point $(R_{1},R_{2},D_{1},D_{2},D_{12})$ that
belongs to $\mathcal{RD}_{ZB}$, but \textit{cannot} be achieved by
an `independent quantization' mechanism using the ZB coding scheme.
A formal definition of $\mathcal{Z}_{ZB}$ is given below.


\begin{defn}
\textbf{\label{definition:Defn_ZZB}$\mathcal{Z}_{ZB}$}: Let $\epsilon\geq0$.
Let us denote $\mathcal{RD}_{ZB}^{IQ}(\epsilon)$ to be the rate-distortion region achievable by the ZB coding scheme when the closure in (\ref{eq:ZB})
if defined only over joint distributions for the auxiliary random
variables satisfying the following conditions\footnote{The superscript of `IQ' in $\mathcal{RD}_{ZB}^{IQ}(\epsilon)$ refers to `independent quantization'}:

\begin{eqnarray}
I(U_{1};U_{2}|X,V_{12}) & < & \epsilon\nonumber \\
E\left[d_{\mathcal{K}}(X,\psi_{\mathcal{K}}(U_{\mathcal{K}}))\right] & \leq & D_{\mathcal{K}},\,\,\mathcal{K}\in\{1,2,12\}\nonumber \\
I(U_{12};X|U_{1},U_{2},V_{12}) & < & \epsilon\label{eq:ZZB-1}
\end{eqnarray}
We say that $X\in\mathcal{Z}_{ZB}(d_{1},d_{2},d_{12})$, if: 
\[
\lim_{\epsilon\rightarrow0}\mathcal{RD}_{ZB}^{IQ}(\epsilon)\subset\mathcal{RD}_{ZB}
\]
i.e., if $\lim_{\epsilon\rightarrow0}\mathcal{RD}_{ZB}^{IQ}(\epsilon)$
is strictly subsumed in $\mathcal{RD}_{ZB}$. \end{defn}
\begin{rem}
Note that if $\mathcal{RD}_{ZB}^{IQ}(0)\subset\mathcal{RD}_{ZB}$,
then clearly $X\in\mathcal{Z}_{ZB}(d_{1},d_{2},d_{12})$, i.e., if
there is a strict suboptimality in the ZB region when the closure
in (\ref{eq:ZB}) is defined only over joint distributions for the
auxiliary random variables satisfying:
\begin{eqnarray}
U_{1}\leftrightarrow & (X,V_{12}) & \leftrightarrow U_{2}\nonumber \\
E\left[d_{\mathcal{K}}(X,\psi_{\mathcal{K}}(U_{\mathcal{K}}))\right] & \leq & D_{\mathcal{K}},\,\,\mathcal{K}\in\{1,2,12\}\nonumber \\
I(U_{12};X|U_{1},U_{2},V_{12}) & = & 0\label{eq:ZZB-1-1}
\end{eqnarray}
then $X\in\mathcal{Z}_{ZB}(d_{1},d_{2},d_{12})$. Observe that the
last constraint $I(U_{12};X|U_{1},U_{2},V_{12})=0$ can be replaced
with $U_{12}=f(U_{1},U_{2},V_{12})$, without any loss of optimality. This follows from the fact that for any given joint distribution $P(X,U_{1},U_{2},V_{12},U_{12})$ that satisfies all the conditions in (\ref{eq:ZZB-1-1}), it is possible to construct a conditional distribution as follows:

\begin{eqnarray}
&Q(U_{12}|U_{1},U_{2},V_{12})=&\nonumber\\
&\begin{cases}
1 & \mbox{if }U_{12}=\arg\min_{u_{12}}E\left[d_{12}(X,u_{12})\Bigl|U_{1},U_{2},V_{12}\right]\\
0 & \mbox{otherwise}
\end{cases} &
\end{eqnarray}The joint distribution $P(X,U_1,U_2,V_{12})Q(U_{12}|U_{1},U_{2},V_{12})$ achieves the smallest distortion for $D_{12}$. Therefore, it follows that the constraint  $I(U_{12};X|U_{1},U_{2},V_{12})=0$ can be replaced
with $U_{12}=f(U_{1},U_{2},V_{12})$, without any loss of optimality. This simplified definition of $\mathcal{Z}_{ZB}$ will be used in the proof of Theorem \ref{thm:General_CMS-1}.
\end{rem}
We will show in Theorem \ref{thm:General_CMS-2} that $\forall X\in\mathcal{Z}_{ZB}(d_{1},d_{2},d_{12})$,
$\mathcal{RD}_{VKG}\subset\mathcal{RD}_{CMS}$. We will later provide
more intuition on the underlying relation between $\mathcal{Z}_{ZB}(d_{1},d_{2},d_{12})$
and the reason for strict improvement of CMS over VKG. Hereafter,
we will often drop the parenthesis and abbreviate $\mathcal{Z}_{ZB}(d_{1},d_{2},d_{12})$
by $\mathcal{Z}_{ZB}$, whenever the distortion measures are obvious. 

\begin{figure}
\centering\includegraphics[scale=0.4]{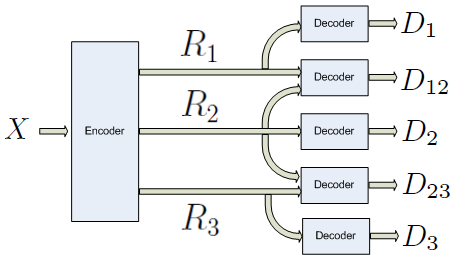}

\caption{The cross-section that we consider to prove that CMS achieves points
outside the VKG region for a general class of source and distortion
measures. CMS achieves the the complete RD region for this setup for
several distortion regimes for the quadratic Gaussian MD problem.
\label{fig:3_des_new-1}}
\end{figure}

Before stating the result we describe the particular cross-section
of the RD region that we will use to prove strict improvement in Theorem
\ref{thm:General_CMS-2}. Consider a 3-descriptions MD setup for a
source $X$ wherein we impose constraints only on distortions $(D_{1},D_{2},D_{3},D_{12},D_{23})$
and set the limit on the rest of the distortions, $(D_{13},D_{123})$
to $\infty$. This cross-section is schematically shown in Fig. \ref{fig:3_des_new-1}.
To illustrate the gains underlying CMS, let us further restrict to
the setting where we impose $D_{1}=D_{3}$ and $D_{12}=D_{23}$. We
also assume that the distortion measures $d_{1}(\cdot,\cdot)$ and
$d_{12}(\cdot,\cdot)$ are same as the distortion measures $d_{3}(\cdot,\cdot)$
and $d_{23}(\cdot,\cdot)$, respectively. The points in this cross-section,
achievable by VKG and CMS, are denoted by $\overline{\mathcal{RD}}_{VKG}(X)$
and $\overline{\mathcal{RD}}_{CMS}(X)$, respectively. We note that
the symmetric setting is considered \textit{only} for ease of understanding
the proof. The arguments can be easily extended to the asymmetric
framework. 

This particular symmetric cross-section of the 3-descriptions MD problem
is equivalent to the corresponding 2-descriptions problem, in the
sense that, one could use any coding scheme to generate bit-streams
for descriptions 1 and 2, respectively. Description 3 would then carry
a replica (exact copy) of the bits sent in description 1. Due to the
underlying symmetry in the problem setup, the distortion constraints
at all the decoders are satisfied. Hence an achievable region based
on the ZB coding scheme can be derived as follows. Let $(G_{12},F_{1},F_{2},F_{12})$
be any random variables jointly distributed with $X$ and taking values
over arbitrary finite alphabets. Then the following RD-region is achievable
for which there exist functions $(\psi_{1},\psi_{2},\psi_{12})$ such
that $R_{1}=R_{3}$, $D_{1}=D_{3}$, $D_{12}=D_{23}$ and: 
\begin{eqnarray}
R_{1} & \geq & I(X;F_{1},G_{12})\nonumber \\
R_{2} & \geq & I(X;F_{2},G_{12})\nonumber \\
R_{1}+R_{2} & \geq & 2I(X;G_{12})+H(F_{1}|G_{12})+H(F_{2}|G_{12})\nonumber \\
 &  & +H(F_{12}|F_{1},F_{2},G_{12})\nonumber \\
 &  & -H(F_{1},F_{2},F_{12}|X,G_{12})\\
D_{\mathcal{K}} & \geq & E\left[d_{\mathcal{K}}(X,\psi_{\mathcal{K}}(F_{\mathcal{K}}))\right],\,\,\mathcal{K}\subseteq\{1,2\},\mathcal{K}\neq\emptyset\nonumber\label{eq:RD(X)-1}
\end{eqnarray}
The closure of achievable RD-tuples over all random variables $(G_{12},F_{1},F_{2},F_{12})$
is denoted by $\overline{\mathcal{RD}}(X)$. In the following theorem,
we will show that $\overline{\mathcal{RD}}(X)\subseteq\overline{\mathcal{RD}}_{CMS}(X)$.
We also show that the VKG coding scheme \textit{cannot} achieve the
above RD region, i.e., $\overline{\mathcal{RD}}_{VKG}(X)\subset\overline{\mathcal{RD}}(X)$,
if $X\in\mathcal{Z}_{ZB}(d_{1},d_{2},d_{12})$. Throughout proof of
Theorem \ref{thm:General_CMS-2}, we use the notations $\overline{\mathcal{R}}_{VKG}(X,D_{1},D_{2},D_{12})$,
$\overline{\mathcal{R}}_{CMS}(X,D_{1},D_{2},D_{12})$ and $\overline{\mathcal{R}}(X,D_{1},D_{2},D_{12})$
to denote the rate-region cross-sections of $\overline{\mathcal{RD}}_{VKG}(X)$,
$\overline{\mathcal{RD}}_{CMS}(X)$ and $\overline{\mathcal{RD}}(X)$
at distortions $(D_{1},D_{2},D_{1},D_{12},D_{12})$, respectively.
We note that in Theorem \ref{thm:General_CMS-2}, we focus only on
the 3-descriptions setting. However, the results can be easily extended
to the general $L-$descriptions scenario. Also note that $\overline{\mathcal{RD}}_{CMS}(X)$
could be strictly larger than $\overline{\mathcal{RD}}(X)$, in general. 
\begin{thm}
\label{thm:General_CMS-2}(i) For the setup shown in Fig. \ref{fig:3_des_new-1}
the CMS scheme achieves $\overline{\mathcal{RD}}(X)$, i.e., $\overline{\mathcal{RD}}(X)\subseteq\overline{\mathcal{RD}}_{CMS}(X)$.

(ii) If $X\in\mathcal{Z}_{ZB}(d_{1},d_{2},d_{12})$,\textup{ }then
there exists points in $\overline{\mathcal{RD}}(X)$ that \textbf{cannot}
be achieved by the VKG encoding scheme, i.e., \textup{$\overline{\mathcal{RD}}_{VKG}\subset\overline{\mathcal{RD}}(X)$}, \end{thm}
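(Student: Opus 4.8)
For part (i), the plan is to exhibit an explicit assignment of the $2^{4}-3-2=11$ CMS auxiliary variables that reproduces the replication scheme behind $\overline{\mathcal{RD}}(X)$ and then to read the rate region off Theorem \ref{thm:main}. Given a point of $\overline{\mathcal{RD}}(X)$ realized by $(G_{12},F_1,F_2,F_{12})$, I set the fully shared variable $V_{123}=G_{12}$, the pair-shared variable $V_{13}=F_1$, and $U_2=F_2$, $U_{12}=U_{23}=F_{12}$, collapsing every remaining variable to a constant: $V_{12}=V_{23}=\Phi$ and $U_{13}=U_{123}=\Phi$. The one essential trick is to take the base variables of the two symmetric descriptions to be \emph{deterministic copies} of the pair-shared codeword, $U_{1}=U_{3}=V_{13}=F_{1}$; since $V_{13}$ is a single codebook whose index is transmitted in both descriptions $1$ and $3$, this makes $U_1$ and $U_3$ the identical received sequence at no private rate ($R_1'=R_3'=0$), which is precisely the replication of description $1$ into description $3$. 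The five constrained decoders then satisfy their targets by construction: $\psi_1(U_1)=\psi_1(F_1)$, $\psi_3(U_3)=\psi_1(F_1)$, $\psi_2(U_2)=\psi_2(F_2)$, and $\psi_{12}(U_{12})=\psi_{23}(U_{23})=\psi_{12}(F_{12})$ meet $D_1,D_1,D_2,D_{12},D_{12}$. Finally I substitute into Theorem \ref{thm:main}: with $R_{123}''=I(X;G_{12})$, $R_{13}''=I(X;F_1|G_{12})$, $R_2'=I(X;F_2|G_{12})$ one gets $R_1=R_3=R_{13}''+R_{123}''=I(X;F_1,G_{12})$ and $R_2=R_2'+R_{123}''=I(X;F_2,G_{12})$, and a direct evaluation of $\alpha(\mathcal{Q})$ on the subsets $\mathcal{Q}$ built from $\{2\}$, $\{1,3\}$ and $\{1,2\}$ collapses (\ref{eq:aux_rate_cond_thm}) to exactly the three inequalities defining $\overline{\mathcal{RD}}(X)$ (in particular the cross term $I(F_1;F_2|G_{12})$ emerges just as in the ZB sum rate). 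This yields $\overline{\mathcal{RD}}(X)\subseteq\overline{\mathcal{RD}}_{CMS}(X)$.

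For part (ii), the plan is a proof by contradiction that isolates the single operation CMS can perform and VKG cannot: sharing the description-$1$ codeword with description $3$ through $V_{13}$ \emph{without} also routing it through description $2$. VKG owns only the common variable $V_{\mathcal{L}}=V_{123}$, so to place $F_1$-type information in both descriptions $1$ and $3$ it must use two separate base codebooks $U_1,U_3$. The first step is to note the binding structural constraint: although $D_{13}=\infty$, the VKG rate inequality (\ref{eq:VKG_rate}) for $\mathcal{K}=\{1,3\}$ still holds and, minimized over the free refinement by $U_{13}=\Phi$, reads $R_1+R_3\ge 2I(X;V_{123})+I(X;U_1,U_3|V_{123})+I(U_1;U_3|V_{123})$. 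A short computation shows that at the boundary rates $R_1=R_3$ (where the individual-rate slacks vanish) this forces $I(U_1;U_3|X,V_{123})=0$, i.e. the two base descriptions are conditionally independent given $(X,V_{123})$. I would then use $X\in\mathcal{Z}_{ZB}$ to pick a Pareto point $P^{*}\in\overline{\mathcal{RD}}(X)$ that is ZB-achievable but lies outside $\lim_{\epsilon\to0}\mathcal{RD}_{ZB}^{IQ}(\epsilon)$, and suppose VKG achieves it; by symmetry the realization may be taken invariant under $1\leftrightarrow3$. The marginals $(V_{123},U_1,U_2,U_{12})$ and $(V_{123},U_3,U_2,U_{23})$ are then two ZB realizations of the same non-IQ point, so each must carry strictly positive correlated-quantization content, $I(U_1;U_2|X,V_{123})+I(X;U_{12}|U_1,U_2,V_{123})>0$ and likewise for the pair $(2,3)$, while both couplings are mediated by the \emph{single} variable $U_2$ and $I(U_1;U_3|X,V_{123})=0$. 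The target is to show that this configuration cannot meet the pairwise and full-set inequalities of (\ref{eq:VKG_rate}) at rate $(R_1^{*},R_2^{*},R_1^{*})$, the contradiction being that it would collapse to an independent-quantization realization of $P^{*}$.

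The step I expect to be the main obstacle is exactly this last reduction in full generality: extracting, from a hypothetical symmetric VKG realization obeying $I(U_1;U_3|X,V_{123})=0$, an honest ZB realization of $P^{*}$ with $I(U_1;U_2|X,V_{123})=I(X;U_{12}|U_1,U_2,V_{123})=0$. This is where the gap between $L=3$ and $L=2$ lives, since the $\{1,3\}$ inequality and the shared $U_2$ have no counterpart at $L=2$ (which is why ZB can legitimately sit outside independent quantization there). For the headline special case of a Gaussian source under MSE the obstruction is concrete and finite dimensional: one may restrict to jointly Gaussian auxiliaries with $V_{123}=\Phi$, write $U_i=X+N_i$, and observe that the symmetric Ozarow-optimal central reconstruction forces $\rho(N_1,N_2)=\rho(N_2,N_3)=\rho^{*}$ whereas the $\{1,3\}$ inequality forces $\rho(N_1,N_3)=0$; positive semidefiniteness of the noise covariance then demands $2(\rho^{*})^{2}\le 1$, which fails precisely in the high-redundancy (small $D_{12}$) regime where Ozarow's optimal $|\rho^{*}|>1/\sqrt2$. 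Any such point belongs to $\overline{\mathcal{RD}}(X)$ (it is a genuine two-description Ozarow point replicated into description $3$) but not to $\overline{\mathcal{RD}}_{VKG}(X)$, establishing the strict inclusion for the quadratic Gaussian problem and, via the same correlation-budget mechanism, for every $X\in\mathcal{Z}_{ZB}$.
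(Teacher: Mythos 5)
Your part (i) is correct and is essentially identical to the paper's proof: the assignment $V_{123}=G_{12}$, $V_{13}=F_{1}$, $U_{1}=U_{3}=V_{13}$, $U_{2}=F_{2}$, $U_{12}=U_{23}=F_{12}$, all else constant, with $R_{123}''=I(X;V_{123})$, recovers (\ref{eq:RD(X)-1}). Part (ii) is where there is a genuine gap, and you have named it yourself: you cannot extract, from a hypothetical VKG realization satisfying $I(U_{1};U_{3}|X,V_{123})\approx 0$, an independent-quantization ZB realization of the target point for the pair $(1,2)$. Two further defects compound this. First, the symmetrization step (``the realization may be taken invariant under $1\leftrightarrow 3$'') is unjustified: the achievable region is a union over joint distributions, and a rate point achieved by an asymmetric $P$ need not be achieved by any symmetric one — time-sharing symmetrizes rate vectors, not realizations, and the single-letter information inequalities you need must hold for one fixed $P$. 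Second, the Gaussian fallback does not prove the theorem as stated, which concerns \emph{every} $X\in\mathcal{Z}_{ZB}$; and even as a proof of the Gaussian instance it silently assumes that one may restrict to jointly Gaussian auxiliaries with $V_{123}=\Phi$, which is exactly the nontrivial step the paper isolates (it is the content of Appendix B, needed for Theorem \ref{thm:General_CMS-1}); the positive-semidefiniteness ``correlation budget'' has no analogue for general sources and distortion measures, so the closing claim that the same mechanism works for every $X\in\mathcal{Z}_{ZB}$ is unsupported.

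The paper closes precisely the step you were stuck on, by a different and purely variational device. It writes both quantities as infima of explicit functionals over distributions, with the CMS constraint set (\ref{eq:CMS_13_constraints}) a \emph{subset} of the VKG constraint set (\ref{eq:VKG_13_constraints}):
\begin{align}
R_{VKG}^{*}(\epsilon)&=\inf\Bigl\{ I(X;V_{123})+I(U_{2};U_{1},U_{3},X|V_{123})\nonumber\\
&\quad+I(X;U_{12}|V_{123},U_{1},U_{2})+I(X;U_{23}|V_{123},U_{2},U_{3})\Bigr\},\nonumber\\
R_{CMS}^{*}(\epsilon)&=\inf\Bigl\{ I(X;V_{123})+I(U_{2};U_{1},X|V_{123})\nonumber\\
&\quad+I(X;U_{12}|V_{123},U_{1},U_{2})\Bigr\}.\nonumber
\end{align}
Assuming the limits coincide, one plugs a near-optimal VKG distribution $P$ \emph{itself} into the CMS objective (legitimate by the constraint containment); the difference of the two objectives at this common $P$ is $I^{P}(U_{2};U_{3}|V_{123},U_{1},X)+I^{P}(X;U_{23}|V_{123},U_{2},U_{3})$, so both terms must be $<\epsilon_{1}+\lambda$. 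Combined with $I^{P}(U_{1};U_{3}|X,V_{123})<\epsilon$ (your step, which matches the paper's), this forces $I^{P}(U_{2};U_{3}|X,V_{123})<\epsilon+\epsilon_{1}+\lambda$. Hence it is the $(2,3)$ marginal $(V_{123},U_{2},U_{3},U_{23})$ of $P$ — not the $(1,2)$ pair you were trying to use — that becomes a near-independent-quantization ZB realization meeting $(D_{1},D_{2},D_{12})$, and the $\epsilon$-based Definition \ref{definition:Defn_ZZB} of $\mathcal{Z}_{ZB}$, together with the corner-point assumption (\ref{eq:corner_point_assumption}), is designed exactly so that ``near'' suffices: no exact IQ realization ever needs to be extracted. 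This gives $\lim_{\epsilon\rightarrow0}R_{VKG}^{*}(\epsilon)>\lim_{\epsilon\rightarrow0}R^{*}(\epsilon)\geq\lim_{\epsilon\rightarrow0}R_{CMS}^{*}(\epsilon)$, the desired contradiction, with no symmetrization and for arbitrary $X\in\mathcal{Z}_{ZB}$.
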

\begin{rem}
It directly follows from (i) and (ii) that $\mathcal{RD}_{VKG}\subset\mathcal{RD}_{CMS}$
for the $L-$channel MD problem $\forall L\geq3$, if $X\in\mathcal{Z}_{ZB}$.\textit{ }

\textit{Intuition}: We first provide an intuitive argument to justify
the claim. Due to the underlying symmetry in the setup the CMS scheme
introduces common layer random variables $V_{123}=G_{12}$ and $V_{13}=F_{1}$.
It then sends the codeword of $V_{13}$ is both descriptions 1 and
3 (i.e., $U_{1}=U_{3}=V_{13}$). Hence it is sufficient for the encoder
to generate enough codewords of $U_{2}=F_{2}$ (conditioned on $V_{123}$)
to maintain joint typicality with the codewords of $V_{13}$. However,
the VKG scheme is forced to set the common layer random variable $V_{13}$
to a constant. Thus, in this case, the encoder needs to generate enough
number of codewords of $U_{2}$ (and $U_{12}$) so as to maintain
joint typicality individually with the codewords of $U_{1}$ and $U_{3}$,
which are now generated independently conditioned on $V_{123}$. It
is possible to show that this entails some excess rate on $U_{2}$,
unless $(U_{1},U_{2},U_{3})$ are pairwise independent conditioned
on $V_{123}$. However, if $X\in\mathcal{Z}_{ZB}$, then enforcing
independence of $U_{1}$ and $U_{2}$ conditioned on $V_{123}$ leads
to a strictly smaller rate-distortion region. Therefore, $\forall X\in\mathcal{Z}_{ZB}$,
the VKG scheme leads to a strictly smaller rate-distortion region
compared to the CMS scheme. \end{rem}
\begin{proof}
Part (i) of the theorem is straightforward to prove. We set,
\begin{eqnarray}
V_{123} & = & G_{12}\nonumber \\
V_{13} & = & F_{1}\nonumber \\
U_{2} & = & F_{2}\nonumber \\
U_{12}=U_{23} & = & F_{12}\nonumber \\
U_{1}= & U_{3} & =V_{13}\label{eq:Thm3_1}
\end{eqnarray}
and the rest of the random variables in Theorem \ref{thm:main} to
constants. It is easy to verify that the rate-distortion region given
in (\ref{eq:RD(X)-1}) is achievable by the CMS scheme using the above
joint distribution for the auxiliary random variables and by choosing
the auxiliary rate $R_{123}^{''}=I(X;V_{123})$.

Next, let $X\in\mathcal{Z}_{ZB}(d_{1},d_{2},d_{12})$ with respect
to the given distortion measures, i.e., there exists a strict suboptimality
in the ZB region when the closure of rates is defined only over joint
densities for the auxiliary random variables satisfying (\ref{eq:ZZB-1}).
For ease of understanding the proof, we make a simplifying assumption
that there exists at least one `corner point' in the ZB region that
is not achievable using joint densities satisfying (\ref{eq:ZZB-1}).
Specifically, let the distortions be $(D_{1},D_{2},D_{12})$, respectively,
at the three decoders. We assume that:
\begin{eqnarray}
 &  & \inf_{R_{2}:\{R_{1}=R_{X}(D_{1})\}}R_{ZB}(D_{1},D_{2},D_{12})\label{eq:corner_point_assumption}\\
 &  & <\inf_{R_{2}:\{R_{1}=R_{X}(D_{1})\}}R_{ZB}^{IQ}(D_{1},D_{2},D_{12})\nonumber 
\end{eqnarray}
i.e., if $R_{1}=R_{X}(D_{1})$, then the infimum over all $R_{2}$
required to achieve $(D_{1},D_{2},D_{12})$, using joint densities
satisfying (\ref{eq:ZZB-1}), is strictly larger than the infimum
over all $R_{2}$ achievable using ZB. However, it is important to
note that the proof can be easily extended to the general case of
strict suboptimality occurring at any intermediate point. We will
briefly discuss the extension towards the end of the proof. 

Towards proving (ii), we consider one particular boundary point of
(\ref{eq:RD(X)-1}) and show that this cannot be achieved by the VKG
encoding scheme. Let $\epsilon>0$ and $D_{1},D_{2}$ and $D_{12}$
be fixed. Define the following quantity:
\begin{eqnarray}
R_{VKG}^{*}(D_{1},D_{2},D_{12},\epsilon)=\inf \Bigl\{ R_{2}:R_{1}<R_{X}(D_{1})+\epsilon,\label{eq:Thm3_2}\\
R_{3}<R_{X}(D_{1})+\epsilon,\nonumber \\
(R_{1},R_{2},R_{3})\in\overline{\mathcal{R}}_{VKG}(X,D_{1},D_{2},D_{12}) \Bigl\} \nonumber 
\end{eqnarray}
The corresponding quantity defined for the CMS region is given by:
\begin{eqnarray}
R_{CMS}^{*}(D_{1},D_{2},D_{12},\epsilon)=\inf \Bigl\{ R_{2}:R_{1}<R_{X}(D_{1})+\epsilon,\label{eq:Thm3_2-1}\\
R_{3}<R_{X}(D_{1})+\epsilon,\nonumber \\
(R_{1},R_{2},R_{3})\in\overline{\mathcal{R}}_{CMS}(X,D_{1},D_{2},D_{12}) \Bigl\} \nonumber 
\end{eqnarray}
We will show that:
\[
\lim_{\epsilon\rightarrow0}R_{CMS}^{*}(\epsilon)<\lim_{\epsilon\rightarrow0}R_{VKG}^{*}(\epsilon)
\]
Similarly, we use the notation $R^{*}(\epsilon)$ to denote the same
quantity defined over $\overline{\mathcal{R}}(X,D_{1},D_{2},D_{12})$,
i.e.,
\begin{eqnarray}
R^{*}(D_{1},D_{2},D_{12},\epsilon)=\inf R_{2}: \Bigl\{ R_{1}<R_{X}(D_{1})+\epsilon,\label{eq:Thm3_2-1-1}\\
R_{3}<R_{X}(D_{1})+\epsilon,\nonumber \\
(R_{1},R_{2},R_{3})\in\overline{\mathcal{R}}(X,D_{1},D_{2},D_{12}) \Bigl\} \nonumber 
\end{eqnarray}

From CMS region characterization in Theorem \ref{thm:main}, $R_{CMS}^{*}(\epsilon)$
is given by the solution to the following optimization problem:
\begin{eqnarray}
R_{CMS}^{*}(\epsilon) & = & \inf\,\, \Bigl\{ I(X;V_{123})+I(U_{2};X,U_{1}|V_{123})\nonumber \\
 &  & +I(U_{12};X|V_{123},U_{1},U_{2}) \Bigl\} \label{eq:Thm3_3}
\end{eqnarray}
where the infimum is over all joint distributions $P(V_{123},U_{1},U_{2},U_{12}|X)$,
and conditional distributions $P(V_{123},U_{1}|X)$ for which there
exists a function $\psi_{1}(\cdot)$ such that:
\begin{eqnarray}
I(X;V_{123},U_{1}) & < & R_{X}(D_{1})+\epsilon\nonumber \\
E\left[d_{1}(X,\psi_{1}(U_{1}))\right] & \leq & D_{1}\label{eq:CMS_13_constraints}
\end{eqnarray}
i.e., $(V_{123},U_{1})$ achieves a close-to-optimal reconstruction
of $X$ at $D_{1}$ and $P(U_{12},U_{2}|X,U_{1},V_{123})$ is any
distribution for which there exists functions $\psi_{2}(\cdot)$ and
$\psi_{12}(\cdot)$ satisfying the distortion constraints for $D_{2}$
and $D_{12}$, respectively. 

We next specialize and restate $\overline{\mathcal{RD}}_{VKG}(X)$
for the considered cross-section. Let $(V_{123},U_{1},U_{2},U_{3},U_{12},U_{23},U_{13},U_{123})$
be any random variables jointly distributed with $X$ taking values
on arbitrary alphabets. Then, the following rate-distortion tuples
are achievable by the VKG scheme for which there exist functions $\psi_{1}(\cdot),\psi_{2}(\cdot),\psi_{3}(\cdot),\psi_{12}(\cdot),\psi_{23}(\cdot)$,
such that:
\begin{eqnarray}
R_{i} & \geq & I(X;U_{i},V_{123})\,\, i\in\{1,2,3\}\nonumber \\
R_{1}+R_{2} & \geq & 2I(X;V_{123})+I(U_{1};U_{2}|V_{123})\nonumber \\
 &  & +I(X;U_{1},U_{2},U_{12}|V_{123})\nonumber \\
R_{2}+R_{3} & \geq & 2I(X;V_{123})+I(U_{2};U_{3}|V_{123})\nonumber \\
 &  & +I(X;U_{2},U_{3},U_{23}|V_{123})\nonumber \\
R_{1}+R_{3} & \geq & 2I(X;V_{123})+I(U_{1};U_{3}|V_{123})\nonumber \\
 &  & +I(X;U_{1},U_{3},U_{13}|V_{123})\nonumber \\
R_{1}+R_{2}+R_{3} & \geq & 3I(X;V_{123})+\sum_{i=1}^{3}H(U_{i}|V_{123})\nonumber \\
 &  & +\sum_{\mathcal{K}\in\{12,23,13\}}H(U_{\mathcal{K}}|\{U\}_{\{k\in\mathcal{K}\}},V_{123})\nonumber \\
 &  & +H(U_{123}|\{U\}_{\{1,2,3,12,13,23\}},V_{123})\nonumber \\
 &  & -H(\{U\}_{\{1,2,3,12,13,23,123\}}|X,V_{123})\label{eq:RD_VKG_bar-1}
\end{eqnarray}
\begin{eqnarray}
E\left(d_{1}(X,\psi_{1}(U_{1}))\right) & \leq & D_{1}\nonumber \\
E\left(d_{2}(X,\psi_{2}(U_{2}))\right) & \leq & D_{2}\nonumber \\
E\left(d_{1}(X,\psi_{3}(U_{3}))\right) & \leq & D_{1}\nonumber \\
E\left(d_{12}(X,\psi_{12}(U_{12}))\right) & \leq & D_{12}\nonumber \\
E\left(d_{12}(X,\psi_{23}(U_{23}))\right) & \leq & D_{12}\label{eq:RD_VKG_Dist}
\end{eqnarray}
The closure of the above rate-distortion region over all joint distributions
leads to $\overline{\mathcal{RD}}_{VKG}$. Observe that there are
no distortion constraints imposed on $D_{13}$ and $D_{123}$. This
allows us to simplify the region further, without any loss of optimality.
First, the random variables $U_{13}$ and $U_{123}$ can be set to
constants. This is because they do not appear in any of the distortion
constraints and setting them to constants leads to a larger rate-region
for the given distribution over all other random variables. This step
simplifies the constraints on $R_{1}+R_{3}$ and $R_{1}+R_{2}+R_{3}$
to: 
\begin{eqnarray}
R_{1}+R_{3} & \geq & 2I(X;V_{123})+H(U_{1}|V_{123})\nonumber \\
 &  & +H(U_{3}|V_{123})-H(U_{1},U_{3}|X,V_{123})\nonumber \\
R_{1}+R_{2}+R_{3} & \geq & 3I(X;V_{123})+\sum_{i=1}^{3}H(U_{i}|V_{123})\nonumber \\
 &  & +\sum_{\mathcal{K}\in\{12,23\}}H(U_{\mathcal{K}}|\{U\}_{\{k\in\mathcal{K}\}},V_{123})\nonumber \\
 &  & -H(\{U\}_{\{1,2,3,12,23\}}|X,V_{123})\label{eq:Simplification_1}
\end{eqnarray}
As a next step of simplification, we restrict the closure of the region
to be taken only over joint densities, $P(V_{123},U_{1},U_{2},U_{3},U_{12},U_{23}|X)$,
that satisfy the following constraint:
\begin{eqnarray}
 & P(U_{12},U_{23}|X,V_{123},U_{1},U_{2},U_{3})=\nonumber \\
 & P(U_{12}|X,V_{123},U_{1},U_{2})P(U_{23}|X,V_{123},U_{2},U_{3})\label{eq:U_1223_Mark}
\end{eqnarray}
We note that this restriction does not lead to any loss in $\overline{\mathcal{RD}}_{VKG}(X)$
for this particular cross-section. This is because, for any given
joint distribution $Q(X,V_{123},U_{1},U_{2},U_{3},U_{12},U_{23})$,
we can construct another joint distribution that satisfies (\ref{eq:U_1223_Mark})
and leads to a larger rate-distortion region. To see this, consider
the joint distribution constructed as:
\begin{eqnarray}
 & Q(X,V_{123},U_{1},U_{2},U_{3})\times\nonumber \\
 & Q(U_{12}|X,V_{123},U_{1},U_{2})\times Q(U_{23}|X,V_{123},U_{2},U_{3})
\end{eqnarray}
which satisfies (\ref{eq:U_1223_Mark}). Observe that this joint distribution
satisfies the same distortion constraints as $Q(\cdot)$. Moreover,
it leads to the same rate constraints as $Q(\cdot)$, except for the
constraint on $R_{1}+R_{2}+R_{3}$. However, the constraint it imposes
on $R_{1}+R_{2}+R_{3}$ is weaker than that imposed by $Q(\cdot)$
and hence the rate-distortion region is larger than that achievable
by $Q(\cdot)$. Therefore, it is sufficient to consider only those
joint distributions that satisfy (\ref{eq:U_1223_Mark}) for $\overline{\mathcal{RD}}_{VKG}(X)$. 

We next impose the constraints $R_{1}<R_{X}(D_{1})+\epsilon$ and
$R_{3}<R_{X}(D_{1})+\epsilon$ in (\ref{eq:Simplification_1}). This
enforces the conditional density $P(V_{123},U_{1},U_{3}|X)$ to satisfy
the following constraints:
\begin{eqnarray}
I(X;V_{123},U_{1}) & < & R_{X}(D_{1})+\epsilon\nonumber \\
I(X;V_{123},U_{3}) & < & R_{X}(D_{1})+\epsilon\nonumber \\
E\left[d_{1}(X,\psi_{1}(V_{123},U_{1}))\right] & \leq & D_{1}\nonumber \\
E\left[d_{1}(X,\psi_{3}(V_{123},U_{3}))\right] & \leq & D_{1}\nonumber \\
I(U_{1};U_{3}|X,V_{123}) & < & \epsilon\label{eq:VKG_13_constraints}
\end{eqnarray}
where the last condition is required to satisfy the constraint on
$R_{1}+R_{3}$ in (\ref{eq:Simplification_1}). Therefore, using (\ref{eq:Simplification_1})
and (\ref{eq:U_1223_Mark}), $R_{VKG}^{*}(\epsilon)$ can be written
as the solution to the following optimization problem:
\begin{eqnarray}
R_{VKG}^{*}(\epsilon)=\inf\,\, \Bigl\{ I(X;V_{123})+I(U_{2};U_{1},U_{3},X|V_{123})\nonumber \\
+I(X;U_{12}|U_{1},U_{2},V_{123})+I(X;U_{23}|U_{2},U_{3},V_{123})\Bigl\}\label{eq:R_s_VKG}
\end{eqnarray}
where the infimum is over all conditional densities $P(V_{123},U_{1},U_{2},U_{3},U_{12},U_{23}|X)$
satisfying (\ref{eq:VKG_13_constraints}) for which there exist functions
$\psi_{2}(\cdot),\psi_{12}(\cdot),\psi_{23}(\cdot)$ satisfying the
distortion constraints in (\ref{eq:RD_VKG_Dist}). 

Summarizing what we have so far:

\begin{eqnarray}
\lim_{\epsilon\rightarrow0}R_{VKG}^{*}(\epsilon) & \leq & R_{VKG}^{*}(\epsilon)\nonumber \\
 & = & \inf\,\, \Bigl\{ I(X;V_{123})+I(U_{2};U_{1},U_{3},X|V_{123})\nonumber \\
 &  & +I(X;U_{12}|V_{123},U_{1},U_{2})\nonumber \\
 &  & +I(X;U_{23}|V_{123},U_{2},U_{3})\Bigl\}\label{eq:RVKG_CMS_1}\\
\lim_{\epsilon\rightarrow0}R_{CMS}^{*}(\epsilon) & \leq & R_{CMS}^{*}(\epsilon)\nonumber \\
 & = & \inf\,\, \Bigl\{ I(X;V_{123})+I(U_{2};U_{1},X|V_{123})\nonumber \\
 &  & +I(X;U_{12}|V_{123},U_{1},U_{2}) \Bigl\} \label{eq:RVKG_RCMS}
\end{eqnarray}
where the infimum for $R_{VKG}^{*}(\epsilon)$ is subject to (\ref{eq:VKG_13_constraints})
and the infimum for $R_{CMS}^{*}(\epsilon)$ is subject to (\ref{eq:CMS_13_constraints}).

We first note that the optimization objective for $R_{VKG}^{*}(\epsilon)$
is always greater than or equal to the optimization objective for
$R_{CMS}^{*}(\epsilon)$, for a given joint distribution over the
random variables. This follows from the following standard information
theoretic inequalities:
\begin{eqnarray*}
I(U_{2};U_{1},U_{3},X|V_{123}) & \geq & I(U_{2};U_{1},X|V_{123})\\
I(X;U_{23}|V_{123},U_{2},U_{3}) & \geq & 0
\end{eqnarray*}
Moreover, the constraints for $R_{CMS}^{*}(\epsilon)$ are a subset
of the constraints for $R_{VKG}^{*}(\epsilon)$. Therefore, for $\lim_{\epsilon\rightarrow0}R_{VKG}^{*}(\epsilon)$
to be equal to $\lim_{\epsilon\rightarrow0}R_{CMS}^{*}(\epsilon)$,
there should exist a small enough $\epsilon>0$ for every $\lambda>0$,
such that:
\begin{equation}
R_{VKG}^{*}(\epsilon)-R_{CMS}^{*}(\epsilon)<\lambda\label{eq:eps_delta_constraint}
\end{equation}
We will next show that this cannot happen if $X\in\mathcal{Z}_{ZB}(d_{1},d_{2},d_{12})$.
i.e., we will show that $R_{VKG}^{*}(\epsilon)-R_{CMS}^{*}(\epsilon)$
is lower bounded by a value that is strictly greater than $0$. 

We prove this claim by contradiction. We assume that it is possible
to find a small enough $\epsilon>0$ for every $\lambda>0$ such that
(\ref{eq:eps_delta_constraint}) holds and then arrive at a contradiction.
Let $\epsilon_{1}>0$. Let $P(V_{123},U_{1},U_{2},U_{3},U_{12},U_{23}|X)$
be any conditional distribution that satisfies the constraints in
(\ref{eq:VKG_13_constraints}) and satisfies:
\begin{eqnarray}
I^{P}(X;V_{123})+I^{P}(U_{2};U_{1},U_{3},X|V_{123})\nonumber \\
+I^{P}(X;U_{12}|V_{123},U_{1},U_{2})\nonumber \\
+I^{P}(X;U_{23}|V_{123},U_{2},U_{3})-R_{VKG}^{*}(\epsilon) & < & \epsilon_{1}
\end{eqnarray}
where the superscript $P$ has been added to emphasize that the mutual
information is with respect to the distribution $P$. The distribution
$P$ achieves the optimization objective in (\ref{eq:RVKG_CMS_1}), within $\epsilon_{1}$ of $R_{VKG}^{*}(\epsilon)$. For (\ref{eq:eps_delta_constraint})
to hold, the following condition must be satisfied by all conditional
distributions $\tilde{P}(V_{123},U_{1},U_{2},U_{12}|X)$ that satisfy
the constraints in (\ref{eq:CMS_13_constraints}): 

\begin{eqnarray}
 &  & I^{P}(X;V_{123})-I^{\tilde{P}}(X;V_{123})\nonumber \\
 &  & +I^{P}(U_{2};U_{1},U_{3},X|V_{123})-I^{\tilde{P}}(U_{2};U_{1},X|V_{123})\nonumber \\
 &  & +I^{P}(X;U_{12}|V_{123},U_{1},U_{2})-I^{\tilde{P}}(X;U_{12}|V_{123},U_{1},U_{2})\nonumber \\
 &  & +I^{P}(X;U_{23}|V_{123},U_{2},U_{3})<\epsilon_{1}+\lambda\label{eq:More_eps_delta}
\end{eqnarray}
We will next show that this leads to a contradiction if $X\in\mathcal{Z}_{ZB}(d_{1},d_{2},d_{12})$.
Towards proving that, we set $\tilde{P}(V_{123},U_{1},U_{2},U_{12}|X)=P(V_{123},U_{1},U_{2},U_{12}|X)$.
Observe that $P$ is a valid candidate for $\tilde{P}$ as the constraints
in (\ref{eq:CMS_13_constraints}) are a subset of (\ref{eq:VKG_13_constraints}).
With $\tilde{P}=P$, for (\ref{eq:More_eps_delta}) to hold, we need:
\begin{eqnarray}
I^{P}(U_{2};U_{3}|V_{123},U_{1},X) & < & \epsilon_{1}+\lambda\nonumber \\
I^{P}(X;U_{23}|V_{123},U_{2},U_{3}) & < & \epsilon_{1}+\lambda\label{eq:contra_cond}
\end{eqnarray}
First, observe that the constraint $I^{P}(U_{2};U_{3}|V_{123},U_{1},X)<\epsilon_{1}+\lambda$
implies that $H^{P}(U_{3}|V_{123},U_{1},X)<H^{P}(U_{3}|V_{123},U_{1},U_{2},X)+\epsilon_{1}+\lambda$.
However, as $P$ satisfies (\ref{eq:VKG_13_constraints}), we have
$H^{P}(U_{3}|V_{123},X)<H^{P}(U_{3}|V_{123},X,U_{1})+\epsilon$. On
substituting, we get:
\begin{equation}
H^{P}(U_{3}|V_{123},X)<H^{P}(U_{3}|V_{123},U_{1},U_{2},X)+\epsilon_{1}+\lambda+\epsilon\label{eq:contra_cond2}
\end{equation}
i.e.,
\begin{equation}
I^{P}(U_{3};U_{2},U_{1}|V_{123},X)<\epsilon_{1}+\lambda+\epsilon\label{eq:contra_cond2-1}
\end{equation}
which implies that:
\begin{equation}
I^{P}(U_{3};U_{2}|V_{123},X)<\epsilon_{1}+\lambda+\epsilon\label{eq:contra_cond2-1-1}
\end{equation}
Hence, from (\ref{eq:contra_cond}) and (\ref{eq:contra_cond2-1-1}),
for (\ref{eq:eps_delta_constraint}) to hold, we need the following
constraints to be satisfied for the joint distribution $P$ that achieves
close to optimality in (\ref{eq:RVKG_CMS_1}):
\begin{eqnarray}
I^{P}(U_{3};U_{2}|V_{123},X) & < & \epsilon_{1}+\lambda+\epsilon\nonumber \\
I^{P}(X;U_{23}|V_{123},U_{2},U_{3}) & < & \epsilon_{1}+\lambda+\epsilon\nonumber \\
E^{P}\left[d_{\mathcal{K}}(X,\psi_{\mathcal{K}}(U_{\mathcal{K}}))\right] & \leq & D_{\mathcal{K}},\,\,\mathcal{K}\in\{2,3,23\}
\end{eqnarray}
Under the limits of $\epsilon\rightarrow0$, $\epsilon_{1}\rightarrow0$
and $\lambda\rightarrow0$, if $X\in\mathcal{Z}_{ZB}(d_{1},d_{2},d_{12})$,
the above constraints imply that there is a strict sub-optimality
in the ZB region. From our assumption in (\ref{eq:corner_point_assumption}),
it immediately follows that:
\begin{equation}
\lim_{\epsilon\rightarrow0}R_{VKG}^{*}(\epsilon)>\lim_{\epsilon\rightarrow0}R^{*}(\epsilon)
\end{equation}
where $R^{*}(\epsilon)$ is defined in (\ref{eq:Thm3_2-1-1}). However,
from (i) of the theorem, it follows that:
\begin{equation}
\lim_{\epsilon\rightarrow0}R^{*}(\epsilon)\geq\lim_{\epsilon\rightarrow0}R_{CMS}^{*}(\epsilon)
\end{equation}
This leads to a contradiction as it implies that (\ref{eq:eps_delta_constraint})
is not satisfied. It therefore follows that if $X\in\mathcal{Z}_{ZB}(d_{1},d_{2},d_{12})$,
CMS achieves a strictly larger RD region compared to VKG, proving
the theorem.

We note that if the strict sub-optimality in (\ref{eq:corner_point_assumption})
exists at any other boundary point of the ZB region, the above proof
holds by changing the definitions of $R_{VKG}^{*}(\epsilon)$, $R_{CMS}^{*}(\epsilon)$
and $R^{*}(\epsilon)$ accordingly. We only consider this particular
corner point in this proof for ease of understanding. 
\end{proof}
\textbf{Discussion:} A direct consequence of the above theorem is
that, if $X\in\mathcal{Z}_{ZB}$, then the common layer codewords
of CMS are needed to achieve strict improvement in the region, i.e.,
if $X\in\mathcal{Z}_{ZB}$, $\mathcal{RD}_{VKG}\Bigr|_{V_{\mathcal{L}}=\Phi}\subseteq\mathcal{RD}_{VKG}\subset\mathcal{RD}_{CMS}$,
where $\mathcal{RD}\Bigr|_{V_{\mathcal{L}}=\Phi}$ denotes the VKG
region when the common layer random variable (denoted by $V_{\mathcal{L}}$)
is set to a constant %
\footnote{Note that setting $V_{\mathcal{L}}$ to a constant in VKG is equivalent
to setting all the common layer random variables to constants in CMS. %
}. In fact, it is possible to show that, whenever $X\in\mathcal{Z}_{EC}$,
$\mathcal{RD}_{VKG}\Bigr|_{V_{\mathcal{L}}=\Phi}\subset\mathcal{RD}_{CMS}$,
where $\mathcal{Z}_{EC}$ is similarly defined as the set of all random
variables $X$ for which there exists an operating point, with respect
to the given distortion measures, that \textit{cannot} be achieved
by an `independent quantization' mechanism using the EC coding scheme,
i.e., if there exists an operating point in the EC region that \textit{cannot}
be achieved by the EC coding scheme using a joint density for the
auxiliary random variables satisfying:
\begin{eqnarray}
P(U_{1},U_{2}|X) & = & P(U_{1}|X)P(U_{2}|X)\nonumber \\
E\left[d_{\mathcal{K}}(X,\psi_{\mathcal{K}}(U_{\mathcal{K}}))\right] & \leq & D_{\mathcal{K}}\,\,\,\mathcal{K}\in\{1,2,12\}\nonumber \\
U_{12} & = & f(U_{1},U_{2})\label{eq:ZEC}
\end{eqnarray}
where $f$ is any deterministic function%
\footnote{A formal definition of $\mathcal{Z}_{EC}$ would be in lines of the
definition of $\mathcal{Z}_{ZB}$ in Definition \ref{definition:Defn_ZZB}.
We state the simpler version here for brevity.%
}. Note that the set $\mathcal{Z}_{ZB}$ is a subset of $\mathcal{Z}_{EC}$.
It is possible to construct distributions and distortion measures
such that $\mathcal{Z}_{ZB}$ is a strict subset of $\mathcal{Z}_{EC}$,
but as the construction does not provide any useful insights into
the MD problem, we choose to omit the details. Also observe that if
$X\notin\mathcal{Z}_{EC}$, the concatenation of two independent optimal
quantizers is optimal in achieving a joint reconstruction. While this
condition could be satisfied for specific values of $D_{1},D_{2}$
and $D_{12}$, it is seldom achieved \textit{for all} values of $(D_{1},D_{2},D_{12})$.
Though such sources are of some theoretical interest, the multiple
descriptions encoding for such sources is degenerate. Hence with some
trivial exceptions, it can be asserted that the common layer codewords
in CMS can be used to achieve a strictly larger region (compared to
not using any common codewords) for all sources and distortion measures,
$\forall L\geq3$. 

We note in passing that, for the two descriptions setting, Zhang and
Berger studied an important cross-section of the problem in \cite{MD_No_Excess_Marginal_Rate}
called the ``no excess marginal rate'' setting, where $R_{1}=R_{X}(D_{1})$
and $R_{2}=R_{X}(D_{2})$. They derived upper and lower bounds on
the achievable $D_{12}$ and showed that the gap is negligible for
a binary source under Hamming distortion measure. The constraints
$R_{1}=R_{3}=R_{X}(D_{1})$ imposed in the proof of Theorem \ref{thm:General_CMS-2}
resemble the constraints imposed in \cite{MD_No_Excess_Marginal_Rate}
and hence the results in \cite{MD_No_Excess_Marginal_Rate} may seem
relevant to the setting considered in this paper. However, it is important
to note that in the proof of Theorem \ref{thm:General_CMS-2}, $R_{2}$
can be greater than $R_{X}(D_{2})$. As the interaction is only between
descriptions $\{1,2\}$ and $\{2,3\}$, the problem considered here
is not directly related to the `no excess marginal rate' case and
requires new tools to prove the results.

At a first glance, it is tempting to conclude from Ozarow's results in \cite{Ozarow}
that under MSE a Gaussian random variable belongs to both $\mathcal{Z}_{EC}$
and $\mathcal{Z}_{ZB}$. However, the formal proof is non-trivial.
We will formally show in the next section that this is indeed the
case and an independent quantization scheme (with or without a common
codeword) leads to strict suboptimality for the 2-descriptions quadratic
Gaussian MD problem.

\section{Gaussian MSE Setting\label{sec:Gaussian-MSE-Setting}}

In this section we present a series of new results for the $L-$descriptions
quadratic Gaussian MD problem. We will first show that CMS achieves
a strictly larger RD region, by proving that under MSE, a Gaussian
source belongs to $\mathcal{Z}_{ZB}$. We then use similar encoding
principles to derive the complete rate region in several asymmetric
distortion regimes. Throughout this section, we will assume that $X\sim\mathcal{N}(0,1)$
and the distortion at all the decoders is the squared error, i.e., $d(x_{1},x_{2})=(x_{1}-x_{2})^{2}$. 

Before stating the results formally, we review Ozarow's result for
the 2-descriptions MD setting. Ozarow showed that the complete region
for the 2-descriptions Gaussian MD problem can be achieved using a
`correlated quantization' scheme which imposes the following joint
distribution for $(U_{1},U_{2},U_{12})$ in the EC scheme:
\begin{eqnarray}
U_{1}=X+W_{1}\nonumber \\
U_{2}=X+W_{2}\label{eq:Ozarow_Result-1}
\end{eqnarray}
$U_{12}=E(X|U_{1},U_{2})$, where $W_{1}$ and $W_{2}$ are zero mean
Gaussian random variables independent of $X$ with covariance matrix
$K_{W_{1}W_{2}}$, and the functions $\psi_{\mathcal{K}}(U{}_{\mathcal{K}})$
are given by the respective MSE optimal estimators, e.g., $\psi_{1}(U_{1})=E\left[X|U_{1}\right]$.
The covariance matrix $K_{W_{1}W_{2}}$ is set to satisfy all the
distortion constraints. Specifically, the optimal $K_{W_{1}W_{2}}$
is given by:
\begin{equation}
K_{W_{1}W_{2}}=\left[\begin{array}{cc}
\sigma_{1}^{2} & \rho_{12}\sigma_{1}\sigma_{2}\\
\rho_{12}\sigma_{1}\sigma_{2} & \sigma_{2}^{2}
\end{array}\right]\label{eq:Ozarow_K-1}
\end{equation}
where $\sigma_{i}^{2}=\frac{D_{i}}{1-D_{i}}\,\, i\in\{1,2\}$ and
the optimal $\rho_{12}$, denoted by $\rho_{12}^{*}$, is given by
(see \cite{Zamir}):
\begin{eqnarray}
\rho_{12}^{*} & = & \begin{cases}
-\frac{\sqrt{\pi D_{12}^{2}+\gamma}-\sqrt{\pi D_{12}^{2}}}{(1-D_{12})\sqrt{D_{1}D_{2}}} & D_{12}\leq D_{12}^{max}\\
0 & D_{12}\geq D_{12}^{max}
\end{cases}\nonumber \\
\gamma & = & (1-D_{12})\Bigl[(D_{1}-D_{12})(D_{2}-D_{12})\nonumber \\
 &  & +D_{12}D_{1}D_{2}-D_{12}^{2}\Bigr]\nonumber \\
D_{12}^{max} & = & D_{1}D_{2}/(D_{1}+D_{2}-D_{1}D_{2})\nonumber \\
\pi & = & (1-D_{1})(1-D_{2})\label{eq:other_defn-1}
\end{eqnarray}
We denote the complete Gaussian-MSE $L$-descriptions region by $\mathcal{RD}_{G}^{L}$.
The characterization of $\mathcal{RD}_{G}^{2}$ is given in \cite{EGC}
(see also \cite{Zamir}) and we omit restating it explicitly here
for brevity.

\subsection{Strict Improvement for the Quadratic Gaussian Case}

Equipped with these results, we next show that CMS achieves points
outside the VKG region, by proving that a Gaussian source under MSE
belongs to $\mathcal{Z}_{ZB}$. 
\begin{thm}
\label{thm:General_CMS-1}(i) CMS achieves the \textbf{complete} RD
region for the symmetric 3-descriptions quadratic Gaussian setup shown
in Fig. \ref{fig:3_des_new-1}. 
\begin{eqnarray}
R_{1} & \geq & \frac{1}{2}\log\frac{1}{D_{1}}\nonumber \\
R_{2} & \geq & \frac{1}{2}\log\frac{1}{D_{2}}\nonumber \\
R_{1}+R_{2} & \geq & \frac{1}{2}\log\frac{1}{D_{1}D_{2}}+\delta\nonumber \\
R_{1} & = & R_{3}\nonumber \\
D_{12}=D_{23} &  & D_{3}=D_{1}\label{eq:Symm_RD}
\end{eqnarray}
where $\delta=\delta(D_{12},D_{1},D_{2})$ is defined by:
\begin{equation}
\delta=\frac{1}{2}\log\left(\frac{1}{1-(\rho_{12}^{*})^{2}}\right)\label{eq:delta_defn}
\end{equation}
where $\rho_{12}^{*}$ is defined in (\ref{eq:other_defn-1}).

(ii) The VKG encoding scheme cannot achieve all the points in the
region, i.e., \textup{$\overline{\mathcal{RD}}_{VKG}\subset\overline{\mathcal{RD}}_{CMS}$. }\end{thm}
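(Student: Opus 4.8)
The plan is to prove (i) by a matching achievability/converse pair and then deduce (ii) from Theorem \ref{thm:General_CMS-2} by showing that the Gaussian source lies in $\mathcal Z_{ZB}$. For the converse in (i), I would observe that any code for the cross-section of Fig. \ref{fig:3_des_new-1}, read off on descriptions $\{1,2\}$ alone, is a legitimate two-description code with individual distortions $(D_1,D_2)$ and central distortion $D_{12}$; Ozarow's complete two-description quadratic-Gaussian bound then yields exactly the three inequalities of (\ref{eq:Symm_RD}), with the excess term $\delta$ of (\ref{eq:delta_defn}). The symmetric pair $\{2,3\}$ produces the same bounds and is redundant.

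For achievability I would invoke Theorem \ref{thm:General_CMS-2}(i), which gives $\overline{\mathcal{RD}}(X)\subseteq\overline{\mathcal{RD}}_{CMS}(X)$, and then identify $\overline{\mathcal{RD}}(X)$: the region (\ref{eq:RD(X)-1}) is the ZB region of the induced two-description problem on $\{1,2\}$, and since $\mathcal{RD}_{ZB}=\mathcal{RD}_{EC}$ is Ozarow-complete for a Gaussian source under MSE, $\overline{\mathcal{RD}}(X)$ coincides with (\ref{eq:Symm_RD}). Together with the converse this shows CMS achieves exactly (\ref{eq:Symm_RD}), proving (i). Part (ii) then follows from Theorem \ref{thm:General_CMS-2}(ii) --- which gives $\overline{\mathcal{RD}}_{VKG}\subset\overline{\mathcal{RD}}(X)=\overline{\mathcal{RD}}_{CMS}$ --- provided I establish $X\sim\mathcal N(0,1)\in\mathcal Z_{ZB}(d_1,d_2,d_{12})$ under MSE.

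To show $X\in\mathcal Z_{ZB}$ I would use the simplified criterion of the Remark after Definition \ref{definition:Defn_ZZB} and exhibit a point of $\mathcal{RD}_{ZB}$ that is unreachable under $U_1\leftrightarrow(X,V_{12})\leftrightarrow U_2$ with $U_{12}=f(U_1,U_2,V_{12})$. Restricting to jointly Gaussian auxiliaries and writing $D_V=\mathrm{Var}(X|V_{12})\le1$ and $D_i=\mathrm{Var}(X|U_i,V_{12})$, the conditional-independence constraint turns $U_1,U_2$ into two independent Gaussian test channels on the residual source given $V_{12}$, so a direct calculation gives $\frac1{D_{12}}=\frac1{D_1}+\frac1{D_2}-\frac1{D_V}$ while the ZB sum rate reduces --- the $D_V$ contributions cancelling --- to $R_1+R_2=\frac12\log\frac1{D_1D_2}$, \emph{independent} of the common codeword. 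Hence a common codeword only enlarges $D_{12}$, the choice $V_{12}=\Phi$ is optimal, and at the minimal marginal rates the best attainable central distortion is $D_{12}^{max}=(\tfrac1{D_1}+\tfrac1{D_2}-1)^{-1}$. Fixing the corner of (\ref{eq:corner_point_assumption}), $R_1=R_X(D_1)$ with a target $D_{12}<D_{12}^{max}$, Ozarow's correlated scheme reaches it at $R_2=R_X(D_2)+\delta$, whereas independent quantization must drive the decoder-$2$ distortion to $D_2'$ with $\tfrac1{D_2'}=\tfrac1{D_{12}}-\tfrac1{D_1}+1$, costing $R_2=\tfrac12\log\big(\tfrac1{D_{12}}-\tfrac1{D_1}+1\big)$. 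Strict suboptimality is then equivalent to $D_2\big(\tfrac1{D_{12}}-\tfrac1{D_1}+1\big)\big(1-(\rho_{12}^*)^2\big)>1$, an inequality that is an equality at $D_{12}=D_{12}^{max}$ (where $\rho_{12}^*=0$) and strict for $D_{12}<D_{12}^{max}$ after inserting $\rho_{12}^*$ from (\ref{eq:other_defn-1}); this places the corner point in $\mathcal{RD}_{ZB}$ but outside $\mathcal{RD}_{ZB}^{IQ}(0)$, so $X\in\mathcal Z_{ZB}$.

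The step I expect to be hardest is the independent-quantization converse. The per-decoder bounds $R_i\ge R_X(D_i)$ are distribution-free, but the central-distortion identity above was derived for jointly Gaussian auxiliaries, whereas $\mathcal Z_{ZB}$ quantifies over arbitrary ones; closing this gap requires an extremal-Gaussian argument (for instance a conditional maximum-entropy/entropy-power estimate) showing that under the Markov constraint $U_1\leftrightarrow(X,V_{12})\leftrightarrow U_2$ Gaussian auxiliaries minimise the attainable central distortion for a Gaussian source. The underlying reason the gap cannot close is structural: a shared codeword can only inject \emph{positive} conditional correlation between $U_1$ and $U_2$, while Ozarow-optimality for $D_{12}<D_{12}^{max}$ demands the \emph{negative} correlation $\rho_{12}^*<0$ that only genuine joint quantization supplies --- turning this observation into the quantitative inequality above is the remaining work.
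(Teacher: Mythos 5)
Your part (i) is essentially the paper's proof: achievability comes from the CMS specialization in which the codeword shared by descriptions $1$ and $3$ carries one leg of Ozarow's correlated quantization (your route through Theorem \ref{thm:General_CMS-2}(i) together with the identification of $\overline{\mathcal{RD}}(X)$ in (\ref{eq:RD(X)-1}) with the Ozarow-complete region is the same substance as the paper's explicit choice $V_{13}=X+W_{1}$, $U_{1}=U_{3}=V_{13}$, $U_{2}=X+W_{2}$), and the converse is Ozarow's two-description bound read off on descriptions $\{1,2\}$. The high-level plan for part (ii) — establish $\mathcal{N}(0,1)\in\mathcal{Z}_{ZB}$ and invoke Theorem \ref{thm:General_CMS-2}(ii) — is also the paper's plan.

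The gap is that your argument for $\mathcal{N}(0,1)\in\mathcal{Z}_{ZB}$ is carried out entirely within the class of jointly Gaussian auxiliaries (including $V_{12}$), whereas Definition \ref{definition:Defn_ZZB} quantifies over arbitrary auxiliary distributions; you name this reduction as "the remaining work," but that reduction \emph{is} the theorem. The paper flags exactly this trap ("a closer inspection reveals that the above argument holds only after we prove the optimality of Gaussian codebooks under `independent quantization'") and devotes Appendix B to closing it: (a) a maximum-entropy lemma (Lemma \ref{lemma:before-last}) building Gaussian surrogates whose conditional covariance matches that of the reconstructions $(\psi_{2}(U_{2}),\psi_{12}(U_{12}))$ given $(X,U_{1})$, so rates only decrease; (b) Lemma \ref{lemma:End_Lemma} and Corollary \ref{cor:last}, which justify imposing the Markov chains $X\leftrightarrow U_{i}\leftrightarrow V_{12}$; (c) the observation that the target corner point forces $I(X;V_{12})=0$; and (d) a per-value-of-$v_{12}$ Gaussianization followed by a dichotomy over $\mathcal{V}_{12}$ — either some $v_{12}$ leaves both $U_{1},U_{2}$ informative, in which case conditional joint Gaussianity with $U_{1}\leftrightarrow X\leftrightarrow U_{2}$ forces $I(U_{1};U_{2}|V_{12})>0$ and hence excess rate, or the alphabet splits into two parts and pairwise independence of $(X,U_{1})$ and $(X,U_{2})$ violates the distortion constraints. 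Your sketch contains none of (a)--(d); in particular nothing in it rules out a non-Gaussian $V_{12}$ whose different realizations "specialize" to different decoders, which is precisely the degenerate case (d) handles. A further, smaller point: the paper sidesteps your unproven quantitative inequality $D_{2}\bigl(\tfrac{1}{D_{12}}-\tfrac{1}{D_{1}}+1\bigr)\bigl(1-(\rho_{12}^{*})^{2}\bigr)>1$ by working in the high-distortion regime $D_{12}\leq D_{1}+D_{2}-1$ of (\ref{eq:high_dist_region}), where the complete two-description region has \emph{no} excess rate; there, strict suboptimality of independent quantization follows from the purely qualitative fact $\inf I(U_{1};\tilde{U}_{2})>0$, with no computation involving $\rho_{12}^{*}$ from (\ref{eq:other_defn-1}) at all.
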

\begin{rem}
It follows from (i) and (ii) that $\mathcal{RD}_{VKG}\subset\mathcal{RD}_{CMS}$
for the $L-$channel quadratic Gaussian MD problem $\forall L>2$. \end{rem}
\begin{proof}
To prove achievability using CMS, we set,
\begin{eqnarray}
V_{13} & = & X+W_{1}\nonumber \\
U_{2} & = & X+W_{2}\nonumber \\
U_{1}= & U_{3} & =V_{13}\nonumber \\
U_{12}= & U_{23} & =E\left[X|V_{12},U_{2}\right]\label{eq:Cor1_1}
\end{eqnarray}
where $W_{1}$ and $W_{2}$ are zero mean Gaussian random variables
independent of $X$ with covariance matrix $K_{W_{1}W_{2}}$ and the
functions $\psi_{\mathcal{K}}(U{}_{\mathcal{K}})$ are given by the
respective MSE optimal estimators, e.g., $\psi_{1}(U_{1})=E\left[X|U_{1}\right]$.
We set all the remaining auxiliary random variables to constants.
It follows directly that the rate-distortion region given in (\ref{eq:Symm_RD})
is achievable by the CMS scheme. Following the footsteps of Ozarow in \cite{Ozarow}, it is  straightforward to show that the
above region is also complete for the symmetric setup considered. 

Note that Ozarow's results suggest that if $D_{12}\leq D_{12}^{max}$,
then an `independent quantization' scheme does not achieve the smallest
$D_{12}$. It might be tempting to conclude that (ii) follows directly
from this observation and Theorem \ref{thm:General_CMS-2}. However,
a closer inspection reveals that the above argument holds only after
we prove the optimality of Gaussian codebooks under `independent quantization'
mechanism. We relegate the proof of (ii) to Appendix B as the underlying
principles are quite orthogonal to the rest of the paper. \end{proof}
\begin{rem}
Note that, as $\mathcal{Z}_{ZB}\subseteq\mathcal{Z}_{EC}$, a Gaussian
source under MSE belongs to $\mathcal{Z}_{EC}$. Hence, the `correlated
quantization' scheme (an extreme special case of VKG) which has been
proven to be complete for several cross-sections of the $L-$descriptions
quadratic Gaussian MD problem \cite{Jun_Chen_ind_central}, is strictly
suboptimal in general. 
\end{rem}

\subsection{Points on the Boundary - 3-Descriptions Setting\label{sub:Points-on-the}}

In this section we show that CMS achieves the complete RD region for
several cross-sections of the general quadratic Gaussian $L-$channel
MD problem. We again begin with the 3-descriptions case and then extend
the results to the $L$ channel framework. Recall the setup shown
in Fig. \ref{fig:3_des-1}, i.e, a cross-section of the general 3-descriptions
rate-distortion region wherein we impose constraints only on distortions
$(D_{1},D_{2},D_{3},D_{12},D_{23})$ and set the rest of the distortions,
$(D_{13},D_{123})$ to $1$. Here we consider the general asymmetric
case, i.e. $D_{1}\neq D_{3}$ and $D_{12}\neq D_{23}$ and show that
the CMS scheme achieves the complete rate region in several distortion
regimes. 

In the following theorem, without loss of generality we assume that
$D_{1}\leq D_{3}$. If $D_{3}\leq D_{1}$, then the theorem holds
by interchanging `$1$' and `$3$' everywhere. Let $D_{12}$ be any
distortion such that $D_{12}\leq\min\{D_{1},D_{2}\}$. We define $D_{23}^{*}=D_{23}^{*}(D_{1},D_{2},D_{3},D_{12})$
as:
\begin{equation}
D_{23}^{*}=\frac{\sigma_{2}^{2}\sigma_{3}^{2}\left(1-\rho{}^{2}\right)}{\sigma_{2}^{2}\sigma_{3}^{2}\left(1-\rho{}^{2}\right)+\sigma_{2}^{2}+\sigma_{3}^{2}-2\sigma_{2}\sigma_{3}\rho}\label{eq:defn_D23}
\end{equation}
where $\sigma_{i}^{2}=\frac{D_{i}}{1-D_{i}}$ $i\in\{2,3\}$ and
\begin{equation}
\rho=\rho_{12}^{*}\frac{\sigma_{1}}{\sigma_{3}}\label{eq:defn_rho}
\end{equation}
where $\rho_{12}^{*}$ is defined in (\ref{eq:other_defn-1}). In
the following theorem, we will show that CMS achieves the complete
rate-region if $D_{23}=D_{23}^{*}$.
\begin{thm}
\label{thm:Sum_Rate_Gauss}For the setup shown in Fig. \ref{fig:3_des-1},
let $D_{1}\leq D_{3}$. Then,

(i) CMS achieves the complete rate-region if:
\begin{eqnarray}
D_{23} & = & D_{23}^{*}(D_{1},D_{2},D_{3},D_{12})\label{eq:sum_rate_thm1}
\end{eqnarray}
where $D_{23}^{*}$ is defined in (\ref{eq:defn_D23}). The rate region
is given by:
\begin{eqnarray}
R_{i} & \geq & \frac{1}{2}\log\frac{1}{D_{i}}\,\, i\in\{1,2,3\}\nonumber \\
R_{1}+R_{2} & \geq & \frac{1}{2}\log\frac{1}{D_{1}D_{2}}+\delta(D_{1},D_{2},D_{12})\nonumber \\
R_{2}+R_{3} & \geq & \frac{1}{2}\log\frac{1}{D_{2}D_{3}}+\delta(D_{2},D_{3},D_{23})\label{eq:sum_rate_thm2}
\end{eqnarray}
where $\delta(\cdot)$ is defined in (\ref{eq:delta_defn}).

(ii) Moreover, CMS achieves the minimum sum-rate if one of the following
hold:

(a) For a fixed $D_{12}$, $D_{23}\geq D_{23}^{*}(D_{1},D_{2},D_{3},D_{12})$

(b) For a fixed $D_{23}$, $D_{12}\in\{D_{12}:\delta(D_{2},D_{3},D_{23})\geq\delta(D_{1},D_{2},D_{12})\}$\end{thm}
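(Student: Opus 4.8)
The plan is to characterize the region in (\ref{eq:sum_rate_thm2}) by proving a converse (every achievable rate triple obeys the five inequalities) together with achievability (CMS attains the entire resulting polytope). Since the polytope is an intersection of half-spaces of contra-polymatroid type, on the achievability side it suffices to attain its dominant corner points and then freely add rate, while on the converse side the bounds will reduce to Ozarow's two-descriptions result applied to the two interacting pairs. For the converse, the individual bounds $R_i \geq \frac{1}{2}\log(1/D_i)$ are the scalar Gaussian rate-distortion bounds. For the pairwise bounds I would note that, discarding description $3$ (which the channel model permits), descriptions $1$ and $2$ by themselves form a valid two-descriptions code achieving $(D_1,D_2,D_{12})$, so Ozarow's converse gives $R_1+R_2 \geq \frac{1}{2}\log\frac{1}{D_1 D_2}+\delta(D_1,D_2,D_{12})$, and symmetrically for the pair $(2,3)$; any rate spent coordinating with the discarded description only inflates the surviving marginal, so the bounds persist in the three-descriptions setting. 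Crucially, the claimed region carries \emph{no} triple sum-rate constraint, and showing that none is active is exactly the burden of the achievability argument.

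For achievability (part (i)) I would exploit the facts that $D_{13},D_{123}$ are unconstrained and $D_1 \leq D_3$, and instantiate Theorem \ref{thm:main} with correlated-quantization variables $U_i = X + W_i$, $W_i$ zero-mean Gaussian of variance $\sigma_i^2 = D_i/(1-D_i)$, couple $(W_1,W_2)$ at the Ozarow-optimal correlation $\rho_{12}^*$ so that $(U_1,U_2)$ meets the pair-$(1,2)$ bound with equality, and build description $3$ as a \emph{degraded} copy of description $1$ by taking $W_3 = W_1 + W'$ with $W'$ independent Gaussian of variance $\sigma_3^2-\sigma_1^2 \geq 0$. This degradation is realized in CMS through a single shared codeword $V_{13}=U_3$ carried in both descriptions $1$ and $3$, with description $1$ adding a private refinement to recover $U_1$; the Markov relation $X \to U_1 \to U_3$ collapses description $1$'s rate to $I(X;U_1)=\frac{1}{2}\log(1/D_1)$ while description $3$ costs only $I(X;U_3)=\frac{1}{2}\log(1/D_3)$, and the remaining auxiliaries are set to constants. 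Because $W'$ is independent of $W_2$, the induced noise correlation for the pair $(2,3)$ is $E[W_2 W_3]/(\sigma_2\sigma_3)=\rho_{12}^*\,\sigma_1/\sigma_3=\rho$, so by construction the central distortion of $(U_2,U_3)$ is exactly the $D_{23}^*$ of (\ref{eq:defn_D23}). It then remains to feed these auxiliaries into the rate constraints (\ref{eq:aux_rate_cond_thm}) and assemble (\ref{eq:rate_condition_thm}) to recover (\ref{eq:sum_rate_thm2}).

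I expect the main obstacle to be precisely this last verification: showing that the full constraint $\alpha(2^{\mathcal{L}}-\emptyset)$ is redundant given the two pairwise constraints. In VKG, $U_1$ and $U_3$ would be generated independently conditioned on the top layer, forcing $U_2$ and the refinements to maintain joint typicality with both independently, which is the excess rate identified in the discussion following Theorem \ref{thm:General_CMS-2}. Here the shared codeword $V_{13}$ ties $U_1$ and $U_3$ through the single degradation noise $W'$, so coordinating $U_2$ with $V_{13}$ simultaneously coordinates it with both descriptions at no additional cost; the consistency relation $\rho=\rho_{12}^*\sigma_1/\sigma_3$, i.e. the choice $D_{23}=D_{23}^*$, is exactly what lets the two correlated quantizers share a common $U_2$ marginal. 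Evaluating $\alpha(\mathcal{Q})$ for jointly Gaussian auxiliaries and checking that the $\mathcal{Q}=2^{\mathcal{L}}-\emptyset$ inequality is implied by the $\{1,2,12\}$ and $\{2,3,23\}$ inequalities is the technical heart of the proof.

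For part (ii) I would first minimize the sum-rate over the polytope (\ref{eq:sum_rate_thm2}); combining the pairwise and individual bounds gives $\min(R_1+R_2+R_3)=\frac{1}{2}\log\frac{1}{D_1 D_2 D_3}+\max\{\delta(D_1,D_2,D_{12}),\,\delta(D_2,D_3,D_{23})\}$, attained at one of two corners according to which $\delta$ dominates. Under condition (b), $\delta(D_2,D_3,D_{23})\geq\delta(D_1,D_2,D_{12})$ selects the corner $R_1=\frac{1}{2}\log(1/D_1)$, $R_2+R_3=\frac{1}{2}\log\frac{1}{D_2D_3}+\delta(D_2,D_3,D_{23})$, which CMS attains by running the pair-$(2,3)$ quantizer at its optimal correlation while description $1$ operates at its point-to-point rate. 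Under condition (a), since $|\rho|=|\rho_{12}^*|\sigma_1/\sigma_3 \leq |\rho_{12}^*|$ forces $\delta(D_2,D_3,D_{23}^*)\leq\delta(D_1,D_2,D_{12})$, relaxing to $D_{23}\geq D_{23}^*$ keeps the $(1,2)$ term dominant, and the degraded-description-$3$ construction of part (i) attains the corner $R_1+R_2=\frac{1}{2}\log\frac{1}{D_1D_2}+\delta(D_1,D_2,D_{12})$, $R_3=\frac{1}{2}\log(1/D_3)$, with its achieved pair-$(2,3)$ distortion $D_{23}^*\leq D_{23}$ meeting the looser constraint. In both cases the achieved sum-rate equals the converse minimum, establishing sum-rate optimality.
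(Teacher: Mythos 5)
Your proposal follows essentially the same route as the paper's proof: the converse via the scalar rate-distortion bound plus Ozarow's converse applied to the pairs $\{1,2\}$ and $\{2,3\}$, and achievability via exactly the paper's construction ($U_1=X+W_1$, $U_2=X+W_2$ coupled at $\rho_{12}^{*}$, and $V_{13}=U_3=X+W_1+W'$ with $W'$ independent of $(X,W_1,W_2)$ of variance $\sigma_3^2-\sigma_1^2$), followed by the same corner-point-plus-free-rate argument, and your identified ``technical heart'' (that the triple covering constraint collapses to the pair-$(1,2)$ bound because $I(W_2;W'|W_1)=0$, with $D_1\le D_3$ guaranteeing non-negative auxiliary rates) is precisely the computation the paper carries out. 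Part (ii) also mirrors the paper --- your observation that $|\rho|\le|\rho_{12}^{*}|$ keeps the $(1,2)$ term dominant in case (a) actually supplies a justification the paper only asserts --- while your one-line achievability claim for case (b) is stated at the same level of detail as the paper's own one-sentence treatment, so it inherits rather than resolves that brevity.
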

\begin{rem}
We note that the above rate region \textit{cannot} be achieved by
VKG. We omit the details of the proof here as it can be proved in
same lines as the proof of Theorem \ref{thm:General_CMS-1}.
\end{rem}

\begin{rem}
An achievable rate-distortion region can be derived for general distortions
using the encoding principles we derive as part of this proof. However,
it is hard to prove outer bounds if the conditions in (\ref{eq:sum_rate_thm1})
are not satisfied and hence we omit stating the results explicitly
here. 
\end{rem}

\begin{rem}
Both CMS and VKG achieve the complete rate region when $D_{12}\geq D_{12}^{max}$
and $D_{23}\geq D_{23}^{max}$, where $D_{12}^{max}$ and $D_{23}^{max}$
are defined in (\ref{eq:other_defn-1}). It can be easily verified that in
this case an independent quantization scheme is optimal and the complete
achievable rate-region is given by $R_{i}\geq\frac{1}{2}\log\frac{1}{D_{i}},\,\, i\in\{1,2,3\}$. 
\end{rem}

\begin{rem}
It follows from the above theorem that CMS achieves the minimum sum-rate whenever
$D_{12}=D_{23}$ for any $D_{1},D_{3}$. \end{rem}
\begin{proof}
We begin with the proof of (i). The proof of (ii) then follows almost
directly from (i). First we show the converse, which is quite obvious.
Conditions on $R_{i}$ follow from the converse to the source coding
theorem. Conditions on $R_{1}+R_{2}$ and $R_{2}+R_{3}$ follow from
Ozarow's results in \cite{Ozarow}, to achieve $(D_{1},D_{2},D_{12})$ using descriptions
$\{1,2\}$ and to achieve $(D_{2},D_{3},D_{23})$ using descriptions
$\{2,3\}$ at the respective decoders.

We next prove that CMS achieves the rate region in (\ref{eq:sum_rate_thm2})
if (\ref{eq:sum_rate_thm1}) holds. We first give an intuitive argument
to explain the encoding scheme. Description 3 carries an RD-optimal
quantized version of $X$ (which achieves distortion $D_{3}$). Description
1 carries all the bits embedded in description 3 along with `refinement
bits' which assist in achieving distortion $D_{1}\leq D_{3}$. This
entails no loss in optimality as a Gaussian source is successively
refinable under MSE \cite{Successive_Refinement}. Description 2 then
carries a quantized version of the source which is correlated with
the information in descriptions 1 and 3. We will show that if $D_{23}=D_{23}^{*}(D_{1},D_{2},D_{3},D_{12})$,
then the correlations can be set such that description 2 is optimal
with respect to both descriptions 1 and 3. 

Formally, to achieve the rate region in (\ref{eq:sum_rate_thm2}),
we set the auxiliary random variables in the CMS coding scheme as
follows:
\begin{eqnarray}
V_{13} & = & X+W_{1}+W_{3}\nonumber \\
U_{3} & = & V_{13}\nonumber \\
U_{1} & = & X+W_{1}\nonumber \\
U_{2} & = & X+W_{2}\nonumber \\
U_{12}=\Phi &  & U_{23}=\Phi\label{eq:Sum_rate_pf_3}
\end{eqnarray}
and the functions $\psi(\cdot)$ as the respective MSE optimal estimators,
where $W_{1},W_{2},W_{3}$ are zero mean Gaussian random variables
independent of $X$ with a covariance matrix:
\begin{equation}
K_{W_{1}W_{2}W_{3}}=\left[\begin{array}{ccc}
\tilde{\sigma}_{1}^{2} & \rho_{12}\tilde{\sigma}_{1}\tilde{\sigma}_{2} & 0\\
\rho_{12}\tilde{\sigma}_{1}\tilde{\sigma}_{2} & \tilde{\sigma}_{2}^{2} & 0\\
0 & 0 & \tilde{\sigma}_{3}^{2}
\end{array}\right]\label{eq:Sum_rate_pf_4}
\end{equation}
where $\tilde{\sigma}_{1}^{2}=\sigma_{1}^{2}=\frac{D_{1}}{1-D_{1}}$,
$\tilde{\sigma}_{2}^{2}=\sigma_{2}^{2}=\frac{D_{2}}{1-D_{2}}$, $\tilde{\sigma}_{3}^{2}=\sigma_{3}^{2}-\sigma_{1}^{2}=\frac{D_{3}}{1-D_{3}}-\frac{D_{1}}{1-D_{1}}$.
The correlation coefficient $\rho_{12}$ is set to achieve distortion
$D_{12}$, i.e. $\rho_{12}=\rho_{12}^{*}$ defined in (\ref{eq:other_defn-1}).
Let us denote by $W_{13}=W_{1}+W_{3}$. Observe that the encoding
for descriptions 2 and 3 resembles Ozarow's correlated quantization
scheme with $U_{2}=X+W_{2}$ and $U_{3}=X+W_{13}$. Let us denote
the correlation coefficient between $W_{2}$ and $W_{13}$ be $\rho$.
We have the following equation relating $\rho_{12}$ and $\rho$ (which
is equivalent to (\ref{eq:defn_rho})):
\begin{equation}
\rho_{12}^{*}\tilde{\sigma}_{1}=\rho\sqrt{\tilde{\sigma}_{1}^{2}+\tilde{\sigma}_{3}^{2}}\label{eq:rho_defn2}
\end{equation}
Note that the above relation is derived using the independence of
$W_{3}$ and $(W_{1},W_{2})$, which follows from our choice of $K_{W_{1}W_{2}W_{3}}$.
Hence the minimum distortion $D_{23}$ achievable using the above
choice for the joint density of the auxiliary random variables is
given by:
\begin{eqnarray}
D_{23} & = & \mbox{Var}(X|U_{2},U_{3},V_{13})\nonumber \\
 & = & \mbox{Var}(X|U_{2},V_{13})\nonumber \\
 & = & D_{23}^{*}\label{eq:min_D23}
\end{eqnarray}

We next derive the rates required by this choice of $K_{W_{1}W_{2}W_{3}}$.
Application of Theorem \ref{thm:main} using the above joint
density leads to the following achievable rate region for any given
distortions $D_{1},D_{2},D_{3},D_{12},D_{23}$:
\begin{eqnarray*}
R_{13}^{''} & \geq & \frac{1}{2}\log\frac{1}{D_{3}}\\
R_{2}^{'} & \geq & \frac{1}{2}\log\frac{1}{D_{2}}\\
R_{1}^{'}+R_{13}^{''} & \geq & \frac{1}{2}\log\frac{1}{D_{1}}\\
R_{2}^{'}+R_{13}^{''} & \geq & H(V_{13})+H(U_{2})-H(V_{13},U_{2}|X)\\
 & = & H(U_{3})+H(U_{2})-H(U_{3},U_{2}|X)\\
 & = & \frac{1}{2}\log\frac{1}{D_{3}D_{2}}+\frac{1}{2}\log\left(\frac{1}{1-\rho{}^{2}}\right)\\
 & = & \frac{1}{2}\log\frac{1}{D_{3}D_{2}}+\delta(D_{2},D_{3},D_{23}^{*})
\end{eqnarray*}
\begin{eqnarray*}
R_{1}^{'}+R_{2}^{'}+R_{13}^{''} & \geq & H(V_{13})+H(U_{1}|V_{13})+H(U_{2})\\
 &  & -H(U_{1},V_{13},U_{2}|X)\\
 & = & I(X;U_{1},V_{13})+I(U_{2};X,U_{1},V_{13})\\
 & =^{(a)} & I(X;U_{1})+I(X;U_{2})\\
 &  & +I(U_{2};U_{1},V_{13}|X)\\
 & = & I(X;U_{1})+I(X;U_{2})\\
 &  & +I(U_{2};U_{1},U_{3}|X)\\
 & =^{(b)} & I(X;U_{1})+I(X;U_{2})\\
 &  & +I(W_{2};W_{1},W_{1}+W_{3})\\
 & = & I(X;U_{1})+I(X;U_{2})+I(W_{2};W_{1})\\
 &  & +I(W_{2};W_{3}|W_{1})\\
 & =^{(c)} & I(X;U_{1})+I(X;U_{2})+I(W_{2};W_{1})\\
 & = & \frac{1}{2}\log\frac{1}{D_{1}D_{2}}+\frac{1}{2}\log\left(\frac{1}{1-(\rho_{12}^{*})^{2}}\right)\\
 & = & \frac{1}{2}\log\frac{1}{D_{1}D_{2}}+\delta(D_{1},D_{2},D_{12})
\end{eqnarray*}
\begin{eqnarray}
R_{1} & = & R_{13}^{''}+R_{1}^{'}\nonumber \\
R_{2} & = & R_{2}^{'}\nonumber \\
R_{3} & = & R_{13}^{''}\label{eq:CMS_Gauss_ach_region}
\end{eqnarray}
where $(a)$ follows from the Markov chain $X\leftrightarrow U_{1}\leftrightarrow V_{13}$,
$(b)$ from the independence of $X$ and $(W_{1},W_{2},W_{3})$ and
$(c)$ from the independence of $W_{3}$ and $(W_{1},W_{2})$. 

At a first glance, it might be tempting to conclude that the region
for the tuple $(R_{1},R_{2},R_{3})$ in (\ref{eq:CMS_Gauss_ach_region})
is equivalent to the region given by (\ref{eq:sum_rate_thm2}). This
is not the case in general as the equations in (\ref{eq:CMS_Gauss_ach_region})
have an implicit constraint on the auxiliary rates $R_{13}^{''},R_{1}^{'},R_{2}^{'}\geq0$.
However, we will show that if $D_{3}\geq D_{1}$, then the two regions
are indeed equivalent. We denote the rate region given in (\ref{eq:sum_rate_thm2})
by $\mathcal{R}$ and the region in (\ref{eq:CMS_Gauss_ach_region})
by $\mathcal{R}^{*}$. Clearly, $\mathcal{R}^{*}\subseteq\mathcal{R}$,
as any $(R_{1},R_{2},R_{3})$ that satisfies (\ref{eq:CMS_Gauss_ach_region})
also satisfies (\ref{eq:sum_rate_thm2}). We need to show that $\mathcal{R}^{*}\supseteq\mathcal{R}$.
Towards proving this claim, note that both $\mathcal{R}$ and $\mathcal{R}^{*}$
are convex regions bounded by hyper-planes. Hence, it is sufficient
for us to show that all the corner points of $\mathcal{R}$ lie in
$\mathcal{R}^{*}$. Clearly, $\mathcal{R}$ has 6 corner points denoted
by $P_{ijk}$ $i,j,k\in\{1,2,3\}$ defined as:
\begin{eqnarray}
P_{ijk} & = & \{r_{i},r_{j},r_{k}\}\nonumber \\
r_{i} & = & \min R_{i}\nonumber \\
r_{j} & = & \min_{R_{i}=r_{i}}R_{j}\nonumber \\
r_{k} & = & \min_{R_{i}=r_{i},R_{j}=r_{j}}R_{k}
\end{eqnarray}
To prove $\mathcal{R}^{*}\supseteq\mathcal{R}$, we need to prove
that every corner point $(r_{1},r_{2},r_{3})\in\mathcal{R}$ is achieved
by some non-negative $(R_{13}^{''},R_{1}^{'},R_{2}^{'},R_{1},R_{2},R_{3})\in\mathcal{R}^{*}$
such that $R_{i}=r_{i},\,\, i\in\{1,2,3\}$. We set $R_{13}^{''}=R_{3}=r_{3}$
and $R_{2}^{'}=R_{2}=r_{2}$ and show that we can always find $R_{1}^{'}\geq0$
satisfying (\ref{eq:CMS_Gauss_ach_region}) such that $R_{1}=R_{1}^{'}+R_{13}^{'}=r_{1}$.
Let us first consider the points $P_{213}=P_{231}$ given by:
\begin{eqnarray}
r_{1} & = & \frac{1}{2}\log\frac{1}{D_{1}}+\delta(D_{1},D_{2},D_{12})\nonumber \\
r_{2} & = & \frac{1}{2}\log\frac{1}{D_{2}}\nonumber \\
r_{3} & = & \frac{1}{2}\log\frac{1}{D_{3}}+\delta(D_{2},D_{3},D_{23})
\end{eqnarray}
This can be achieved by using the following auxiliary rates, $R_{2}^{'}=r_{2}$,
$R_{13}^{''}=r_{3}$ and 
\begin{eqnarray}
R_{1}^{'} & = & \frac{1}{2}\log\frac{D_{3}}{D_{1}}+\delta(D_{1},D_{2},D_{12})\nonumber \\
 &  & -\delta(D_{2},D_{3},D_{23})\nonumber \\
 & = & \frac{1}{2}\log\frac{(1-D_{1})D_{3}-(\rho_{12}^{*})^{2}D_{1}(1-D_{3})}{(1-D_{1})D_{1}(1-(\rho_{12}^{*})^{2})}
\end{eqnarray}
It is easy to verify that $R_{1}^{'}\geq0$ if $D_{3}\geq D_{1}$.
Hence $P_{213}=P_{231}\in\mathcal{R}^{*}$. Let us next consider the
points $P_{132}=P_{312}$ given by: 
\begin{eqnarray}
r_{1} & = & \frac{1}{2}\log\frac{1}{D_{1}}\nonumber \\
r_{2} & = & \frac{1}{2}\log\frac{1}{D_{2}}\nonumber \\
 &  & +\max\{\delta(D_{1},D_{2},D_{12}),\delta(D_{2},D_{3},D_{23})\}\nonumber \\
r_{3} & = & \frac{1}{2}\log\frac{1}{D_{3}}
\end{eqnarray}
Again it is straightforward to show that $(R_{13}^{''},R_{1}^{'},R_{2}^{'})=(r_{3},r_{1}-r_{3},r_{2})$
belongs to $\mathcal{R}^{*}$. Finally, we consider the remaining
two points $P_{123}$ and $P_{321}$. $P_{123}$ is given by:
\begin{eqnarray}
r_{1} & = & \frac{1}{2}\log\frac{1}{D_{1}}\nonumber \\
r_{2} & = & \frac{1}{2}\log\frac{1}{D_{2}}+\delta(D_{1},D_{2},D_{12})\nonumber \\
r_{3} & = & \frac{1}{2}\log\frac{1}{D_{3}}\nonumber \\
 &  & +\left(\delta(D_{2},D_{3},D_{23})-\delta(D_{1},D_{2},D_{12})\right)^{+}
\end{eqnarray}
where $x^{+}=\max\{x,0\}$. Consider the following auxiliary rates:
$R_{13}^{''}=r_{3}$, $R_{2}^{'}=r_{2}$ and $R_{1}^{'}=\frac{1}{2}\log\frac{D_{3}}{D_{1}}$.
Clearly the first three constraints in (\ref{eq:CMS_Gauss_ach_region})
are satisfied by these auxiliary rates. The following inequalities
prove that the last two constraints are also satisfied by these rates
and hence $P_{123}\in\mathcal{R}^{*}$.
\begin{eqnarray*}
R_{2}^{'}+R_{13}^{''} & = & r_{2}+r_{3}\\
 & = & \frac{1}{2}\log\frac{1}{D_{2}D_{3}}+\delta(D_{1},D_{2},D_{12})\\
 &  & +\left(\delta(D_{2},D_{3},D_{23})-\delta(D_{1},D_{2},D_{12})\right)^{+}\\
 & \geq & \frac{1}{2}\log\frac{1}{D_{2}D_{3}}+\delta(D_{2},D_{3},D_{23})
\end{eqnarray*}
\begin{eqnarray}
R_{2}^{'}+R_{1}^{'}+R_{13}^{''} & = & \frac{1}{2}\log\frac{1}{D_{1}D_{2}}+\delta(D_{1},D_{2},D_{12})\nonumber \\
 &  & +\left(\delta(D_{2},D_{3},D_{23})-\delta(D_{1},D_{2},D_{12})\right)^{+}\nonumber \\
 & \geq & \frac{1}{2}\log\frac{1}{D_{1}D_{2}}+\delta(D_{1},D_{2},D_{12})
\end{eqnarray}
Next consider $P_{321}$:
\begin{eqnarray}
r_{1} & = & \frac{1}{2}\log\frac{1}{D_{1}}\nonumber \\
 &  & +\left(\delta(D_{1},D_{2},D_{12})-\delta(D_{2},D_{3},D_{23})\right)^{+}\nonumber \\
r_{2} & = & \frac{1}{2}\log\frac{1}{D_{2}}+\delta(D_{2},D_{3},D_{23})\nonumber \\
r_{3} & = & \frac{1}{2}\log\frac{1}{D_{3}}
\end{eqnarray}
Using same arguments as before, it can be shown that $P_{321}\in\mathcal{R}^{*}$
by using the following auxiliary rates: $R_{13}^{''}=r_{3}$, $R_{2}^{'}=r_{2}$
and $R_{1}^{'}=\frac{1}{2}\log\frac{D_{3}}{D_{1}}+\left(\delta(D_{1},D_{2},D_{12})-\delta(D_{2},D_{3},D_{23})\right)^{+}$.
Therefore, it follows that $\mathcal{R}=\mathcal{R}^{*}$ and hence
CMS achieves the complete rate region, proving (i). 

We next prove (ii)(a). It follows from (i) that the following rate
point is achievable $\forall D_{23}\geq D_{23}^{*}$:
\begin{eqnarray}
\left\{ R_{1},R_{2},R_{3}\right\}  & = & \Bigl\{\frac{1}{2}\log\frac{1}{D_{1}},\frac{1}{2}\log\frac{1}{D_{2}}+\delta(D_{1},D_{2},D_{12}),\nonumber \\
 &  & \frac{1}{2}\log\frac{1}{D_{3}}\Bigr\}
\end{eqnarray}
Also observe that $\forall D_{23}\geq D_{23}^{*}$, $\delta(D_{1},D_{2},D_{12})\geq\delta(D_{2},D_{3},D_{23})$
and hence a lower bound to the sum rate is $\frac{1}{2}\log\frac{1}{D_{1}D_{2}D_{3}}+\delta(D_{1},D_{2},D_{12})$.
Therefore the above point achieves the minimum sum rate $\forall D_{23}\geq D_{23}^{*}$. 

The proof of (ii)(b) follows similarly by noting that if $D_{12}\in\{D_{12}:\delta(D_{2},D_{3},D_{23})\geq\delta(D_{1},D_{2},D_{12})\}$,
the minimum sum rate is given by $\frac{1}{2}\log\frac{1}{D_{1}D_{2}D_{3}}+\delta(D_{2},D_{3},D_{23})$
which is achieved by the point:
\begin{eqnarray}
\left\{ R_{1},R_{2},R_{3}\right\}  & = & \Bigl\{\frac{1}{2}\log\frac{1}{D_{1}},\frac{1}{2}\log\frac{1}{D_{2}}+\delta(D_{2},D_{3},D_{12}),\nonumber \\
 &  & \frac{1}{2}\log\frac{1}{D_{3}}\Bigr\}
\end{eqnarray}
This proves the theorem. 
\end{proof}
It is interesting to observe that the optimal encoding scheme introduces
common codewords (creates an interaction) between descriptions 1 and
3, even though these two descriptions are never received simultaneously
at the decoder. While common codewords typically imply redundancy
in the system, in this case, introducing them allows for better co-ordination
between the descriptions leading to a smaller rate for the common
branch. We will use similar principles in the following section for
the $L-$descriptions framework and show that the CMS scheme achieves
the complete RD region for several distortion regimes. 

\begin{figure}
\centering\includegraphics[scale=0.185]{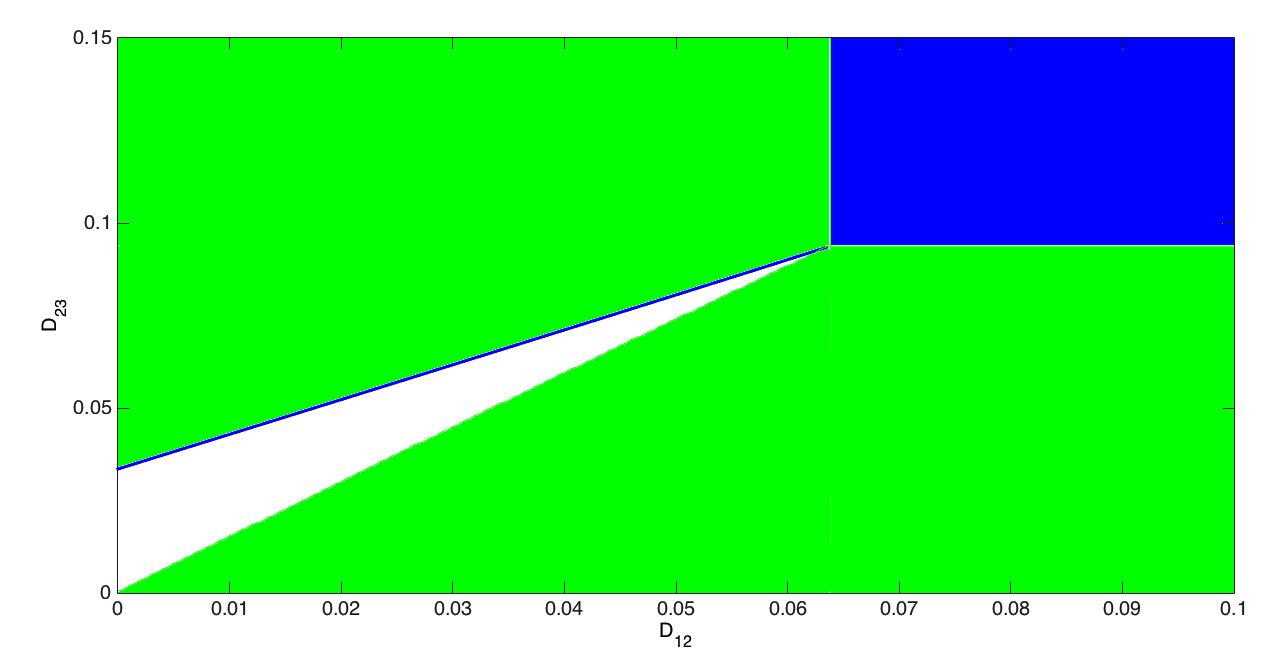}

\caption{Example: This figure denotes the regime of distortions wherein the
CMS scheme achieves the complete rate region and the minimum sum rate.
Here $D_{1}=0.1,D_{2}=0.15$ and $D_{3}=0.2$. The blue points correspond
to the region of distortions wherein the CMS scheme achieves the complete
rate-region and the green points represent the region where the CMS
scheme achieves the minimum sum rate \label{fig:Example:-This-figure}.}

\end{figure}

\begin{example}
We consider an asymmetric setting where $D_{1}=0.1,D_{2}=0.15$ and
$D_{3}=0.2$. Fig. \ref{fig:Example:-This-figure} shows the regime
of distortions where CMS achieves the complete rate-region and minimum
sum rate. The blue region corresponds to the set of distortion pairs
$(D_{12},D_{23})$ wherein the CMS rate-region is complete. The green
region denotes the minimum sum rate points. It is clearly evident
from the figure that CMS achieves the minimum sum rate for a fairly
large regime of distortions. 
\end{example}

\subsection{Points on the Boundary - $L$-Descriptions\label{sub:Points-on-the_L}}

We next extend the coding scheme described in the previous section
to the $L-$descriptions framework. Observe that, in all the setups
considered so far, we have used only a single common layer random
variable. We will see, in the proof of Theorem \ref{thm:boundary_points_L}, 
that multiple common layer random variables are necessary to achieve
the complete rate region of several cross-sections of the rate-distortion
region. We first describe the particular cross-section we are interested
in and then state the result as part of Theorem \ref{thm:boundary_points_L}.
We impose only the following distortion constraints : $D_{1},D_{2},\ldots,D_{L},D_{12},D_{13},\ldots,D_{1L}$
and limit the remaining distortions to 1. This essentially corresponds
to an $L-$descriptions framework wherein each of the descriptions
are either received individually or the first description is received
along with another description from the set $\{2,3,\ldots,L\}$. Observe
that the 3-descriptions framework considered in the previous section
is a special case of this $L-$channel setup%
\footnote{Note that `description 1' here plays the role of `descriptions 2'
considered in the 3-descriptions setting in the previous section.%
}. In the following theorem, we assume without loss of generality that
$D_{2}\leq\ldots\leq D_{L}$. The results follow accordingly for any
other permutation of the ordering. 
\begin{thm}
\label{thm:boundary_points_L}Consider a cross-section of the $L-$descriptions
quadratic Gaussian rate-distortion region, wherein we only impose
distortions $D_{1},D_{2},\ldots,D_{L},D_{12},D_{13},\ldots,D_{1L}$.
Without loss of generality, assume that $D_{2}\leq\ldots\leq D_{L}$
holds. Then the CMS scheme achieves the following complete rate-region:
\begin{eqnarray}
R_{i} & \geq & \frac{1}{2}\log\frac{1}{D_{i}},\,\,\forall i=\{1,2,\ldots,L\}\label{eq:sum_rate_L_thm1}\\
R_{1}+R_{i} & \geq & \frac{1}{2}\log\frac{1}{D_{1}D_{i}}+\delta(D_{1},D_{L},D_{1L})
\end{eqnarray}
if the distortions $D_{12},D_{13},\ldots,D_{1L}$ satisfy the following
conditions:
\begin{equation}
D_{1i}=\frac{\sigma_{1}^{2}\sigma_{i}^{2}(1-\rho_{1i}^{2})}{\sigma_{1}^{2}\sigma_{i}^{2}(1-\rho_{1i}^{2})+\sigma_{1}^{2}+\sigma_{i}^{2}-2\sigma_{1}\sigma_{i}\rho_{1i}}\label{eq:sum_rate_L_dist}
\end{equation}
where $\sigma_{j}^{2}=\frac{D_{j}}{1-D_{j}}$, $\forall j\in\{1,2,\ldots,L\}$
and $\rho_{12},\rho_{13},\ldots,\rho_{1L}$ are given by:
\begin{eqnarray}
\rho_{1i} & = & \rho_{1L}^{*}\frac{\sigma_{2}}{\sigma_{i}},\,\,\forall i\in\{1,2,\ldots,L\}\label{eq:rho_L_dist}
\end{eqnarray}
\end{thm}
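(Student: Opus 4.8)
The plan is to mirror the proof of Theorem \ref{thm:Sum_Rate_Gauss}, the only essential upgrade being that the single shared codeword used there must be replaced by a nested family of shared codewords, so that the hub description (here description $1$) can be made Ozarow-optimal with respect to \emph{every} one of its partners $i\in\{2,\ldots,L\}$ simultaneously. I begin with the converse, which is the easy direction. The bounds $R_i\geq\frac12\log\frac1{D_i}$ are the scalar Gaussian rate--distortion converses. For each partner $i$ the pair $\{1,i\}$ must reconstruct $X$ at $(D_1,D_i,D_{1i})$, so Ozarow's two-descriptions converse \cite{Ozarow}, applied to that pair while ignoring all other descriptions, gives $R_1+R_i\geq\frac12\log\frac1{D_1D_i}+\delta(D_1,D_i,D_{1i})$; the prescribed $D_{1i}$ in (\ref{eq:sum_rate_L_dist})--(\ref{eq:rho_L_dist}) is exactly the value that turns this into the bound stated in the theorem. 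Hence the region is an outer bound.

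For achievability I construct the auxiliary variables of Theorem \ref{thm:main} as a degraded chain built entirely from shared codewords. I set the hub codeword $U_1=X+W_1$ with $\mathrm{Var}(W_1)=\sigma_1^2$, and realize descriptions $2,\ldots,L$ through shared variables $V_{\{2,\ldots,k\}}$, $k=L,L-1,\ldots,3$, together with the base codeword $U_2$, via
\[
V_{\{2,\ldots,k\}}=X+N+E_3+\cdots+E_k,\qquad U_2=X+N,
\]
where $N,E_3,\ldots,E_L$ are independent, $\mathrm{Var}(N)=\sigma_2^2$ and $\mathrm{Var}(E_j)=\sigma_j^2-\sigma_{j-1}^2\geq 0$ (nonnegative precisely because $D_2\leq\cdots\leq D_L$). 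The entire coupling between the hub and the chain is the single correlation $\mathrm{Corr}(W_1,N)$, fixed to meet $D_{12}$ through (\ref{eq:other_defn-1}). Since every coarser partner's effective noise is $N$ plus increments that are independent of $(W_1,N)$, this one coupling induces exactly $\rho_{1i}=\rho_{12}^{*}\,\sigma_2/\sigma_i$ on the pair $\{1,i\}$, i.e. the Ozarow-optimal correlation for the $D_{1i}$ of (\ref{eq:sum_rate_L_dist}). I transmit $V_{\{2,\ldots,k\}}$ in descriptions $2,\ldots,k$ and set every remaining refinement variable $U_{\mathcal K}$, $|\mathcal K|\geq 2$, to the corresponding MMSE estimate, as no further joint distortion is constrained. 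This is where the novelty over Theorem \ref{thm:Sum_Rate_Gauss} appears: a single shared codeword no longer suffices once there are three or more partners, and the nested family $V_{\{2,\ldots,k\}}$ is what keeps the chain degraded while holding each partner's transmitted rate at its own $\frac12\log\frac1{D_i}$.

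Next I evaluate Theorem \ref{thm:main} on this distribution. Because the chain is Gaussian and degraded, successive refinability \cite{Successive_Refinement} makes the incremental shared rates telescope, $R''_{\{2,\ldots,L\}}+\cdots+R''_{\{2,\ldots,i\}}=\frac12\log\frac1{D_i}$, so the rate of each partner is $\frac12\log\frac1{D_i}$, while the joint-typicality constraints coupling $U_1$ to the chain reproduce the pair constraints $R_1+R_i\geq\frac12\log\frac1{D_1D_i}+\delta(D_1,D_i,D_{1i})$; the mutual-information manipulations of (\ref{eq:CMS_Gauss_ach_region}), which rely only on the independence of the increments from $(W_1,N)$ and on the Markov/degradation structure of the chain, carry over verbatim. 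It then remains to eliminate the auxiliary rates and show that the projection onto $(R_1,\ldots,R_L)$ coincides with the target polytope.

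This elimination is the main obstacle. As in Theorem \ref{thm:Sum_Rate_Gauss}, both the achievable auxiliary-rate region $\mathcal R^{*}$ and the target region $\mathcal R$ are polyhedra, the containment $\mathcal R^{*}\subseteq\mathcal R$ is immediate, and the real work is to exhibit, at every corner point of $\mathcal R$, a \emph{nonnegative} choice of the auxiliary rates $\{R''_{\{2,\ldots,k\}},R_1',\ldots\}$ realizing it. With $L$ descriptions the number of corner points grows and the nested shared rates must all remain nonnegative; verifying this is exactly where the ordering $D_2\leq\cdots\leq D_L$ (nonnegative increments) and the precise value of $D_{1i}$ in (\ref{eq:sum_rate_L_dist}) are consumed, and it is the only laborious step. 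I would tame it by first arguing that the coupling meets the Ozarow boundary on each pair $\{1,i\}$, so that the only active coupled inequalities are the $L-1$ pair constraints; this collapses the projection to a structured combinatorial check over corner points rather than a full Fourier--Motzkin elimination, and the nonnegativity of $R_1'=\tfrac12\log\frac{1}{D_1}$-type refinement rates follows along the lines of the $R_1'\geq 0$ verification in the proof of Theorem \ref{thm:Sum_Rate_Gauss}.
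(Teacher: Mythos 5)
Your proposal is correct and follows essentially the same route as the paper's own proof: a pairwise Ozarow converse, achievability via the nested chain of shared codewords $V_{\{2,\ldots,k\}}=X+W_2+\cdots+W_k$ (your $N,E_j$ are exactly the paper's $W_2,W_j$) with a single correlation coupling $W_1$ to $W_2$ and independent increments, followed by evaluating Theorem \ref{thm:main} and eliminating the auxiliary rates by the corner-point/nonnegativity argument of Theorem \ref{thm:Sum_Rate_Gauss}. The paper likewise defers that final elimination to the arguments of Theorem \ref{thm:Sum_Rate_Gauss}, so your treatment of it is at the same level of rigor as the published proof.
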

\begin{rem}
It is possible to show that the CMS scheme achieves the minimum sum rate
in several distortion regimes similar to Theorem \ref{thm:Sum_Rate_Gauss}.
However, we omit explicitly stating the distortion regimes here as
they can be derived directly from the above theorem. \end{rem}
\begin{proof}
Following the arguments as in the proof of Theorem \ref{thm:Sum_Rate_Gauss},
the converse for the rate region in (\ref{eq:sum_rate_L_thm1}) follows
directly using Ozarow's converse results in \cite{Ozarow}. The encoding scheme which achieves
this sum rate resembles that in Theorem \ref{thm:Sum_Rate_Gauss}.
First, a quantized version of the source, optimized to achieve $D_{L}$,
is sent in description $L$. This information is also sent as part
of all descriptions $\{2\ldots,L-l\}$. Then, refinement information
is sent as part of description $L-1$ which helps in achieving a lower
distortion $D_{L-1}$. This first layer of refinement information
is also sent as part of all descriptions $\{2,\ldots,L-2\}$. Such
an encoding mechanism is repeated successively. The refinement information
generated at layer $l$ is sent in description $l$, as well as all
descriptions $\{2\ldots,L-l-1\}$. Note that successive refinability
of a Gaussian source under MSE \cite{Successive_Refinement} ensures
that there is no rate loss in encoding the source using such a mechanism.
Finally, the first description carries enough information to achieve
$D_{1}$ and $D_{1i}$, when the respective descriptions are received
at the decoder. 

Specifically, we choose the following joint density for the auxiliary
random variables:
\begin{eqnarray*}
U_{1} & = & X+W_{1}\\
U_{2} & = & X+W_{2}\\
V_{23} & = & X+W_{2}+W_{3}\\
V_{234} & = & X+W_{2}+W_{3}+W_{4}\\
 & \vdots\\
V_{23\ldots L} & = & X+W_{2}+W_{3}+\ldots+W_{L}
\end{eqnarray*}
\begin{equation}
U_{3}=V_{23},\, U_{4}=V_{234},\ldots,U_{L}=V_{23\ldots L}\label{eq:Joint_density_L}
\end{equation}
We set all the other auxiliary random variables to constants and set the functions
$\psi(\cdot)$ to be the respective optimal MSE estimators where $\{W_{1},W_{2},\ldots,W_{L}\}$
are zero mean jointly Gaussian random variables with the covariance
matrix given by: 
\begin{equation}
K_{W_{1}\ldots W_{L}}=\left[\begin{array}{ccccc}
\tilde{\sigma}_{1}^{2} & \rho_{12}\tilde{\sigma}_{1}\tilde{\sigma}_{2} & 0 & \cdots & 0\\
\rho_{12}\tilde{\sigma}_{1}\tilde{\sigma}_{2} & \tilde{\sigma}_{2}^{2} & 0 & \cdots & 0\\
0 & 0 & \ddots &  & 0\\
 & \vdots &  &  & \vdots\\
0 & 0 & 0 & \cdots & \tilde{\sigma}_{L}^{2}
\end{array}\right]
\end{equation}
where $\tilde{\sigma}_{1}^{2}=\sigma_{1}^{2}=\frac{D_{1}}{1-D_{1}}$,
$\tilde{\sigma}_{2}^{2}=\sigma_{2}^{2}=\frac{D_{2}}{1-D_{2}}$ and
$\tilde{\sigma}_{j}^{2}=\sigma_{j}^{2}-\sigma_{j-1}^{2}=\frac{D_{j}}{1-D_{j}}-\frac{D_{j-1}}{1-D_{j-1}}$, 
$\forall j\in\{3,\ldots,L\}$. $\rho_{12}$ is equal to $\rho_{12}^{*}$
($\rho_{12}^{*}$ is defined in eq. (\ref{eq:other_defn-1})). This
induces a correlation of $\rho_{1i}$ (defined in (\ref{eq:rho_L_dist}))
between $U_{1}$ and $U_{i}$. Observe that for any fixed $\rho_{12},\rho_{13},\ldots,\rho_{1L}$,
any distortion tuple satisfying the following conditions is achievable:
\begin{eqnarray}
D_{1i} & \geq & \mbox{Var}\left(X|U_{1},U_{i}\right)\,\, i\in\{2,3,\ldots,L-1\}\nonumber \\
 & = & \frac{\sigma_{1}^{2}\sigma_{i}^{2}(1-\rho_{1i}^{2})}{\sigma_{1}^{2}\sigma_{i}^{2}(1-\rho_{1i}^{2})+\sigma_{1}^{2}+\sigma_{i}^{2}-2\sigma_{1}\sigma_{i}\rho_{1i}}\label{eq:distortion_constraints}
\end{eqnarray}
Next, we have from Theorem \ref{thm:main} that the following rates
are achievable: $\forall l\in\{3,\ldots,L\}$:
\begin{eqnarray}
R_{1}^{'} & \geq & \frac{1}{2}\log\frac{1}{D_{1}}\nonumber \\
R_{2}^{'}+\sum_{i=3}^{L}R_{2\ldots l}^{''} & \geq & \frac{1}{2}\log\frac{1}{D_{2}}\nonumber \\
\sum_{i=l}^{L}R_{2\ldots i}^{''} & \geq & \frac{1}{2}\log\frac{1}{D_{l}}\nonumber \\
R_{1}^{'}+R_{2}^{'}+\sum_{i=l}^{L}R_{2\ldots i}^{''} & \geq & \frac{1}{2}\log\frac{1}{D_{1}D_{2}}+\delta(D_{1},D_{2},D_{12})\nonumber \\
R_{1}^{'}+\sum_{i=l}^{L}R_{2\ldots i}^{''} & \geq & \frac{1}{2}\log\frac{1}{D_{1}D_{l}}+\delta(D_{1},D_{L},D_{1L})\nonumber \\
R_{1} & = & R_{1}^{'}\nonumber \\
R_{2} & = & R_{2}^{'}+\sum_{i=l}^{L}R_{2\ldots i}^{''}\nonumber \\
R_{l} & = & \sum_{i=l}^{L}R_{2\ldots i}^{''}
\end{eqnarray}
Following similar arguments as in Theorem \ref{thm:Sum_Rate_Gauss},
it follows directly that the above rate-region is equivalent to (\ref{eq:sum_rate_L_thm1}),
completing the proof.
\end{proof}

\section{Conclusion}

A novel encoding scheme for the general $L-$channel multiple descriptions
problem was proposed which results in a new achievable rate-distortion
region that subsumes the achievable region due to Venkataramani, Kramer
and Goyal. The proposed encoding scheme adds controlled redundancy
by including a common codeword in every subset of the descriptions.
The common codewords assist in better coordination between the descriptions
leading to a strictly larger region for a fairly general class of
sources and distortion measures. In particular, we showed that the
proposed scheme achieves a strictly larger rate-distortion region
for a Gaussian source under MSE and for a binary symmetric source
under Hamming distortion for all $L\geq3$. We further showed that
CMS achieves the complete rate-distortion region for several asymmetric
cross-sections of the $L-$channel quadratic Gaussian MD problem.
The possible impact of the new ideas presented here, on practical
multiple descriptions encoder design and on certain special cases
of the general setting, such as the symmetric multiple descriptions
problem, will be studied as part of the future work. 


\begin{thebibliography}{10}

\bibitem{VKG}
R.~Venkataramani, G.~Kramer, and V.~Goyal.
\newblock Multiple descriptions coding with many channels.
\newblock {\em IEEE Trans. on Information Theory}, 49(9):2106--2114, Sep 2003.

\bibitem{EGC}
A.~Gamal and T.M. Cover.
\newblock Achievable rates for multiple descriptions.
\newblock {\em IEEE Trans. on Information Theory}, IT-28:851--857, Nov 1982.

\bibitem{ZB}
Z.~Zhang and T.~Berger.
\newblock New results in binary multiple descriptions.
\newblock {\em IEEE Trans. on Information Theory}, IT-33:502--521, Jul 1987.

\bibitem{Ahlswede}
R.~Ahlswede.
\newblock The rate-distortion region for multiple descriptions without excess
  rate.
\newblock {\em IEEE Trans. on Information Theory}, 31(6):721 -- 726, nov 1985.

\bibitem{Ozarow}
L.~Ozarow.
\newblock On a source-coding problem with two channels and three receivers.
\newblock {\em Bell Systems Tech. Journal}, 59(10):1909--1921, Dec 1980.

\bibitem{wang}
J.~Wang, J.~Chen, L.~Zhao, P.~Cuff, and H.~Permuter.
\newblock On the role of the refinement layer in multiple description coding
  and scalable coding.
\newblock {\em IEEE Trans. on Information Theory}, 57(3):1443--1456, Mar 2011.

\bibitem{Ramchandran-1}
S.S. Pradhan, R.~Puri, and K.~Ramchandran.
\newblock n-channel symmetric multiple descriptions - part {I}: (n, k)
  source-channel erasure codes.
\newblock {\em IEEE Trans. on Information Theory}, 50(1):47 -- 61, Jan. 2004.

\bibitem{Ramchandran}
R.~Puri, S.S. Pradhan, and K.~Ramchandran.
\newblock n-channel symmetric multiple descriptions - part {II}: {A}n achievable
  rate-distortion region.
\newblock {\em IEEE Trans. on Information Theory}, 51(4):1377 -- 1392, Apr
  2004.

\bibitem{wang_Vishwanath}
H.~Wang and P.~Viswanath.
\newblock Vector gaussian multiple descriptions with individual and central
  receivers.
\newblock {\em IEEE Trans. on Information Theory}, 53(6):2133 --2153, Jun 2007.

\bibitem{Jun_Chen_ind_central}
J.~Chen.
\newblock Rate region of gaussian multiple description coding with individual
  and central distortion constraints.
\newblock {\em IEEE Trans. on Information Theory}, 55(9):3991 --4005, Sep 2009.

\bibitem{Vaishampayan}
V.A. Vaishampayan.
\newblock Design of multiple description scalar quantizers.
\newblock {\em IEEE Trans. on Information Theory}, 39(3):821 --834, May 1993.

\bibitem{MC_DA}
P.~Koulgi, S.L. Regunathan, and K.~Rose.
\newblock Multiple description quantization by deterministic annealing.
\newblock {\em IEEE Trans. on Information Theory}, 49(8):2067 -- 2075, Aug
  2003.

\bibitem{Viswanatha_2_levels}
H.~Wang and P.~Viswanath.
\newblock Vector gaussian multiple description with two levels of receivers.
\newblock {\em IEEE Trans. on Information Theory}, 55(1):401 --410, Jan 2009.

\bibitem{Asymmetric_MD}
S.~Mohajer, C.~Tian, and S.N. Diggavi.
\newblock Asymmetric multilevel diversity coding and asymmetric gaussian
  multiple descriptions.
\newblock {\em IEEE Trans. on Information Theory}, 56(9):4367 --4387, Sep 2010.

\bibitem{SW}
D.~Slepian and J.~K. Wolf.
\newblock Noiseless coding of correlated information sources.
\newblock {\em IEEE Trans. on Information Theory}, 19:471--480, Jul 1973.

\bibitem{WZ76}
A.~D. Wyner and J.~Ziv.
\newblock The rate-distortion function for source coding with side information
  at the decoder.
\newblock {\em IEEE Trans. on Information Theory}, 22:1--10, Jan 1976.

\bibitem{Tian_Chen_symmetric_K_descriptions}
C.~Tian and J.~Chen.
\newblock New coding schemes for the symmetric k -description problem.
\newblock {\em IEEE Trans. on Information Theory}, 56(10):5344--5365, Oct 2010.

\bibitem{Approximating}
C.~Tian, S.~Mohajer, and S.N. Diggavi.
\newblock Approximating the gaussian multiple description rate region under
  symmetric distortion constraints.
\newblock {\em IEEE Trans. on Information Theory}, 55(8):3869 --3891, Aug 2009.

\bibitem{Song_Shuo_Chen}
J.~Chen L.~Song, S.~Shuo.
\newblock A lower bound on the sum rate of multiple description coding with
  symmetric distortion constraints.
\newblock {\em IEEE Trans. on Information Theory}, 60:7547--7567, Dec 2014.

\bibitem{DIR}
K.~Viswanatha, E.~Akyol, and K.~Rose.
\newblock On optimum communication cost for joint compression and dispersive
  information.
\newblock In {\em Proceedings of IEEE Information Theory Workshop (ITW)}, pages
  1--5, Sep 2010.

\bibitem{fusion}
J.~Nayak, S.~Ramaswamy, and K.~Rose.
\newblock Correlated source coding for fusion storage and selective retrieval.
\newblock In {\em Proceedings of IEEE International Symposium on Information
  Theory}, pages 92--96, Sep 2005.

\bibitem{our_ISIT}
K.~Viswanatha, E.~Akyol, and K.~Rose.
\newblock Combinatorial message sharing for a refined multiple descriptions
  achievable region.
\newblock In {\em Proceedings of IEEE International Symposium on Information
  Theory Proceedings (ISIT)}, pages 1312 --1316, Aug 2011.

\bibitem{our_ITW}
K.~Viswanatha, E.~Akyol, and K.~Rose.
\newblock A strictly improved achievable region for multiple descriptions using
  combinatorial message sharing.
\newblock In {\em Proceedings of IEEE Information Theory Workshop (ITW)}, pages
  100 --104, Oct 2011.

\bibitem{Binned_CMS}
E.~Akyol, K.~Viswanatha, and K.~Rose.
\newblock Combinatorial message sharing and random binning for multiple
  description coding.
\newblock In {\em Proceedings of IEEE International Symposium on Information
  Theory Proceedings (ISIT)}, pages 1371 --1375, Jul 2012.

\bibitem{Binned_CMS_ITW}
E.~Akyol, K.~Viswanatha, and K.~Rose.
\newblock On random binning versus conditional codebook methods in multiple
  descriptions coding.
\newblock In {\em Proc. IEEE Information Theory Workshop (ITW)}, Sep 2012.


\bibitem{our_isit_13}
E.~Akyol, K.~Viswanatha, and K.~Rose.
\newblock On the role of common codewords in quadratic Gaussian multiple descriptions coding
\newblock In {\em Proceedings of IEEE International Symposium on Information
  Theory Proceedings (ISIT)}, pages 1367 -- 1371, Jul 2013.


\bibitem{Shirani_Pradhan_14}
F.~Shirani and S.S. Pradhan.
\newblock An achievable rate-distortion region for the multiple descriptions
  problem.
\newblock In {\em IEEE International Symposium on Information Theory (ISIT)},
  pages 576--580, June 2014.

\bibitem{Cover-book}
T.M. Cover and J.A. Thomas.
\newblock {\em Elements of information theory}.
\newblock Wiley-{I}nterscience, 1991.

\bibitem{Gamal_notes}
A.~Gamal and Y~Kim.
\newblock {\em Network information theory}.
\newblock Cambridge University Press, 2011.

\bibitem{Wyner}
A.D. Wyner.
\newblock The rate-distortion function for source coding with side information
  at the decoder ii : {G}eneral sources.
\newblock {\em Information and Controls}, 38:60--80, Jul 1978.

\bibitem{MD_No_Excess_Marginal_Rate}
Z.~Zhang and T.~Berger.
\newblock Multiple description source coding with no excess marginal rate.
\newblock {\em IEEE Trans. on Information Theory}, 41(2):349--357, Mar 1995.

\bibitem{Zamir}
R.~Zamir.
\newblock Gaussian codes and shannon bounds for multiple descriptions.
\newblock {\em IEEE Trans. on Information Theory}, 45(7):2629 --2636, nov 1999.

\bibitem{Successive_Refinement}
W.H.R. Equitz and T.M. Cover.
\newblock Successive refinement of information.
\newblock {\em IEEE Trans. on Information Theory}, 37(2):269 --275, Mar 1991.

\bibitem{Han_Kobayashi}
T.S. Han and K.~Kobayashi.
\newblock A unified achievable rate region for a general class of multiterminal
  source coding systems.
\newblock {\em IEEE Trans. on Information Theory}, IT-26:277--288, May 1980.

\bibitem{Thomas}
J.~Thomas.
\newblock Feedback can at most double gaussian multiple access channel capacity
  ({C}orresp.).
\newblock {\em IEEE Trans. on Information Theory}, 33(5):711 -- 716, Sep 1987.

\end{thebibliography}
\bibliographystyle{IEEEtran}

\appendix

\section*{Appendix A: Error Bounds in Theorem 1\label{app:Error_Bounds_Thm1}}
\begin{proof}
We follow the notation and the notion of strong typicality defined
in \cite{Han_Kobayashi}. We refer to \cite{Han_Kobayashi} (section
3) for formal definitions and basic Lemmas associated with typicality.\textit{ }

Let $\mathcal{E}$ denote the event of an encoding error. We have:
\begin{eqnarray}
P(\mathcal{E}) & = & P(\mathcal{E}|x^{n}\in\mathcal{T}_{\epsilon}^{n})P(x^{n}\in\mathcal{T}_{\epsilon}^{n})\nonumber \\
 &  & +P(\mathcal{E}|x^{n}\notin\mathcal{T}_{\epsilon}^{n})P(x^{n}\notin\mathcal{T}_{\epsilon}^{n})
\end{eqnarray}
 From standard typicality arguments \cite{Cover-book}, we have $P(x^{n}\notin\mathcal{T}_{\epsilon}^{n})<\epsilon$
as $n\rightarrow\infty$. Hence, it is sufficient to find conditions
on the rates to bound $P(\mathcal{E}|x^{n}\in\mathcal{T}_{\epsilon}^{n})$. 

Towards finding conditions on the rate to bound $P(\mathcal{E}|x^{n}\in\mathcal{T}_{\epsilon}^{n})$,
we define event $\mathcal{A}(\{j\}_{\mathcal{Q}})$, for any index
tuple $\{j\}_{\mathcal{Q}}$, as, 
\begin{eqnarray}
\mathcal{A}(\{j\}_{\mathcal{Q}})=\Bigl\{\left(x^{n},v^{n}(\{j\}_{\mathcal{Q}}),u^{n}(\{j\}_{\mathcal{Q}})\right)\in\mathcal{T}_{\epsilon}^{n}\Bigr\}&\label{eq:event_F-1}
\end{eqnarray}
$\mathcal{Q}\subseteq\mathcal{J}(\mathcal{L})$, where for any set $\mathcal{Q}\in\mathcal{Q}^{*}$ (defined in (\ref{eq:cond_Q_main})),
$v^{n}(\{j\}_{\mathcal{Q}})$ denotes the codeword tuple $\{v_{\mathcal{K}}^{n}(\{j\}_{\mathcal{I}_{W+}(\mathcal{K})},j_{\mathcal{K}})\,\,\forall\mathcal{K}\in\mathcal{Q}\}$
and $u^{n}(\{j\}_{\mathcal{Q}})$ denotes the tuple $\{u_{\mathcal{K}}^{n}(\{j\}_{\mathcal{I}_{1+}(\mathcal{K})},\{j\}_{\mathcal{K}})\,\,\forall\mathcal{K}\subseteq[\mathcal{Q}]_{1},\mathcal{K}\neq\emptyset\}$.
Define random variables
\begin{equation}
\chi(\{j\}_{\mathcal{J}(\mathcal{L})})=\begin{cases}
1 & \mbox{if}\,\,\mathcal{A}(\{j\}_{\mathcal{J}(\mathcal{L})})\,\,\mbox{occurs}\\
0 & \mbox{else}
\end{cases}\label{eq:xi_defn-1}
\end{equation}
We have $P(\mathcal{E}|x^{n}\in\mathcal{T}_{\epsilon}^{n})=P(\Psi=0)$
where $\Psi=\sum_{\mathcal{J}(\mathcal{L})}\chi(\{j\}_{\mathcal{J}(\mathcal{L})})$.
From Chebyshev's inequality, it follows that:
\begin{equation}
P(\Psi=0)\leq P\left[|\Psi-E(\Psi)|\geq E(\Psi)/2\right]\leq\frac{4\mbox{Var}(\Psi)}{\left(E(\Psi)\right)^{2}}\label{eq:P_3-1}
\end{equation}

From standard typicality arguments, we can bound $E(\Psi)$ as follows:
\begin{equation}
E(\Psi)\geq2^{n\sum_{\mathcal{K}\in\mathcal{I}_{1+}}R_{\mathcal{K}}^{''}+n\sum_{l\in\mathcal{L}}R_{l}^{'}-n\left(\alpha(\mathcal{J}(\mathcal{L}))+\epsilon\right)}\label{eq:E_xi}
\end{equation}
where for any set $\mathcal{Q}$ satisfying (\ref{eq:cond_Q_main}),
we define: 
\begin{eqnarray}
\alpha(\mathcal{Q}) & = & \sum_{\mathcal{K}\in\mathcal{Q}-[\mathcal{Q}]_{1}}H\left(V_{\mathcal{K}}|\{V\}_{\mathcal{I}_{|\mathcal{K}|+}(\mathcal{K})}\right)\nonumber \\
 &  & +\sum_{\mathcal{K}\in2^{[\mathcal{Q}]_{1}}-\emptyset}H\left(U_{\mathcal{K}}|\{V\}_{\mathcal{I}_{1+}(\mathcal{K})},\{U\}_{2^{\mathcal{K}}-\emptyset-\mathcal{K}}\right)\nonumber \\
 &  & -H\left(\{V\}_{\mathcal{Q}-[\mathcal{Q}]_{1}},\{U\}_{2^{[\mathcal{Q}]_{1}}-\emptyset}|X\right)\label{eq:alpha_defn_1}
\end{eqnarray}
We follow the convention, $\alpha(\emptyset)=0$. Next consider $\mbox{Var}(\Psi)=E(\Psi^{2})-\left(E(\Psi)\right)^{2}$
where, 
\begin{eqnarray}
E(\Psi^{2})=\sum_{\{j\}_{\mathcal{J}(\mathcal{L})}}\sum_{\{k\}_{\mathcal{J}(\mathcal{L})}}E\left[\chi(\{j\}_{\mathcal{J}(\mathcal{L})})\chi(\{k\}_{\mathcal{J}(\mathcal{L})})\right]\nonumber \\
=\sum_{\{j\}_{\mathcal{J}(\mathcal{L})}}\sum_{\{k\}_{\mathcal{J}(\mathcal{L})}}P\left[\mathcal{A}(\{j\}_{\mathcal{J}(\mathcal{L})})\cap\mathcal{A}(\{k\}_{\mathcal{J}(\mathcal{L})})\right]\label{eq:E_xi_s}
\end{eqnarray}
The probability in (\ref{eq:E_xi_s}) depends on whether $\{j\}_{\mathcal{J}(\mathcal{L})}$
and $\{k\}_{\mathcal{J}(\mathcal{L})}$ are equal for a subset of
indices. Let $\mathcal{Q}\subseteq\mathcal{J}(\mathcal{L})$ such
that $\{j\}_{\mathcal{Q}}=\{k\}_{\mathcal{Q}}$. Observe that, due
to the hierarchical structure in the conditional codebook generation
mechanism, for $\left\{ v^{n}(\{j\}_{\mathcal{Q}}),u^{n}(\{j\}_{\mathcal{Q}})\right\} =\left\{ v^{n}(\{k\}_{\mathcal{Q}}),u^{n}(\{k\}_{\mathcal{Q}})\right\} $
to hold, $\mathcal{Q}$ must satisfy (\ref{eq:cond_Q_main}), i.e.,
it must be such that,
\begin{equation}
\mbox{if }\mathcal{K}\in\mathcal{Q}\,\,\Rightarrow\,\,\,\mathcal{I}_{|\mathcal{K}|+}(\mathcal{K})\subset\mathcal{Q}\label{eq:cond_Q}
\end{equation}
It follows from codebook generation that given the codeword tuple
$\{v^{n}(\{j\}_{\mathcal{Q}}),u^{n}(\{j\}_{\mathcal{Q}})\}$, tuples
$\{v^{n}(\{j\}_{\mathcal{J}(\mathcal{L})-\mathcal{Q}}),u^{n}(\{j\}_{\mathcal{J}(\mathcal{L})-\mathcal{Q}})\}$
and $\{v^{n}(\{k\}_{\mathcal{J}(\mathcal{L})-\mathcal{Q}}),u^{n}(\{k\}_{\mathcal{J}(\mathcal{L})-\mathcal{Q}})\}$
are independent and identically distributed. Hence we can rewrite
the probability in (\ref{eq:E_xi_s}) as:
\begin{eqnarray}
 &  & P\left[\mathcal{A}(\{j\}_{\mathcal{J}(\mathcal{L})})\cap\mathcal{A}(\{k\}_{\mathcal{J}(\mathcal{L})})\right]\nonumber \\
 &  & =\left(P\left[\mathcal{A}(\{j\}_{\mathcal{J}(\mathcal{L})-\mathcal{Q}})\biggl|\mathcal{A}(\{j\}_{\mathcal{Q}}\right]\right)^{2}\nonumber \\
 &  & \times P\left[\mathcal{A}(\{j\}_{\mathcal{Q}})\right]\label{eq:upper_bound_p_var}\\
 &  & =\left(P\left[\mathcal{A}(\{j\}_{\mathcal{J}(\mathcal{L})})\biggl|\mathcal{A}(\{j\}_{\mathcal{Q}}\right]\right)^{2}\nonumber \\
 &  & \times P\left[\mathcal{A}(\{j\}_{\mathcal{Q}})\right]\\
 &  & =\frac{\left(P\left[\mathcal{A}(\{j\}_{\mathcal{J}(\mathcal{L})})\right]\right)^{2}}{P\left[\mathcal{A}(\{j\}_{\mathcal{Q}})\right]}
\end{eqnarray}
Note that if $\mathcal{Q}=\emptyset$ (i.e., $\{j\}_{\mathcal{J}(\mathcal{L})}$
are $\{k\}_{\mathcal{J}(\mathcal{L})}$ are not equal for any subset
of the indices), then $P\left[\mathcal{A}(\{j\}_{\mathcal{J}(\mathcal{L})})\cap\mathcal{A}(\{k\}_{\mathcal{J}(\mathcal{L})})\right]=\left(P\left[\mathcal{A}(\{j\}_{\mathcal{J}(\mathcal{L})})\right]\right)^{2}$.
Hence, we bound $\mbox{Var}(\Psi)$ by (see \cite{VKG} for a similar
argument):
\begin{eqnarray}
\mbox{Var}(\Psi) & \leq & \sum\biggl\{ N(\mathcal{Q})P\left[\mathcal{A}(\{j\}_{\mathcal{Q}})\right]\nonumber \\
 &  & \times\left(P\left[\mathcal{A}(\{j\}_{\mathcal{J}(\mathcal{L})})\biggl|\mathcal{A}(\{j\}_{\mathcal{Q}})\right]\right)^{2}\biggl\}\nonumber \\
 & = & \sum\biggl\{ N(\mathcal{Q})\frac{\left(P\left[\mathcal{A}(\{j\}_{\mathcal{J}(\mathcal{L})})\right]\right)^{2}}{P\left[\mathcal{A}(\{j\}_{\mathcal{Q}})\right]}\biggl\}
\end{eqnarray}
where the summation is over all $\mathcal{Q}\subseteq2^{\mathcal{L}}-\emptyset$
such that (\ref{eq:cond_Q_main}) holds (i.e., over all $\mathcal{Q}\in\mathcal{Q}^{*}$)
and $\mathcal{A}(\{j\}_{\mathcal{Q}})$ denotes the event that $\Bigl\{\mathcal{A}(\{j\}_{\mathcal{Q}})=\left(x^{n},v^{n}(\{j\}_{\mathcal{Q}}),u^{n}(\{j\}_{\mathcal{Q}})\right)\in\mathcal{T}_{\epsilon}^{n}\Bigr\}$.
$N(\mathcal{Q})$ denotes the total number of ways of choosing $\{j\}_{\mathcal{J}(\mathcal{L})}$
and $\{k\}_{\mathcal{J}(\mathcal{L})}$ such that they overlap in
the subset $\mathcal{Q}$. An upper bound on $N(\mathcal{Q})$ is
given in eq. (\ref{eq:upper_bound_rateE_var}). Also observe that
the term corresponding to $\mathcal{Q}=\emptyset$ gets canceled with the
`$-\left(E(\Psi)\right)^{2}$' term in $\mbox{Var}(\Psi)$ leaving
a summation over all non-empty $\mathcal{Q}\in\mathcal{Q}^{*}$.

\begin{figure*}[!t]
\begin{eqnarray}
N(\mathcal{Q}) & = & 2^{n\left(\sum_{\mathcal{K}\in\mathcal{Q}-[\mathcal{Q}]_{1}}R_{\mathcal{K}}^{''}+\sum_{l\in[\mathcal{Q}]_{1}}R_{l}^{'}\right)}\prod_{\mathcal{K}\in\mathcal{J}(\mathcal{L})-\mathcal{Q}-[\mathcal{J}(\mathcal{L})-\mathcal{Q}]_{1}}2^{nR_{i,\mathcal{K}}^{'}}(2^{nR_{i,\mathcal{K}}^{'}}-1)\prod_{l\in[\mathcal{J}(\mathcal{L})-\mathcal{Q}]_{1}}2^{nR_{i,l}^{'}}(2^{nR_{i,l}^{'}}-1)\nonumber \\
 & \leq & 2^{n\{\sum_{\mathcal{K}\in\mathcal{Q}-[\mathcal{Q}]_{1}}R_{i,\mathcal{K}}^{''}+\sum_{l\in[\mathcal{Q}]_{1}}R_{i,l}^{'}+2\sum_{\mathcal{K}\in\mathcal{J}(\mathcal{L})-\mathcal{Q}-[\mathcal{J}(\mathcal{L})-\mathcal{Q}]_{1}}R_{i,\mathcal{K}}^{''}+2\sum_{l\in[\mathcal{J}(\mathcal{L})-\mathcal{Q}]_{1}}R_{i,l}^{'}\}}\nonumber \\
 & = & 2^{n\{2\sum_{\mathcal{K}\in I_{1+}}R_{i,\mathcal{K}}^{''}+2\sum_{l\in\mathcal{L}}R_{i,l}^{'}-\sum_{\mathcal{K}\in\mathcal{Q}-[\mathcal{Q}]_{1}}R_{i,\mathcal{K}}^{''}-\sum_{l\in[\mathcal{Q}]_{1}}R_{i,l}^{'}\}}\label{eq:upper_bound_rateE_var}
\end{eqnarray}
\end{figure*}

On substituting the upper bound for $N(\mathcal{Q})$, we have:

\begin{eqnarray}
\mbox{Var}(\Psi) & \leq & \sum\biggl\{2^{-2n\left(\alpha(\mathcal{J}(\mathcal{L}))-\sum_{\mathcal{K}\in\mathcal{I}_{1+}}R_{\mathcal{K}}^{''}-\sum_{l\in\mathcal{L}}R_{l}^{'}\right)}\nonumber \\
 &  & 2^{n\left(\alpha(\mathcal{Q})-\sum_{\mathcal{K}\in\mathcal{Q}-[\mathcal{Q}]_{1}}R_{\mathcal{K}}^{''}-\sum_{l\in[\mathcal{Q}]_{1}}R_{l}^{'}\right)+3n\epsilon}\biggl\}\label{eq:Var_xi}
\end{eqnarray}
where the summation is over all $\mathcal{Q}\in2^{\mathcal{L}}-\emptyset$
such that (\ref{eq:cond_Q_main}) holds, i.e., over all $\mathcal{Q}\in\mathcal{Q}^{*}$.

Inserting (\ref{eq:Var_xi}) and (\ref{eq:E_xi}) in (\ref{eq:P_3-1}),
we get,
\begin{equation}
P(\mathcal{E})\leq4\sum2^{n\left(\alpha(\mathcal{Q})-\sum_{\mathcal{K}\in\mathcal{Q}-[\mathcal{Q}]_{1}}R_{\mathcal{K}}^{''}-\sum_{l\in[\mathcal{Q}]_{1}}R_{l}^{'}\right)+5n\epsilon}\label{eq:P_4}
\end{equation}
where the summation is over all $\mathcal{Q}\in2^{\mathcal{L}}-\emptyset$
such that (\ref{eq:cond_Q_main}) holds, i.e., over all $\mathcal{Q}\in\mathcal{Q}^{*}$
. $P(\mathcal{E})$ can be made arbitrarily small if $\sum_{\mathcal{K}\in\mathcal{Q}-[\mathcal{Q}]_{1}}R_{\mathcal{K}}^{''}+\sum_{l\in[\mathcal{Q}]_{1}}R_{l}^{'}>\alpha(\mathcal{Q})\,\,\forall\mathcal{Q}\in\mathcal{Q}^{*}$. 
\end{proof}

\section*{Appendix B : Proof of Theorem \ref{thm:General_CMS-1}.iii\label{sec:Appendix-B}}

In this appendix, we will show that a Gaussian random variable, under
MSE distortion belongs to both $\mathcal{Z}_{EC}$ and $\mathcal{Z}_{ZB}$
and hence CMS achieves points strictly outside the VKG region. Throughout this section, we use the notation $\mathcal{C}_G$, to denote the set of all jointly Gaussian distributions.

First, let us recall the definitions of $\mathcal{Z}_{ZB}$ and $\mathcal{Z}_{EC}$.
$\mathcal{Z}_{ZB}$ is defined in Definition \ref{definition:Defn_ZZB}
as the set of all random variables $X$ for which there exists a strict
sub-optimality in the ZB region (with respect to the given distortion
measures) when the closure of the rates in (\ref{eq:ZB}) is defined
only over joint distributions that satisfy the following conditions:
\begin{eqnarray}
U_{1}\leftrightarrow & (X,V_{12}) & \leftrightarrow U_{2}\nonumber \\
E\left[d_{\mathcal{K}}(X,\psi_{\mathcal{K}}(U_{\mathcal{K}}))\right] & \leq & D_{\mathcal{K}},\,\,\mathcal{K}\in\{1,2,12\}\nonumber \\
U_{12} & = & f(U_{1},U_{2},V_{12})\label{eq:ZZB-1-1-1}
\end{eqnarray}
where $f$ is some deterministic function. Note that the statement
in Definition \ref{definition:Defn_ZZB} is more general than the
above description. However, for a Gaussian source, under MSE, it is
possible to show strict sub-optimality in ZB region even for $\epsilon=0$,
which leads to the above constraints for the joint distributions.
Similarly, $X$ is said to belong to $\mathcal{Z}_{EC}$ if there
exists a strict sub-optimality in the EC region when the closure of
the rates in (\ref{eq:EGC}) is defined only over joint distributions
that satisfy the following conditions:
\begin{eqnarray}
U_{1}\leftrightarrow & X & \leftrightarrow U_{2}\nonumber \\
E\left[d_{\mathcal{K}}(X,\psi_{\mathcal{K}}(U_{\mathcal{K}}))\right] & \leq & D_{\mathcal{K}},\,\,\mathcal{K}\in\{1,2,12\}\nonumber \\
U_{12} & = & f(U_{1},U_{2})\label{eq:ZZB-1-1-1-1}
\end{eqnarray}
where $f$ is some deterministic function. We next show that a Gaussian
source under MSE belongs to both $\mathcal{Z}_{EC}$ and $\mathcal{Z}_{ZB}$.
We first prove the result for $\mathcal{Z}_{EC}$ and then extend
similar arguments for $\mathcal{Z}_{ZB}$. 


1) \textbf{Proof for $\mathcal{Z}_{EC}$}: We first give an intuitive
argument to justify the claim and then follow it up with the formal
proof. It follows from Ozarow's results (see also \cite{EGC}) that there exists
a regime of distortions $(D_{1},D_{2},D_{12})$ for which the following
rate region is achievable (and complete):
\begin{eqnarray}
R_{1} & \geq & \frac{1}{2}\log\frac{1}{D_{1}}\nonumber \\
R_{2} & \geq & \frac{1}{2}\log\frac{1}{D_{2}}\nonumber \\
R_{1}+R_{2} & \geq & \frac{1}{2}\log\frac{1}{D_{12}}\label{eq:high_dist_region}
\end{eqnarray}
i.e., there is no excess rate incurred due to encoding the source
using two descriptions. Equivalently, there exists a regime of distortions
for which the excess sum rate term in the EC region (i.e., $I(U_{1};U_{2})$)
must be set to zero to achieve the complete rate-region. We will show
that, if we restrict the optimization to conditionally independent
joint densities, then it is impossible to simultaneously satisfy all
the distortions and achieve $I(U_{1};U_{2})=0$.

We next provide a formal proof. Our objective is to show that for the 2-descriptions quadratic Gaussian
MD problem, there is a strict sub-optimality if we perform the EC
encoding scheme and restrict the optimization to joint densities satisfying
the following conditions:

\begin{eqnarray}
P(U_{1},U_{2}|X) & = & P(U_{1}|X)\nonumber \\
 &  & \times P(U_{2}|X)\nonumber \\
E\left[(X-\psi_{i}(U_{i}))^{2}\right] & \leq & D_{i},\,\, i\in\{1,2\}\nonumber \\
E\left[(X-\psi_{12}(U_{12}))^{2}\right] & \leq & D_{12}\nonumber \\
U_{12} & = & f(U_{1},U_{2})\label{eq:EC_CMS}
\end{eqnarray}
for some functions $\psi_{1},\psi_{2}$,$\psi_{12}$ and $f$. We
denote by $\mathcal{RD}_{EC}^{IQ}$, the closure of all the achievable
RD tuples using the EC encoding scheme, over all joint densities satisfying
(\ref{eq:EC_CMS}). We need to show that $\mathcal{RD}_{EC}^{IQ}$
is strictly smaller than $\mathcal{RD}_{G}^{2}$. We consider one
point in $\mathcal{RD}_{G}^{2}$ and show that it is not contained
in $\mathcal{RD}_{EC}^{IQ}$. 

Let the distortions be such that $D_{12}\leq D_{1}+D_{2}-1$ holds.
This regime of distortions has been termed as the `high distortion
regime' in the literature and the complete rate region under these
constraints is given by (\ref{eq:high_dist_region}) (see for example \cite{Ozarow,Gamal_notes,Zamir}).
Let us consider the following rate point : $P_{0}\triangleq(R_{1},R_{2})=(\frac{1}{2}\log\frac{1}{D_{1}},\frac{1}{2}\log\frac{D_{1}}{D_{12}})$.
We will show that this point in not contained in $\mathcal{RD}_{EC}^{IQ}$
when $D_{12}\leq D_{1}+D_{2}-1$. We first rewrite the EC region for
any joint density over the random variables $(X,U_{1},U_{2},U_{12})$
and functions $\psi_{1},\psi_{2}$ and $\psi_{12}$:
\begin{eqnarray}
R_{1} & \geq & I(X;U_{1})\nonumber \\
R_{2} & \geq & I(X;U_{2})\nonumber \\
R_{1}+R_{2} & \geq & I(X;U_{1},U_{2},U_{12})+I(U_{1};U_{2})\\
D_{\mathcal{K}} & \geq & E\left[(X-\psi_{\mathcal{K}}(U_{\mathcal{K}}))^{2}\right],\mathcal{K}\in2^{\{1,2\}}-\emptyset\nonumber\label{eq:EGC-1}
\end{eqnarray}
Observe that, to achieve $P_{0}$, we have to impose the joint density
for $(X,U_{1})$ and the function $\psi_{1}$ to be RD optimal at
$D_{1}$. This requires $(X,U_{1})$ to be jointly Gaussian and $\psi_{1}$
to be the optimal estimator of $X$ given $U_{1}$. Note that,
if $P(U_{1}|X)$ is not Gaussian (with $\psi_{1}$ being the MMSE estimator), it is possible to construct a jointly Gaussian distribution
which achieves the same distortion with smaller rate. For example,
one could construct a distribution $(X,\tilde{U}_{1}) \in \mathcal{C}_G$,
with variance of $P(\tilde{U}_{1}|X)$ equal to the variance of $P(\psi_{1}(U_{1})|X)$.
As a Gaussian distribution achieves maximum entropy among all distributions 
with the same variance, $I(X;\tilde{U}_{1})<I(X;U_{1})$, leading
to a smaller rate for $R_{1}$. Therefore, we can assume that $(X,U_{1}) \in \mathcal{C}_G$. We denote this joint
density by $P_{G}(X,U_{1})$. It follows that the infimum $R_{2}$
achievable using an independent quantization scheme when $R_{1}=\frac{1}{2}\log\frac{1}{D_{1}}$
is given by:
\begin{eqnarray}
R_{EC}^{IQ} & = & \inf R_{2}:\Bigl\{ R_{1}=\frac{1}{2}\log\frac{1}{D_{1}},\nonumber \\
 &  & (R_{1},R_{2})\in\mathcal{RD}_{EC}^{IQ}\Bigr\}\nonumber \\
 & = & \inf \Bigl\{ I(X;U_{2},U_{12}|U_{1})+I(U_{1};U_{2}) \Bigl\} \label{eq:R_EC_IQ}
\end{eqnarray}
where the infimum is over all joint densities $P(X,U_{1},U_{2},U_{12})=P_{G}(X,U_{1})P(U_{2},U_{12}|X,U_{1})$
and functions $\psi_{1},\psi_{2},\psi_{12}$ and $f$ satisfying (\ref{eq:EC_CMS}),
where $(X,U_{1}) \in \mathcal{C}_G$ is RD optimal at $D_{1}$. We will next show that $R_{EC}^{IQ}>\frac{1}{2}\log\frac{D_{1}}{D_{12}}$.

In the following Lemma, we begin by proving that $R_{EC}^{IQ}$ is
greater than or equal $R_{EC}^{G}$, where $R_{EC}^{G}$ is defined
as:

\begin{eqnarray}
R_{EC}^{G} & = & \inf \Bigl\{ I(X;\tilde{U}_{2},\tilde{U}_{12}|U_{1})+I(U_{1};\tilde{U}_{2}) \Bigl\} \label{eq:R_EC_G}
\end{eqnarray}
where the infimum is over all jointly Gaussian densities $Q(X,U_{1},\tilde{U}_{2},\tilde{U}_{12})=P_{G}(X,U_{1})Q(\tilde{U}_{2},\tilde{U}_{12}|X,U_{1})$
satisfying the following conditions:
\begin{eqnarray}
Q(U_{1},\tilde{U}_{2}|X) & = & Q(U_{1}|X)\nonumber \\
 &  & \times Q(\tilde{U}_{2}|X)\nonumber \\
E\left[(X-\tilde{U}_{2})^{2}\right] & \leq & D_{2}\nonumber \\
E\left[(X-\tilde{U}_{12})^{2}\right] & \leq & D_{12}\label{eq:EC_CMS-1}
\end{eqnarray}

\begin{lem}
\label{lemma:before-last}Let $R_{EC}^{IQ}$ be defined as in (\ref{eq:R_EC_IQ})
and $R_{EC}^{G}$ be defined as in (\ref{eq:R_EC_G}). Then, we have:
\begin{equation}
R_{EC}^{IQ}\geq R_{EC}^{G}\label{eq:RECIQ_g_RECG}
\end{equation}
 \end{lem}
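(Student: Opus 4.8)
The plan is to prove the lemma by a \emph{Gaussianization} argument: starting from an arbitrary feasible (possibly non-Gaussian) joint distribution that nearly attains $R_{EC}^{IQ}$, I would exhibit a jointly Gaussian distribution that is admissible for the optimization defining $R_{EC}^{G}$ and whose objective is no larger. Since $R_{EC}^{G}$ is an infimum over jointly Gaussian test channels, producing one such Gaussian point with objective at most the given general value, and then taking the infimum over all general distributions, would yield $R_{EC}^{IQ}\ge R_{EC}^{G}$.

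First I would simplify the objective in (\ref{eq:R_EC_IQ}). Because the constraints (\ref{eq:EC_CMS}) force $U_{12}=f(U_1,U_2)$, conditioning on $U_{12}$ adds nothing beyond $(U_1,U_2)$, so $I(X;U_2,U_{12}|U_1)=I(X;U_2|U_1)$. Combining this with the conditional-independence constraint $U_1\leftrightarrow X\leftrightarrow U_2$, which makes $I(U_1;U_2|X)=0$, the chain rule gives
\begin{equation}
I(X;U_2,U_{12}|U_1)+I(U_1;U_2)=I(X,U_1;U_2)=I(X;U_2).
\end{equation}
Thus the IQ objective is simply $I(X;U_2)=\tfrac12\log\tfrac{1}{a}$, where $a$ denotes the conditional entropy power of $X$ given $U_2$. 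The same simplification applies to $R_{EC}^{G}$ once one takes the (optimal) choice $\tilde U_{12}=E[X|U_1,\tilde U_2]$, reducing it to an infimum of $I(X;\tilde U_2)$ over jointly Gaussian feasible $\tilde U_2$.

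Next I would build the comparison Gaussian. Recall $(X,U_1)$ is Gaussian and RD-optimal at $D_1$, i.e. $U_1=X+N_1$ with $\mathrm{Var}(X|U_1)=D_1$ and $\sigma_1^2=\mathrm{Var}(N_1)$. Let $\tilde U_2=X+\tilde N_2$ with $\tilde N_2$ Gaussian and independent of $(X,N_1)$, so that $\tilde U_2\leftrightarrow X\leftrightarrow U_1$ is preserved, and pick $\mathrm{Var}(\tilde N_2)$ so that $\mathrm{mmse}(X|\tilde U_2)=a$; then $I(X;\tilde U_2)=\tfrac12\log\tfrac1a$ matches the general objective \emph{exactly}. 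The individual constraint is immediate, since $a=N(X|U_2)\le\mathrm{Var}(X|U_2)\le D_2$. For the joint distortion, conditioned on $U_2$ the observation $U_1=X+N_1$ is a sum of the (conditionally) independent $X$ and Gaussian $N_1$, so the conditional entropy-power inequality yields
\begin{equation}
N(X|U_1,U_2)\le\frac{a\,\sigma_1^{2}}{a+\sigma_1^{2}}=\mathrm{mmse}(X|U_1,\tilde U_2).
\end{equation}

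The step I expect to be the main obstacle is certifying that this constructed Gaussian actually meets the $D_{12}$ constraint. The entropy-power inequality above controls $N(X|U_1,U_2)$, but feasibility requires passing to the \emph{mean-squared error} of the original scheme, namely establishing
\begin{equation}
\frac{a\,\sigma_1^{2}}{a+\sigma_1^{2}}\le \mathrm{mmse}(X|U_1,U_2)=d_{12}\le D_{12}.
\end{equation}
This is a \emph{converse} (extremal) statement, asserting that among all auxiliaries sharing the conditional-independence structure and a given conditional entropy power $a$, the Gaussian one has the \emph{smallest} joint MMSE; it does not follow from the entropy-power inequality alone, whose natural direction bounds $N(X|U_1,U_2)$ from above rather than $d_{12}$ from below. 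The delicate point is that the naive entropy-power-to-MMSE comparison can fail for degenerate, highly non-Gaussian posteriors (e.g. strongly bimodal $X|U_2$), and precisely those posteriors carry large $\mathrm{Var}(X|U_2)$ and are excluded by the individual constraint $d_2\le D_2$. I would therefore carry out this last step by combining the conditional EPI with the individual-distortion constraint, which is exactly where the conditional independence of $U_1$ and the auxiliaries given $X$ makes the Gaussian extremal; establishing this inequality cleanly is the crux of the proof.
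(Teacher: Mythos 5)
You have correctly reduced the IQ objective to $I(X;U_2)$ under the constraints (\ref{eq:EC_CMS}), and you have correctly located the crux --- but the inequality you defer to the last step is not merely delicate, it is false, so this route cannot be completed. Tying the Gaussian surrogate to the conditional entropy power $a=N(X|U_2)$ fixes its joint distortion at $\frac{a\sigma_1^2}{a+\sigma_1^2}$, so when the $D_{12}$ constraint of the original scheme is tight, feasibility of your surrogate requires
\begin{equation}
\mathrm{mmse}(X|U_1,U_2)\;\geq\;\frac{a\,\sigma_1^{2}}{a+\sigma_1^{2}},
\end{equation}
i.e., that among all conditionally independent auxiliaries with a given conditional entropy power, the Gaussian posterior has the \emph{smallest} joint MMSE. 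Multimodal posteriors violate this. Let $U_2$ be such that the conditional law of $X$ given $U_2=u_2$ is (approximately) a uniform mixture of $M\gg1$ narrow Gaussians of width $\epsilon$ centered on a lattice of spacing $\Delta$, with $\epsilon\ll\sigma_1\ll\Delta$; such a $U_2$ respects $U_1\leftrightarrow X\leftrightarrow U_2$ (e.g., $U_2$ reveals a coarse quantization of $X$ together with a dithered fine residue) and satisfies the marginal constraint whenever $(M\Delta)^2/12\leq D_2$. Then $a\approx M^{2}\epsilon^{2}$, while from $U_1=X+N_1$ the correct lattice point is decoded essentially without error, so $\mathrm{mmse}(X|U_1,U_2)\approx\epsilon^{2}$, which is far below $\frac{a\sigma_1^2}{a+\sigma_1^2}\approx\min\{M^{2}\epsilon^{2},\sigma_1^{2}\}$ (e.g., $M=10$, $\Delta=0.1$, $\sigma_1=10^{-2}$, $\epsilon=10^{-3}$ gives $10^{-6}$ versus $5\times10^{-5}$). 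The individual-distortion constraint does not exclude these posteriors, and no EPI-type argument can close the gap: as you yourself observe, the EPI bounds $N(X|U_1,U_2)$ from \emph{above}, whereas feasibility needs a \emph{lower} bound on the true MMSE --- and that bound fails.

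The paper's proof avoids the issue by Gaussianizing the \emph{reconstructions} rather than the information content: it introduces $(\tilde U_2,\tilde U_{12})$ jointly Gaussian with $(X,U_1)$ whose second-order statistics match those of $(\psi_2(U_2),\psi_{12}(U_{12}))$. Feasibility in (\ref{eq:EC_CMS-1}) is then automatic, because both distortions and the conditional uncorrelatedness of $U_1$ and $\tilde U_2$ given $X$ (which, under a Gaussian law, is conditional independence) are second-moment functionals and are preserved exactly; and the objective in (\ref{eq:R_EC_IQ}) can only decrease, since replacing $(U_2,U_{12})$ by the functions $(\psi_2(U_2),\psi_{12}(U_{12}))$ and then by their maximum-entropy Gaussian surrogate only increases the conditional entropies $H(X|U_1,\cdot,\cdot)$ and $H(U_1|\cdot)$. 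In short, the paper accepts a surrogate whose rate may be strictly \emph{smaller} than that of the original scheme --- which is all the lemma needs --- in exchange for guaranteed feasibility, whereas your surrogate pins the rate exactly and thereby sacrifices feasibility. If you wish to keep your cleaner objective $I(X;U_2)$, the same repair applies: match the second moments of the estimators, not the mutual information.
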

\begin{proof}
Consider any joint density $P_{G}(X,U_{1})P(U_{2},U_{12}|X,U_{1})$
and functions $\psi_{1},\psi_{2}$,$\psi_{12},$$f$, satisfying (\ref{eq:EC_CMS}),
where $(X,U_{1})$ are distributed according to a jointly Gaussian
density which is RD optimal at $D_{1}$. We now construct a joint 
density $P_{G}(X,U_{1})Q(\tilde{U}_{2},\tilde{U}_{12}|X,\tilde{U}_{1})$
that achieves a smaller value for the quantity in (\ref{eq:R_EC_IQ}) and is in $\mathcal{C}_G$.
Consider $P_{G}(X,U_{1})Q(\tilde{U}_{2},\tilde{U}_{12}|X,U_{1}) \in \mathcal{C}_G$
such that $K_{\tilde{U}_{2},\tilde{U}_{12}|X,U_{1}}=K_{\psi_{2}(U_{2}),\psi_{12}(U_{12})|X,U_{1}}$,
i.e., the random variables $(\tilde{U}_{2},\tilde{U}_{12})$ have
the same covariance matrix as $(\psi_{2}(U_{2}),\psi_{12}(U_{12}))$,
given $(X,U_{1})$. Clearly, all the conditions in (\ref{eq:EC_CMS-1})
are satisfied by the joint density $Q$. We need to show that the
quantity in (\ref{eq:R_EC_IQ}) is smaller for the joint density $Q$
compared to $P$. Recall that, for a fixed covariance matrix, a Gaussian
distribution over the relevant random variables maximizes the conditional
entropy \cite{Thomas}. Hence, we have:
\begin{eqnarray}
I(X;U_{2},U_{12}|U_{1}) & = & H(X|U_{1})-H(X|U_{1},U_{2},U_{12})\nonumber \\
+I(U_{1};U_{2}) &  & +H(U_{1})-H(U_{1}|U_{2})\nonumber \\
 & \geq & H(X|U_{1})-H(X|U_{1},\tilde{U}_{2},\tilde{U}_{12})\nonumber \\
 &  & +H(U_{1})-H(U_{1}|\tilde{U}_{2})\nonumber \\
 & = & I(X;\tilde{U}_{2},\tilde{U}_{12}|U_{1})+I(U_{1};\tilde{U}_{2})\label{eq:Gauss_ineq}
\end{eqnarray}
leading to $R_{EC}^{IQ}\geq R_{EC}^{G}$.
\end{proof}
Equipped with this result, we continue with our proof to show that $R_{EC}^{G}>\frac{1}{2}\log\frac{D_{1}}{D_{12}}$.
Consider the following series of inequalities:
\begin{eqnarray}
R_{EC}^{IQ}\geq R_{EC}^{G} & = & \inf \Bigl\{ I(X;\tilde{U}_{2},\tilde{U}_{12}|U_{1})+I(U_{1};\tilde{U}_{2}) \Bigl\} \nonumber \\
 & = & \inf \Bigl\{ I(X;\tilde{U}_{2},\tilde{U}_{12},U_{1})+I(U_{1};\tilde{U}_{2})\nonumber \\
 &  & -\frac{1}{2}\log\frac{1}{D_{1}} \Bigl\} \nonumber \\
 & \geq^{(a)} & \frac{1}{2}\log\frac{D_{1}}{D_{12}}+\inf I(U_{1};\tilde{U}_{2})\nonumber \\
 & >^{(b)} & \frac{1}{2}\log\frac{D_{1}}{D_{12}}
\end{eqnarray}
where the infimum is over all $Q(\tilde{U}_{2},\tilde{U}_{12}|X,\tilde{U}_{1}) \in \mathcal{C}_G$
satisfying (\ref{eq:EC_CMS-1}). Note that (a) follows from the fact
that $E\left[(X-\tilde{U}_{12})^{2}\right]\leq D_{12}$ (and therefore
$I(X;\tilde{U}_{2},\tilde{U}_{12},U_{1})\geq I(X;\tilde{U}_{12})\geq\frac{1}{2}\log\frac{1}{D_{12}}$).
(b) follows from the fact that $(X,U_{1},\tilde{U}_{2})$ are jointly
Gaussian satisfying the Markov condition $U_{1}\leftrightarrow X\leftrightarrow\tilde{U}_{2}$,
where $I(X;U_{1})=\frac{1}{2}\log\frac{1}{D_{1}}$ and $I(X;\tilde{U}_{2})\geq\frac{1}{2}\log\frac{1}{D_{2}}$.
Therefore, $\inf I(U_{1};\tilde{U}_{2})>0$, proving that there is
a strict sub-optimality if we restrict the EC coding scheme to an
independent quantization mechanism for a Gaussian random variable
under MSE. Hence a Gaussian random variable, under MSE, belongs to
$\mathcal{Z}_{EC}$. 

2) \textbf{Proof for $\mathcal{Z}_{ZB}$}: Our next objective is to
prove that a Gaussian random variable under MSE also belongs to $\mathcal{Z}_{ZB}$.
Equivalently, we need to show that there is a strict sub-optimality
in the ZB region when we restrict the optimization to joint densities
satisfying the following constraints:

\begin{eqnarray}
P(U_{1},U_{2}|X,V_{12}) & = & P(U_{1}|X,V_{12})\nonumber \\
 &  & \times P(U_{2}|X,V_{12})\nonumber \\
E\left[(X-\psi_{i}(U_{i}))^{2}\right] & \leq & D_{i},\,\, i\in\{1,2\}\nonumber \\
E\left[(X-\psi_{12}(U_{12}))^{2}\right] & \leq & D_{12}\nonumber \\
U_{12} & = & f(U_{1},U_{2},V_{12})\label{eq:ZB_CMS_1}
\end{eqnarray}
The proof follows in very similar lines to that for $\mathcal{Z}_{EC}$,
the main difference being that all the quantities are now defined
conditioned on every value of $V_{12}$. Recall that the ZB region
achievable using any joint density $P(X,V_{12},U_{1},U_{2},U_{12})$
is given by:
\begin{eqnarray}
R_{1} & \geq & I(X;V_{12},U_{1})\nonumber \\
R_{2} & \geq & I(X;V_{12},U_{2})\nonumber \\
R_{1}+R_{2} & \geq & I(X;V_{12})+I(U_{1};U_{2}|V_{12})\nonumber \\
 &  & +I(X;V_{12},U_{1},U_{2},U_{12}) \\
D_{\mathcal{K}} & \geq & E\left[(X-\psi_{\mathcal{K}}(U_{\mathcal{K}}))^{2}\right],\,\,\mathcal{K}\subseteq\{1,2\},\mathcal{K}\neq\emptyset\nonumber
\end{eqnarray}
Before we continue with the proof, we take a small digression and
prove certain properties of the ZB region. We summarize these properties
in the following lemma.
\begin{lem}
\label{lemma:End_Lemma}The closure of the RD-tuples in $\mathcal{RD}_{ZB}$
can be restricted to joint distributions that satisfy the following
Markov chain conditions, without any loss in the achievable region:
\begin{eqnarray}
X & \leftrightarrow & U_{1}\leftrightarrow V_{12}\nonumber \\
X & \leftrightarrow & U_{1}\leftrightarrow V_{12}\nonumber \\
X & \leftrightarrow & U_{12}\leftrightarrow {U_{1},U_{2},V_{12}}\label{eq:ZB_end_Markov}
\end{eqnarray}
\end{lem}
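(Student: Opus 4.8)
The plan is to prove Lemma~\ref{lemma:End_Lemma} by ``distribution surgery'': given any joint law $P(X,V_{12},U_1,U_2,U_{12})$ and reconstruction maps $\psi_1,\psi_2,\psi_{12}$ achieving a tuple in $\mathcal{RD}_{ZB}$, I would exhibit a modified law that obeys the Markov conditions in (\ref{eq:ZB_end_Markov}), preserves all three distortion constraints, and leaves each of the three ZB rate bounds nonincreasing. Since the restricted region is trivially contained in $\mathcal{RD}_{ZB}$, this two-sided containment yields equality. As $\mathcal{RD}_{ZB}$ is a closure, I would carry out the construction for interior points and appeal to continuity of the mutual-information functionals to extend the conclusion to the boundary. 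The conditions are imposed in a fixed order, checking after each surgery that the relations secured by the previous one survive.

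For the two ``common-versus-private'' chains I would use a merging construction: replace the private variables by the augmented pairs $\bar U_1=(U_1,V_{12})$ and $\bar U_2=(U_2,V_{12})$, keeping $V_{12}$ as the common randomness. Since $V_{12}$ is then a deterministic function of each $\bar U_i$, the chains $X\leftrightarrow\bar U_i\leftrightarrow V_{12}$ hold automatically. The point is that this costs nothing: the marginal rates $R_i=I(X;V_{12},\bar U_i)=I(X;V_{12},U_i)$ are unchanged, the decoders can only do better with the enlarged reconstruction alphabet, and in the sum-rate bound the coordination term is invariant because conditioning on $V_{12}$ freezes the appended coordinate, $I(\bar U_1;\bar U_2\mid V_{12})=I(U_1;U_2\mid V_{12})$. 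Likewise $I(X;V_{12},\bar U_1,\bar U_2,U_{12})=I(X;V_{12},U_1,U_2,U_{12})$. Thus both private-layer chains are installed simultaneously at no rate penalty.

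For the refinement chain I would invoke the reduction of \cite{wang} already used in the paper (and mirrored in the factorization step (\ref{eq:U_1223_Mark}) in the proof of Theorem~\ref{thm:General_CMS-2} and in the remark after Definition~\ref{definition:Defn_ZZB}): the last refinement variable $U_{12}$ may be taken to be a deterministic function of $(U_1,U_2,V_{12})$ without reducing $\mathcal{RD}_{ZB}$, so that $I(X;U_{12}\mid V_{12},U_1,U_2)=0$ and the sum-rate term collapses to $I(X;V_{12},U_1,U_2)$. In the MSE setting relevant to this appendix this function is the conditional mean $E[X\mid U_1,U_2,V_{12}]$, and for jointly Gaussian auxiliaries the orthogonality of the MMSE error makes it independent of $(U_1,U_2,V_{12})$; this is precisely what promotes the functional relation to the sufficient-statistic chain $X\leftrightarrow U_{12}\leftrightarrow(U_1,U_2,V_{12})$ written in (\ref{eq:ZB_end_Markov}).

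The step I expect to demand the most care is this refinement reduction. Replacing $U_{12}$ by a function of $(U_1,U_2,V_{12})$ can, for a general source, raise the attainable $D_{12}$ at a fixed rate, so the claim that the region is unaffected is not a one-line inequality but the substance of \cite{wang}, which I would either import directly or reprove for the present set of variables. Reconciling this with the exact \emph{direction} of conditional independence asserted in (\ref{eq:ZB_end_Markov}) is the crux: the functional form yields $U_{12}\leftrightarrow(U_1,U_2,V_{12})\leftrightarrow X$, whereas the stated chain is the reverse, and bridging the two relies on the Gaussian/MSE structure through orthogonality of the MMSE error. Verifying that this holds jointly with the two merging-based chains, and that the closure is preserved under all three modifications, is where the real work of the proof resides.
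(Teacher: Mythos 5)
Your handling of the first two chains is precisely the paper's argument: augment the private variables to $\tilde{U}_{1}=(U_{1},V_{12})$ and $\tilde{U}_{2}=(U_{2},V_{12})$, observe that $I(X;\tilde{U}_{i},V_{12})=I(X;U_{i},V_{12})$, $I(\tilde{U}_{1};\tilde{U}_{2}|V_{12})=I(U_{1};U_{2}|V_{12})$, and the joint term are all unchanged, and reuse the decoders after a projection. Where you diverge is the third chain, and this is where the gap lies. The paper never invokes the reduction of \cite{wang}; it applies the same merging construction a third time, setting $\tilde{U}_{12}=(U_{12},U_{1},U_{2},V_{12})$. Since $(\tilde{U}_{1},\tilde{U}_{2},\tilde{V}_{12})$ is then a deterministic function of $\tilde{U}_{12}$, the chain $X\leftrightarrow\tilde{U}_{12}\leftrightarrow(\tilde{U}_{1},\tilde{U}_{2},\tilde{V}_{12})$ holds trivially for an \emph{arbitrary} joint law; the sum-rate term $I(X;V_{12},U_{1},U_{2},U_{12})$ and the distortion attained by $\psi_{12}$ (composed with a projection) are untouched. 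The surgery is exact for every distribution, so no continuity or closure argument is required either.

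Your route for the third chain has a defect beyond importing the nontrivial result of \cite{wang}. Making $U_{12}$ a deterministic function of $(U_{1},U_{2},V_{12})$ yields the chain $X\leftrightarrow(U_{1},U_{2},V_{12})\leftrightarrow U_{12}$, and, as you yourself note, converting this into the stated chain $X\leftrightarrow U_{12}\leftrightarrow(U_{1},U_{2},V_{12})$ requires the MMSE-orthogonality argument, which is valid only for jointly Gaussian auxiliaries. But the lemma is a structural statement about $\mathcal{RD}_{ZB}$ for general sources and arbitrary joint densities, and this generality is load-bearing in the paper: in Appendix B the lemma and Corollary \ref{cor:last} are applied to arbitrary joint densities \emph{first}, and only afterwards does a separate maximum-entropy argument reduce the optimization to conditionally Gaussian laws. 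A proof of the lemma that presupposes joint Gaussianity cannot be slotted into that argument without making it circular. The repair is immediate: use for $U_{12}$ the same augmentation you already used for $U_{1}$ and $U_{2}$, rather than the functional reduction.
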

\begin{proof}
Let $P(X,V_{12},U_{1},U_{2},U_{12})$ be any joint density, which
may or may not satisfy the above properties, and let $\psi_{1},\psi_{2}$ and
$\psi_{12}$ be functions that satisfy the respective distortion constraints.
We will construct another joint density $Q(X,\tilde{V}_{12},\tilde{U}_{1},\tilde{U}_{2},\tilde{U}_{12})$
that satisfies (\ref{eq:ZB_end_Markov}) and achieves the same RD
region. Towards constructing such a joint density, we set:
\begin{eqnarray*}
\tilde{V}_{12} & = & V_{12}\\
\tilde{U}_{1} & = & (U_{1},V_{12})\\
\tilde{U}_{2} & = & (U_{2},V_{12})\\
\tilde{U}_{12} & = & (U_{12},U_{1},U_{2},V_{12})
\end{eqnarray*}
First, observe that the above choice of the joint density satisfies
(\ref{eq:ZB_end_Markov}). Also observe that, $I(X;V_{12},U_{1})=I(X;\tilde{V}_{12},\tilde{U}_{1})$,
$I(X;V_{12},U_{2})=I(X;\tilde{V}_{12},\tilde{U}_{2})$, $I(X;V_{12},U_{1},U_{2},U_{12})=I(X;\tilde{V}_{12},\tilde{U}_{1},\tilde{U}_{2},\tilde{U}_{12})$
and $I(U_{1};U_{2}|V_{12})=I(\tilde{U}_{1};\tilde{U}_{2}|\tilde{V}_{12})$.
Hence, the rate region achievable by $Q(\cdot)$ is the same as that
of $P(\cdot)$. Moreover, all the distortion constraints can be satisfied
using the same functions $\psi_{1},\psi_{2}$ and $\psi_{12}$. Hence
it follows that the entire ZB region can be achieved by considering
a closure only over random variable that satisfy (\ref{eq:ZB_end_Markov}). \end{proof}
\begin{cor}
\label{cor:last}The closure of the RD-tuples in $\mathcal{RD}_{ZB}^{IQ}$
can be restricted to joint distributions that satisfy (\textup{\ref{eq:ZB_end_Markov}}),
without any loss in $\mathcal{RD}_{ZB}^{IQ}$.\end{cor}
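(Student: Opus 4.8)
The plan is to reuse the variable augmentation introduced in the proof of Lemma~\ref{lemma:End_Lemma} and simply verify that it maps the class of distributions defining $\mathcal{RD}_{ZB}^{IQ}$ into itself. Recall that, given an arbitrary feasible $P(X,V_{12},U_{1},U_{2},U_{12})$, that proof sets $\tilde{V}_{12}=V_{12}$, $\tilde{U}_{1}=(U_{1},V_{12})$, $\tilde{U}_{2}=(U_{2},V_{12})$, and $\tilde{U}_{12}=(U_{12},U_{1},U_{2},V_{12})$, and shows that the resulting $Q(\cdot)$ enforces (\ref{eq:ZB_end_Markov}) while leaving every rate and distortion constraint unchanged. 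Since $\mathcal{RD}_{ZB}^{IQ}$ is obtained from $\mathcal{RD}_{ZB}$ by imposing the additional independent-quantization constraints (\ref{eq:ZB_CMS_1}), it suffices to show that whenever $P$ obeys (\ref{eq:ZB_CMS_1}), so does the augmented $Q$; the rate- and distortion-equivalence are then already supplied by Lemma~\ref{lemma:End_Lemma}.

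First I would check the conditional independence $\tilde{U}_{1}\leftrightarrow(X,\tilde{V}_{12})\leftrightarrow\tilde{U}_{2}$. Because $\tilde{V}_{12}=V_{12}$, conditioning on $(X,\tilde{V}_{12})$ fixes $V_{12}$, and then $\tilde{U}_{i}$ is just $U_{i}$ with the now-deterministic label $V_{12}$ appended; hence the original factorization $P(U_{1},U_{2}\mid X,V_{12})=P(U_{1}\mid X,V_{12})P(U_{2}\mid X,V_{12})$ transfers verbatim to the tilde variables. Next, the distortion constraints survive because each reconstruction $\tilde{\psi}_{\mathcal{K}}$ can extract the $U_{\mathcal{K}}$-coordinate of $\tilde{U}_{\mathcal{K}}$ and apply the original $\psi_{\mathcal{K}}$, producing identical estimates. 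Finally, the deterministic-refinement constraint is preserved: from $(\tilde{U}_{1},\tilde{U}_{2},\tilde{V}_{12})$ one recovers $(U_{1},U_{2},V_{12})$, and since $U_{12}=f(U_{1},U_{2},V_{12})$ we obtain $\tilde{U}_{12}=(f(U_{1},U_{2},V_{12}),U_{1},U_{2},V_{12})$, a deterministic function of $(\tilde{U}_{1},\tilde{U}_{2},\tilde{V}_{12})$.

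Collecting these observations, the augmented distribution $Q$ lies in the IQ class, satisfies the Markov chains (\ref{eq:ZB_end_Markov}), and attains the same rate--distortion tuple as $P$. Therefore every point of $\mathcal{RD}_{ZB}^{IQ}$ is achieved by some distribution obeying (\ref{eq:ZB_end_Markov}), which is exactly the claimed restriction without loss. I do not expect a genuine obstacle here: the whole content is the verification that the augmentation commutes with the three defining conditions of (\ref{eq:ZB_CMS_1}), and the only point deserving a moment's care is the conditional-independence check, which hinges on $V_{12}$ being common to both $\tilde{U}_{1}$ and $\tilde{U}_{2}$ and hence deterministic once it is conditioned upon.
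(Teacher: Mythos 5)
Your proof is correct and takes essentially the same approach as the paper, whose entire proof of this corollary is the remark that it is ``very similar to that of Lemma~\ref{lemma:End_Lemma}'' -- i.e., reuse the augmentation $\tilde{V}_{12}=V_{12}$, $\tilde{U}_{1}=(U_{1},V_{12})$, $\tilde{U}_{2}=(U_{2},V_{12})$, $\tilde{U}_{12}=(U_{12},U_{1},U_{2},V_{12})$. Your explicit verification that this augmentation preserves the three independent-quantization constraints in (\ref{eq:ZB_CMS_1}) (conditional independence given $(X,V_{12})$, the distortions via coordinate extraction, and the deterministic-refinement property) supplies exactly the details the paper omits.
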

\begin{proof}
The proof is very similar to that of Lemma \ref{lemma:End_Lemma}.
We omit restating it here.
\end{proof}
Equipped with these results, we continue our proof.
Let us again consider the corner point $P_{0}\triangleq(R_{1},R_{2})=(\frac{1}{2}\log\frac{1}{D_{1}},\frac{1}{2}\log\frac{D_{1}}{D_{12}})$
for some $(D_{1},D_{2},D_{12})$ satisfying $D_{12}\leq D_{1}+D_{2}-1$, 
and show that it is not achievable by ZB when we restrict the joint
densities to satisfy (\ref{eq:ZB_CMS_1}). Note that, as a result
of Lemma (\ref{lemma:End_Lemma}) and Corollary (\ref{cor:last}),
it is sufficient to consider joint densities that also satisfy (\ref{eq:ZB_end_Markov}).
Observe that, as $I(X;V_{12},U_{1},U_{2},U_{12})\geq\frac{1}{2}\log\frac{1}{D_{12}}$,
$P_{0}$ can be achieved only by joint densities that satisfy $I(X;V_{12})=0$.
Hence, to prove that a Gaussian random variable under MSE belongs
to $\mathcal{Z}_{ZB}$, it is sufficient to show that $P_{0}$ is
not achievable when we restrict the optimization to joint densities
satisfying (\ref{eq:ZB_CMS_1}), (\ref{eq:ZB_end_Markov}) and $I(X;V_{12})=0$. 

Let $P(V_{12},U_{1},U_{2},U_{12},X)$ be any such joint density and
let $\mathcal{V}_{12}$ be the corresponding alphabet for $V_{12}$.
Then the associated achievable region can be rewritten as:

\begin{eqnarray}
R_{1} & \geq & I(X;U_{1}|V_{12})\nonumber \\
 & = & \sum_{v_{12}\in\mathcal{V}_{12}}P(v_{12})I(X;U_{1}|V_{12}=v_{12})\nonumber \\
R_{2} & \geq & I(X;U_{2}|V_{12})\nonumber \\
 & = & \sum_{v_{12}\in\mathcal{V}_{12}}P(v_{12})I(X;U_{2}|V_{12}=v_{12})\nonumber \\
R_{1}+R_{2} & \geq & I(U_{1};U_{2}|V_{12})\nonumber \\
 &  & +I(X;U_{1},U_{2},U_{12}|V_{12})\nonumber \\
 & = & \sum_{v_{12}\in\mathcal{V}_{12}}P(v_{12})\Bigl[I(U_{1};U_{2}|V_{12}=v_{12})\nonumber \\
 &  & +I(X;U_{1},U_{2},U_{12}|V_{12}=v_{12})\Bigr]\nonumber \\
D_{\mathcal{K}} & \geq & E\left[(X-\psi_{\mathcal{K}}(U_{\mathcal{K}}))^{2}\right],\,\,\mathcal{K}\subset\{1,2\},\mathcal{K}\neq\emptyset\nonumber \\
 & = & E\left[E\left[(X-\psi_{\mathcal{K}}(U_{\mathcal{K}}))^{2}\Bigl|V_{12}\right]\right]\label{eq:ZB_v12_split}
\end{eqnarray}


We will next show that the optimization can be further restricted
to joint densities $P(X,V_{12})Q(\tilde{U}_{1},\tilde{U}_{2},\tilde{U}_{12}|X,V_{12})$
such that $(X,\tilde{U}_{1},\tilde{U}_{2},\tilde{U}_{12}) \in \mathcal{C}_G$ given $V_{12}=v_{12}$, $\forall v_{12}\in\mathcal{V}_{12}$.
First, as $X$ is independent of $V_{12}$, $P(X|V_{12}=v_{12})$
is Gaussian $\forall v_{12}\in\mathcal{V}_{12}$. Recall that
$P_{0}$ is obtained by first minimizing $R_{1}$, followed by minimizing
$R_{2}$ given $R_{1}$ subject to all the distortion constraints.
From (\ref{eq:ZB_v12_split}), we have:
\begin{equation}
R_{1}=\inf\sum_{v_{12}\in\mathcal{V}_{12}}P(v_{12})I(X;U_{1}|V_{12}=v_{12})
\end{equation}
where the infimum is over all joint densities $P(X,V_{12},U_{1})$
satisfying the distortion constraint $D_{1}$. 

Towards proving that $(X,U_{1}) \in \mathcal{C}_G$ given $V_{12}=v_{12}$, 
$\forall v_{12}\in\mathcal{V}_{12}$, let $P(X,V_{12},U_{1})$ be any joint density over the given alphabets and let function $\psi_{1}(\cdot)$
achieve the distortion constraint at $D_{1}$. Consider the following joint
density $P(X,V_{12})Q(\tilde{U}_{1}|X,V_{12})$, such that $(X,\tilde{U}_{1})$
are jointly Gaussian given $V_{12}=v_{12}$ and $K_{X,\tilde{U}_{1}|V_{12}=v_{12}}=K_{X,\psi_{1}(U_{1})|V_{12}=v_{12}}$, 
$\forall v_{12}\in\mathcal{V}_{12}$. Observe that this joint density satisfies the distortion
constraint for $D_{1}$. Moreover, it achieves a smaller rate, as a Gaussian density over the relevant random variables maximizes
the conditional entropy for a fixed covariance matrix, i.e., $I(X;U_{1}|V_{12}=v_{12})\geq I(X;\tilde{U}_{1}|V_{12}=v_{12})$, 
$\forall v_{12}\in\mathcal{V}_{12}$. Therefore, to achieve minimum $R_1$, we can consider only  
densities wherein $(X,\tilde{U}_{1})$ are jointly Gaussian given
$V_{12}$. 


We next focus on minimizing $R_{2}$ given $R_{1}=\frac{1}{2}\log\frac{1}{D_{1}}$.
From (\ref{eq:ZB_v12_split}), we have:
\begin{eqnarray}
R_{2} & = & \inf \Bigl\{ \sum_{v_{12}\in\mathcal{V}_{12}}P(v_{12})\Bigl[I(\tilde{U}_{1};U_{2}|V_{12}=v_{12})\nonumber \\
 &  & +I(X;\tilde{U}_{1},U_{2},U_{12}|V_{12}=v_{12})\Bigr]-R_{1} \Bigl\} \label{eq:last_eq}
\end{eqnarray}
where the infimum is over all joint densities $P(X,V_{12},\tilde{U}_{1})P(U_{2},U_{12}|X,V_{12},\tilde{U}_{1})$
satisfying (\ref{eq:ZB_CMS_1}), (\ref{eq:ZB_end_Markov}) and $I(X;V_{12})=0$
and where $(X,\tilde{U}_{1})$ are jointly Gaussian given $V_{12}=v_{12}$, 
$\forall v_{12}\in\mathcal{V}_{12}$ (required to minimize $R_{1}$).
It is easy to show using similar arguments that, to achieve the infimum
in (\ref{eq:last_eq}), it is sufficient to consider joint densities
where $(X,\tilde{U}_{1},\tilde{U}_{2},\tilde{U}_{12}) \in \mathcal{C}_G$ given $V_{12}=v_{12}$ and $Q(\tilde{U}_{1},\tilde{U}_{2}|X,V_{12})=Q(\tilde{U}_{1}|X,V_{12})Q(\tilde{U}_{2}|X,V_{12})$,
$\forall v_{12}\in\mathcal{V}_{12}$. 

To summarize what we have so far, we have shown that to achieve minimum
$R_{1}$ followed by minimum $R_{2}$ in $\mathcal{RD}_{ZB}^{IQ}$,
it is sufficient to consider only joint densities $P(X,V_{12},U_{1},U_{2},U_{12})$
that satisfy the following properties:
\begin{enumerate}
\item $I(U_{1};U_{2}|X,V_{12})=0$, $U_{12}=f(U_{1},U_{2},V_{12})$ and
$P(X,V_{12},U_{1},U_{2},U_{12})$ satisfies all the distortion constraints
\item $I(X;V_{12})=0$
\item $(X,U_{1},U_{2},U_{12})$ are jointly Gaussian given $V_{12}=v_{12}$
$\forall v_{12}\in\mathcal{V}_{12}$
\item The Markov chain conditions in (\ref{eq:ZB_end_Markov}) are satisfied
\item $U_{1}$ achieves RD-optimality at $D_{1}$
\end{enumerate}
We will show that the point $P_{0}$ cannot be achieved by any
joint density that satisfies the above properties. Observe that, given $V_{12}=v_{12}$, $(X,U_{1},U_{2})$ are
jointly Gaussian random variables satisfying the Markov condition
$U_{1}\leftrightarrow X\leftrightarrow U_{2}$. This leads to two possibilities
for the joint distribution $P(X,V_{12},U_{1},U_{2},U_{12})$:

i) $\exists v_{12}\in\mathcal{V}_{12}$ such that $I(X;U_{1}|V_{12}=v_{12})>0$
and $I(X;U_{2}|V_{12}=v_{12})>0$: As $(X,U_{1},U_{2})$ are jointly
Gaussian, this leads to the conclusion that $I(U_{1};U_{2}|V_{12}=v_{12})>0$
and hence $I(U_{1};U_{2}|V_{12})>0$. This clearly implies that there
is excess rate on $R_{2}$ and hence point $P_{0}$ is not achievable. 

ii) $\forall v_{12}\in\mathcal{V}_{12}$, either $I(X;U_{1}|V_{12}=v_{12})=0$
or $I(X;U_{2}|V_{12}=v_{12})=0$. This implies that the alphabet space
of $V_{12}$ can be divided into two disjoint subsets such that, when
$V_{12}$ takes values in the first set, $X$ and $U_{2}$ are independent,
and when $V_{12}$ takes values in the second set $X$ and $U_{1}$
are independent. Let the two sets be denoted by $\mathcal{V}_{12}^{1}$
and $\mathcal{V}_{12}^{2}$, respectively. We have:
\begin{eqnarray}
I(X;U_{2}|V_{12}\in\mathcal{V}_{12}^{1}) & = & 0\nonumber \\
I(X;U_{1}|V_{12}\in\mathcal{V}_{12}^{2}) & = & 0\label{eq:end3}
\end{eqnarray}
However, recall that the joint distribution can be restricted to satisfy,
$I(X;V_{12})=0$, $I(X;V_{12}|U_{1})=0$ and $I(X;V_{12}|U_{2})=0$.
Hence, we have:
\begin{eqnarray}
H(X) & = & H(X|V_{12}\in\mathcal{V}_{12}^{1})\nonumber \\
 & =^{(a)} & H(X|U_{2},V_{12}\in\mathcal{V}_{12}^{1})\nonumber \\
 & = & H(X|U_{2})
\end{eqnarray}
\begin{eqnarray}
H(X) & = & H(X|V_{12}\in\mathcal{V}_{12}^{2})\nonumber \\
 & =^{(b)} & H(X|U_{1},V_{12}\in\mathcal{V}_{12}^{2})\nonumber \\
 & = & H(X|U_{1})
\end{eqnarray}
where (a) and (b) follow from (\ref{eq:end3}). This implies that
$(X,U_{1})$ and $(X,U_{2})$ must be pair-wise independent and hence
the distortion constraints on $D_{1}$ and $D_{2}$ cannot be satisfied. 

Hence, it follows that the point $P_{0}$ cannot be achieved by an
`independent quantization mechanism' using the ZB coding scheme for
a Gaussian source under MSE, proving the theorem.

\begin{IEEEbiographynophoto}{Kumar B. Viswanatha}(S'08) received his PhD degree in 2013 in the Electrical and Computer Engineering department from University of California at Santa Barbara (UCSB), USA. He is currently working for Qualcomm research center in San Diego. Prior to joining Qualcomm, he was an intern associate in the equity volatility desk at Goldman Sachs Co., New York, USA. His research interests include multi-user information theory, wireless communications, joint compression and routing for networks and distributed compression for large scale sensor networks.
\end{IEEEbiographynophoto}

\begin{IEEEbiographynophoto}{Emrah Akyol}(S'03, M'12) received the Ph.D. degree in 2011 in electrical and computer engineering from the University of California at Santa Barbara. From 2006 to 2007, he held positions at Hewlett-Packard Laboratories and NTT Docomo Laboratories, both in Palo Alto, where he worked on topics in video compression.

Currently, Dr. Akyol is a postdoctoral researcher in the Department of Electrical and Computer Engineering, University of Illinois Urbana-Champaign. Prior to that, he was a post doctoral researcher at Electrical Engineering Department at University of Southern California and at University of California - Santa Barbara. His current research is on the interplay of networked information theory, communications and control. 
\end{IEEEbiographynophoto}

\begin{IEEEbiographynophoto}{Kenneth Rose} (S'85-M'91-SM'01-F'03) received the Ph.D. degree in 1991 from the California Institute of Technology, Pasadena.

He then joined the Department of Electrical and Computer Engineering, University of California at Santa Barbara, where he is currently a Professor. His main research activities are in the areas of information theory and signal processing, and include rate-distortion theory, source and source-channel coding, audio and video coding and networking, pattern recognition, and non-convex optimization. He is interested in the relations between information theory, estimation theory, and statistical physics, and their potential impact on fundamental and practical problems in diverse disciplines.

Dr. Rose was co-recipient of the 1990 William R. Bennett Prize Paper Award of the IEEE Communications Society, as well as the 2004 and 2007 IEEE Signal Processing Society Best Paper Awards.

\end{IEEEbiographynophoto}

\end{document}